\newtheorem{remark}{Remark}
\newcommand{\AND}{{AND}}
\newcommand{\DISJ}{{DISJ}}
\newcommand{\SUM}{{SUM}}
\newcommand{\INF}{{Gap-$l_\infty$}}
\newcommand{\eps}{\epsilon}
\newcommand{\E}{\mathbf{E}}
\newcommand{\cE}{\mathcal{E}}
\newcommand{\var}{\mathbf{Var}}
\renewcommand{\Pr}{\mathbf{Pr}}
\newcommand{\abs}[1]{\left| #1 \right|}
\newcommand{\norm}[1]{\left\lVert #1 \right\rVert}
\newcommand{\A}{\mathcal{A}}
\newcommand{\B}{\mathcal{B}}
\newcommand{\nnz}{\mathtt{nnz}}
\begin{document}

\copyrightyear{2018}
\acmYear{2018}
\setcopyright{acmlicensed}
\acmConference[PODS'18]{35th ACM SIGMOD-SIGACT-SIGAI
Symposium on Principles of Database Systems}{June 10--15,
2018}{Houston, TX, USA}
\acmBooktitle{PODS'18: 35th ACM SIGMOD-SIGACT-SIGAI Symposium
on Principles of Database Systems, June 10--15, 2018, Houston, TX,
USA}
\acmPrice{15.00}
\acmDOI{10.1145/3196959.3196964}
\acmISBN{978-1-4503-4706-8/18/06}

\fancyhead{}

\settopmatter{printacmref=false, printfolios=false}

\title{Distributed Statistical Estimation of Matrix Products \\ with Applications}

\titlenote{Qin Zhang is supported by NSF CCF-1525024 and IIS-1633215.}

\author{David P. Woodruff}
\affiliation{%
  \institution{Carnegie Mellon University}
  \city{Pittsburgh} 
  \state{PA} 
  \postcode{15213}
  \country{USA}
}
\email{dwoodruf@cs.cmu.edu}

\author{Qin Zhang}
\affiliation{%
  \institution{Indiana University Bloomington}
  \city{Bloomington} 
  \state{IN} 
  \postcode{47408}
  \country{USA}
}
\email{qzhangcs@indiana.edu}

\begin{abstract}
We consider statistical estimations of a matrix product over the integers in a distributed setting, where we have two parties Alice and Bob; Alice holds a matrix $A$ and Bob holds a matrix $B$, and they want to estimate statistics of $A \cdot B$. We focus on the well-studied $\ell_p$-norm, distinct elements ($p = 0$), $\ell_0$-sampling, and heavy hitter problems. The goal is to minimize both the communication cost and the number of rounds of communication. 

This problem is closely related to the fundamental set-intersection join problem in databases: when $p = 0$ the problem corresponds to the size of the set-intersection join. When $p = \infty$ the output is simply the pair of sets with the maximum intersection size. When $p = 1$ the problem corresponds to the size of the corresponding natural join. We also consider the heavy hitters problem which corresponds to finding the pairs of sets with intersection size above a certain threshold, and the problem of sampling an intersecting pair of sets uniformly at random. 
\end{abstract}

\maketitle

\section{Introduction}
\newcommand{\poly}{{\mathrm{poly}}}
\label{sec:intro}
We study the problem of statistical estimations of a matrix product in the distributed setting. Consider two parties Alice and Bob; Alice holds a matrix $A \in \{0,1\}^{n \times n}$ and Bob holds a matrix $B \in \{0,1\}^{n \times n}$, and they want to jointly compute a function $f$ defined on $A$ and $B$ by exchanging messages.  The goal is to minimize both the total communication cost and number of rounds of interaction.

One of the main statistical quantities we consider is the $p$-norm $\norm{C}_p$ of the product $C = A \cdot B$, defined as
$$\textstyle \norm{C}_p = \left(\sum_{i,j \in [n]} \abs{C_{i,j}}^p\right)^{{1}/{p}}.$$
Here the matrix product $A \cdot B$ is the standard matrix product over the integers.
Interpreting $0^0$ as $0$, we see that $p = 0$ corresponds to the number of non-zero
entries of $C$, which, interpreting the rows of $A$ and columns of $B$ as sets,
corresponds to the set-intersection join size (see Section~\ref{sec:app} for the formal definition).
This can also be viewed as a matrix form
of the well-studied distinct elements problem in the data stream literature
(see, e.g., \cite{FM85,BJKST02,KNW10}). 
Again interpreting the rows of $A$ and the columns of $B$ as sets,
the case $p = 1$ corresponds to the size of the corresponding natural join (again see Section~\ref{sec:app} for the formal definition).
The $p = 2$ case corresponds to the (squared) Frobenius norm of the matrix
product $A \cdot B$, which is a norm of fundamental importance in a variety
of distributed linear algebra problems, such as low rank approximation (for a
recent survey, see \cite{w14}).
The case $p = \infty$ corresponds to the pair of sets of maximum intersection
size. Estimating the largest entry in a Boolean matrix product has also been studied in the centralized setting.  We refer readers to the recent paper \cite{AR17} and references therein.

As a closely related problem, we also consider the $\ell_0$-sampling problem for which the goal is to sample each non-zero entry in $C = A B$ with probability $(1 \pm \eps) \frac{1}{\norm{C}_0}$, which corresponds to approximately outputting a random pair among the intersecting pairs of sets.  $\ell_0$-sampling is also extensively studied in the data stream literature \cite{FIS08,MW10,JST11}, and is used as a building block for sketching various dynamic graph problems (see \cite{McGregor14} for a survey).

We also study the approximate heavy hitter problem defined as follows. Let 
$$\mathrm{HH}^p_\phi(C) = \{(i,j)\ |\ C_{i,j}^p \ge \phi \norm{C}_p^p\}.
$$ 
The $\ell_p$-$(\phi, \eps)$-heavy-hitter ($0 < \eps \le \phi \le 1$) problem asks to output a set $S$ such that
$$\mathrm{HH}^p_\phi(AB) \subseteq S \subseteq \mathrm{HH}^p_{\phi - \eps}(AB).$$
As outputting the matrix product $C$ requires outputting $n^2$ numbers, it is
natural to output the set $S$ as a sparse approximation of $C$; indeed this
can be viewed as a matrix form of the well-studied compressed sensing problem.

As mentioned, these basic statistical problems, being interesting for their own sake, have strong relationships to fundamental problems in databases. We describe such relationships more formally below. 

Despite a large amount of work on computing $p$-norms and heavy hitters on frequency vectors in the streaming literature (see, e.g., \cite{Muthu05} for a survey), we are not aware of any detailed study of these basic statistical functions on matrix products.  The purpose of this paper is to introduce a systematic study of statistical estimations on matrix products.

\subsection{Motivation and Applications}
\label{sec:app}

Estimating the norm of a matrix product is closely related to two of the most important operations in relational databases -- the {\em composition} and the {\em natural join}.  Suppose we are given two relations $\A$ and $\B$, where $\A$ is defined over attributes $(X, Y)$ and $B$ is defined over attributes $(Y, Z)$. Assume for simplicity that $dom(X) = dom(Y) = dom(Z) = [n]$. We thus have $\A \subseteq [n] \times [n]$ and $\B \subseteq [n] \times [n]$.  The composition of $\A$ and $\B$ is defined to be
$$\A \circ \B = \{(i,j)\ |\ \exists k : (i, k) \in \A \wedge (k, j) \in \B\}.$$
The natural join is defined to be 
$$\A \bowtie \B = \{(i, k, j)\ |\ (i, k) \in \A \wedge (k, j) \in \B\}.$$
It is easy to see that the natural join corresponds to the composition together with the requirement that all the ``witnesses'' $k$ are output. 

We further define ``projection'' sets $A_i = \{k\ |\ (i, k) \in \A\}$ for each $i \in [n]$, and $B_j = \{k\ |\ (k, j) \in \B\}$ for each $j \in [n]$. Then we can rewrite the composition and natural joins as follows:
\begin{itemize}
\item[] $\A \circ \B = \{(i,j)\ |\ A_i \cap B_j \neq \emptyset\}$,

\item[] $\A \bowtie \B = \{(i, k, j)\ |\ k \in A_i \cap B_j\}$.
\end{itemize}
We thus also refer to compositions as {\em set-intersection joins}, and natural joins as {\em set-intersection joins with witnesses}.

As an application of set-intersection joins, consider a job application scenario: we have $n$ applicants, with the $i$-th applicant having a set of skills $A_i$ from the universe $\{1, \ldots, n\}$, and $n$ jobs, with the $j$-th job requiring a set of skills $B_j$.  Our goal is to find all the possible applicant-job matches, namely, those pairs $(i,j)$ such that $A_i \cap B_j \neq \emptyset$. One may also be interested in the number
of such matches (the $\ell_0$-norm) or the most qualified applicants (the entry 
realizing the $\ell_{\infty}$-norm, or the heavy hitters). 

We can further relate set-intersection joins to {\em Boolean matrix multiplication}. Let $A$ and $B$ be two $n \times n$ matrices such that each row $A_{i,*}$ is the indicator vector of $A_i$, and each column $B_{*,j}$ is the indicator vector of $B_j$.  Then the non-zero entries of $AB$ exactly correspond to the outputs of the set-intersection joins on $\{A_1, \ldots, A_n\}$ and $\{B_1, \ldots, B_n\}$.
If we are interested in estimates to the sizes of the joins, which are very useful for guiding query optimization since they can be computed using much less communication than computing the actual joins, then we have
\begin{itemize}
\item $\norm{AB}_0 = \abs{\A \circ \B}$, that is, the $\ell_0$-norm of $AB$ is the size of the composition of $\A$ and $\B$,

\item $\norm{AB}_1 = \abs{\A \bowtie \B}$, that is, the $\ell_1$-norm of $AB$ is the size of the natural join of $\A$ and $\B$.
\end{itemize}

Finally, $\norm{AB}_{\infty}$ corresponds to the pair $(i,j)$ with the maximum overlap, and $\{(i,j)\ |\ (AB)_{i,j} \ge \phi \norm{AB}_p\}$ for a threshold $\phi$ corresponds to the set of heavy hitters, i.e., those pairs of sets whose intersection size exceeds the threshold.
These two problems have natural applications in inner product similarity joins on a set of vectors; we refer the reader
to recent work \cite{APR016} 
on inner product similarity joins and references therein.

\begin{remark}
  We note that all of these problems and the results in this paper can be straightforwardly modified to handle the general case where
  $dom(X) = m_1$, $dom(Z) = m_2$ and $dom(Y) = n$, which corresponds to $AB$ where $A \in \{0,1\}^{m_1 \times n}$ and $B \in \{0,1\}^{n \times m_2}$.  See Section~\ref{sec:conclude} for more discussions.
\end{remark}

\subsection{Our Results}
\label{sec:result}
For simplicity we use
the notation $\tilde{O}(\cdot)$ to hide $\poly(\log \frac{n}{\eps \delta})$ factors where $\eps$ is the multiplicative approximation
ratio and $\delta$ is the error probability of a randomized communication algorithm.  We say that $X$ approximates $Y$ within a factor of
$\alpha$ if $X \in [\frac{Y}{\beta}, \gamma Y]$ where $\beta, \gamma \ge 1$ and $\beta \gamma \le \alpha$.

\medskip

\noindent{\bf Set-Intersection Join Size.}
We give a $2$-round $\tilde{O}(n/\epsilon)$-bit algorithm that approximates
$\norm{AB}_p$, $p \in [0,2]$, within a $(1+\epsilon)$ factor.  For
  the important case of
  $p = 0$, this provides a significant improvement over the
  previous $\tilde{O}(n/\eps^2)$ result in \cite{GWWZ15}. Also, due
  to the $\Omega(n/\eps^2)$ lower bound in \cite{GWWZ15} for one-round
  algorithms (i.e., algorithms for which Alice sends a single message to
  Bob, who outputs the answer), this gives a separation in the complexity
  of this problem for one and two-round algorithms. As the
  algorithm in \cite{GWWZ15} is a direct application of an
  $\tilde{O}(1/\eps^2)$ space streaming algorithm,
  our algorithm illustrates the power
  to go beyond streaming algorithms in this framework. 

\medskip

\noindent{\bf Pair of Sets with Maximum Intersection Size.}
  We first give a constant round $\tilde{O}(n^{1.5}/\eps)$-bit algorithm that approximates $\norm{AB}_{\infty}$ within a $(2+\eps)$ factor.
  We complement our algorithm by showing a few different lower bounds that hold for algorithms with any
  (not necessarily constant) number of rounds. First, we show that any algorithm that approximates $\norm{AB}_{\infty}$
  within a factor of $2$ needs $\Omega(n^2)$ bits of communication, thus necessitating our $(2+\eps)$ factor approximation.
  Moreover, we show that any algorithm achieving any constant factor
  approximation must use $\tilde{\Omega}(n^{1.5})$ bits of communication, which shows that
  our $(2+\eps)$ factor approximation algorithm has optimal communication, up to polylogarithmic
  factors.

  We next look at approximation algorithms that achieve approximation factors to $\norm{AB}_{\infty}$
  that are larger than constant. We show
  it is possible to achieve a $\kappa$-approximation factor using $\tilde{O}(n^{1.5}/\kappa)$ bits of communication.
  We complement this with an $\Omega(n^{1.5}/\kappa)$ bit lower bound. 
  
  Finally we show that the fact that the matrices $A$ and $B$ are binary is crucial. Namely, we first show
  that for general matrices $A$ and $B$ with $\poly(n)$-bounded integer entries, there is
  an $\Omega(n^2)$ lower bound for any constant factor approximation. For general approximation factors $\kappa$ 
  that may be larger than constant, we show an upper and lower bound of $\tilde{\Theta}(n^2/\kappa^2)$ communication.
  This shows an arguably surprising difference in approximation factor versus communication for binary and non-binary matrices. 

 \medskip

\noindent{\bf Heavy Hitters.}
We give an $O(1)$-round protocol that computes $\ell_p$-$(\phi, \eps)$-heavy-hitters,
$0 < \eps \le \phi \le 1$, and $p \in (0,2]$, with various tradeoffs depending
on whether Alice and Bob's matrices are arbitrary integer matrices, or whether
they correspond to binary matrices. For arbitrary integer matrices,
we achieve $\tilde{O}(\frac{\sqrt{\phi}}{\epsilon} n)$ bits of communication
for every $p \in (0,2]$. 

  We are able to significantly improve these bounds for binary matrices,
  which as mentioned above, have important applications to database joins.
  Here we show for every $p \in (0,2]$ an $O(1)$-round protocol
    with $\tilde{O}(n+ \frac{\phi}{\epsilon^2})$ bits of communication.

\subsection{Related Work}
\label{sec:related}
Early work on studying joins in a distributed model can be found in \cite{ME92} (Section $5$) and \cite{Kossmann00}. Here the goal is to output the actual join rather than its size, and such algorithms, in the worst case, do not achieve communication better than the trivial algorithm in which
Alice sends her entire input to Bob for a centralized computation.  

With the rise of the MapReduce-type models of computation, a number of works have been devoted to studying parallel and distributed computations of joins. Such works have looked at natural joins, multi-way joins, and similarity joins, in a model called the {\em massively parallel computation} model (MPC) \cite{AU11,KS11,BKS13,BKS14,KBS16,KS17,HTY17}.  Unlike our two-party communication model, in MPC there are multiple parties/machines, and the primary goal is to understand the round-load (maximum message size received by any server in any round) tradeoffs of the computation.

In a recent paper \cite{GWWZ15} the authors and collaborators studied several join problems in the two-party communication model.  The studied problems include set-intersection joins, {\em set-disjointness joins}, {\em set-equality joins}, and {\em at-least-$T$ joins}. Our results can be viewed
as a significant extension to the results in \cite{GWWZ15}, as well as a systematic study
of classical data stream problems in the context of matrix products. In particular, \cite{GWWZ15}
did not study estimating the $p$-norms of $AB$, for any $p$ other than $p = 0$. For $p = 0$,
they obtain an algorithm using $\tilde{O}(n/\eps^2)$ communication, which we significantly
improve to $\tilde{O}(n/\eps)$ communication, and extend to any $0 \leq p \leq 2$. Moreover,
we obtain the first bounds for approximating $\|AB\|_{\infty}$, where perhaps surprisingly, we
are able to obtain an $O(1)$-approximation in $\tilde{O}(n^{3/2})$ communication,
beating the na\"ive $n^2$ amount of communication. This leads us to the first algorithms
for finding the frequent entries, or heavy hitters of $AB$. 

While a number of recent works \cite{kvw14,lbkw14,blswx16,bwz16,wz16} look
at distributed linear algebra problems (for a survey, see \cite{w14}), in all
papers that we are aware of, the matrix $C$ is distributed {\it additively}. What this means
is that we want to estimate statistics of a matrix $C = A + B$, where $A$ and $B$ are held
by Alice and Bob, respectively, who exchange messages with each other. In this paper, we instead study
the setting for which we want to estimate statistics of a matrix $C = A \cdot B$, where
$A$ and $B$ are again held by Alice and Bob, respectively, who exchange messages with each other. Thus,
in our setting the underlying matrix $C$ of interest is distributed {\it multiplicatively}. 
When $C$ is distributed additively, a
common technique is for the players to agree on a random linear sketching matrix $S$, and apply it
to their inputs to reduce their size. For example, if Alice has matrix $A$ and Bob has
matrix $B$, then Alice can send $S \cdot A$ to Bob, who can compute $S(A+B)$.
A natural extension of
it in the multiplicative case is for Alice to send $S \cdot A$ to Bob, who can compute
$S \cdot A \cdot B$. This is precisely how the algorithm for $p = 0$ of \cite{GWWZ15} proceeds.
We show by using the product structure of $A \cdot B$ and more than one round,
it is possible to obtain significantly less expensive algorithms than this
direct sketching approach. 

Finally, we would like to mention several papers considering similar problems but working in the centralized model.   In \cite{c98}, Cohen uses exponential random variables and applies a minimum operation to 
obtain an unbiased estimator of the number of non-zero entries in each column of a matrix product
$C = AB$.  However, a direct adaptation of this algorithm to the distributed model would result $\tilde{\Omega}(n/\eps^2)$ bits of communication and $1$-round, which is the same as using the $1$-round $\ell_0$-sketching protocol applied to each of the columns in earlier work~\cite{GWWZ15}.  In contrast we show that 
surprisingly, at least to the authors, 
$\tilde{O}(n/\epsilon)$ bits of communication is possible with only $2$ rounds.  In \cite{ACP14}, Amossen, Campagna, and Pagh improve the time complexity of 
\cite{c98}, provided $\epsilon$ is 
not too small.  However, a direct adaptation of this algorithm to the distributed model would result an even higher communication cost of $\Omega(n^2)$.

In \cite{cl99}, the $\ell_1$-sampling problem is considered. In this paper we do not emphasize estimation of $\|C\|_1$, since this quantity can be computed exactly using $O(n \log n)$ bits of communication, as stated in Remark~\ref{rem:ell-1}.  Similarly  $\ell_1$-sampling can also be done in $O(n \log n)$ bits of communication, as illustrated in Remark~\ref{rem:ell-1-sample}.

In \cite{p13}, it is shown how to apply CountSketch to the entries of a matrix
 product $C = AB$ where $A, B \in \mathbb{R}^{n \times n}$. The time
complexity is $O(\nnz(A) + \nnz(B) + n \cdot k \log k)$, where $\nnz(A)$ denotes the number
of non-zero entries of $A$,  and $k$ is the number of hash buckets in CountSketch which is at least $1/\eps^2$. This outperforms the na\"ive time complexity of first computing
$C$ and then hashing the entries of $C$ one-by-one. While interesting from a time complexity
perspective, it does not provide an advantage over CountSketch in a distributed setting. Indeed,
for each of the hashes on Alice's side of the $n$ outer products computed in \cite{p13}, the size of
the hash is $\tilde{\Theta}(1/\epsilon^2)$, and consequently communicating this 
to Bob takes $\tilde{\Theta}(n/\epsilon^2)$ bits in total.

\section{Preliminaries}
\label{sec:preliminary}

In this section we give background on several sketching algorithms that we will make use of, as well as some basic concepts in communication complexity. We will also describe some mathematical tools and previous results that will be used in the paper.

For convenience we use $A \in \mathbb{Z}^{n \times n}$ to differentiate $A$ from a binary matrix, but we will assume that all the input matrices have polynomially bounded integer entries.  For all sketching matrices we will make use of, without explicitly stated, each of their entries can be stored in $\tilde{O}(1)$ bits.

\medskip

\noindent{\bf Sketches.\ \ }  
A sketch $sk(x)$ of a data object $x$ is a summary of $x$ of small size (sublinear or even polylogarithmic in the size of $x$) such that if we want to perform a query (denoted by a function $f$) on the original data object $x$, we can instead apply another function $g$ on $sk(x)$ such that $g(sk(x)) \approx f(x)$. Sketches are very useful tools in the development of space-efficient streaming algorithms and communication-efficient distributed algorithms. Many sketching algorithms have been developed in the data stream literature. In this paper we will make use of the following.

\begin{lemma}[\cite{Indyk00,KNW10}, $\ell_p$-Sketch $(0 \le p \le 2)$]
\label{lem:ell-p}
For $p \in [0, 2]$ and a data vector $x \in \mathbb{R}^n$, there is a sketch $sk(x) = S x$ where $S \in \mathbb{R}^{O\left(\frac{1}{\eps^2} \log\frac{1}{\delta}\right) \times n}$ is a random sketching matrix, and a function $g$ such that with probability $1 - \delta$, $g(sk(x))$ approximates $\norm{x}_p$ within a factor of $(1+\eps)$. 
\end{lemma}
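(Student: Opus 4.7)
I would split the argument by the value of $p$ and invoke the two classical linear-sketch constructions cited in the statement.

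For the range $p \in (0, 2]$, the plan is to base the sketch on $p$-stable distributions, following Indyk. Recall that a distribution $\D_p$ on $\mathbb{R}$ is $p$-stable if, for any $x \in \mathbb{R}^n$ and i.i.d.\ $Z_1, \dots, Z_n \sim \D_p$, the linear combination $\sum_i x_i Z_i$ is distributed as $\norm{x}_p \cdot Z$ with $Z \sim \D_p$; such $\D_p$ exist for every $p \in (0, 2]$ (Gaussian for $p = 2$, Cauchy for $p = 1$, and in general defined via the characteristic function $e^{-|t|^p}$). I would let each entry of $S$ be an independent draw from $\D_p$, so that each coordinate of $Sx$ is distributed as $\norm{x}_p \cdot Z$ with $Z \sim \D_p$. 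The estimator $g$ normalizes each $|(Sx)_i|$ by the median $c_p$ of $|\D_p|$, takes the median of an $O(1/\eps^2)$-sized block of such values to obtain a $(1 \pm \eps)$-multiplicative estimate with constant probability, and repeats this across $O(\log(1/\delta))$ independent blocks, using the median-of-means trick to drive the failure probability down to $\delta$. The resulting row count is $O(\eps^{-2} \log(1/\delta))$, as desired.

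For $p = 0$, I would instead invoke the distinct-elements sketch of Kane--Nelson--Woodruff. Here $S$ is built from (nearly) uniform hash functions, and each row of $Sx$ encodes whether any coordinate of $x$ hashes into a designated bucket at a particular scale. The estimator $g$ recovers $\norm{x}_0$ from the fraction of nonempty buckets via an explicit closed-form formula. A Chebyshev argument on $O(1/\eps^2)$ rows yields constant-probability $(1 \pm \eps)$-approximation, and the same median-of-means boosting across $O(\log(1/\delta))$ independent blocks produces the final $O(\eps^{-2} \log(1/\delta))$-row sketch; a small amount of care is needed to express the scheme as a genuine linear map $S x$, but this is handled in \cite{KNW10}.

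The main technical content hidden inside the plan is the two concentration estimates: (i) for $p \in (0,2]$, that the empirical median of $\Theta(1/\eps^2)$ i.i.d.\ copies of $|Z|$, $Z \sim \D_p$, is a $(1 \pm \eps)$-multiplicative estimate of $c_p = \mathrm{median}(|\D_p|)$, which rests on a pointwise lower bound for the density of $\D_p$ near its median; and (ii) for $p = 0$, the sharp variance/bias calculation for the bucket-based estimator, which is the core of the KNW analysis. Both are standard and worked out carefully in \cite{Indyk00} and \cite{KNW10}, so I would cite them rather than rederive them.
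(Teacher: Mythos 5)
Your proposal is correct and matches the standard constructions: the paper itself states this lemma as a black-box import from \cite{Indyk00,KNW10} with no proof, and your sketch (p-stable sketches with a median estimator for $p \in (0,2]$, the KNW linear sketch for $p=0$, with the technical concentration bounds deferred to the cited works) is exactly the content of those references. The only nitpick is terminological: the boosting step is a median-of-medians (or simply the median of $O(\eps^{-2}\log(1/\delta))$ copies), not ``median-of-means,'' but this does not affect correctness.
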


\noindent{\bf Communication Complexity.\ \ }
We will use two-party communication complexity to prove lower bounds for the problems we study.  In the two-party communication complexity model, there are parties Alice and Bob.  Alice gets an input $x \in \mathcal{X}$, and Bob gets an input $y \in \mathcal{Y}$. They want to jointly compute a function $f : \mathcal{X} \times \mathcal{Y} \to \mathcal{Z}$ via a communication protocol.  Let $\Pi$ be a (randomized) communication protocol, and let $r_A, r_B$ be the private randomness used by Alice and Bob, respectively.  Let $\Pi_{X, Y, r_A, r_B}$ denote the transcript (the concatenation of all messages) when Alice and Bob run $\Pi$ on input $(X, Y)$ using private randomness $(r_A, r_B)$, and let $\Pi(X, Y, r_A, r_B)$ denote the output of the protocol. We say $\Pi$ errs with probability $\delta$ if for all $(x, y) \in \mathcal{X} \times \mathcal{Y}$, 
$$
\Pr_{r_A, r_B}[\Pi_{X, Y, r_A, r_B} \neq f(x,y)] \le \delta.
$$
We define the randomized communication complexity of $f$, denoted by $R_\delta(f)$, to be $\min_\Pi\max_{x,y,r_A,r_B} \abs{\Pi_{X, Y, r_A, r_B}}$, where $\abs{z}$ denotes the length of the transcript $z$.

We next introduce a concept called the distributional communication complexity.  Let $\mu$ be a distribution over the inputs $(X, Y)$.  We say a deterministic protocol $\Pi$ computes $f$ with error probability $\delta$ on $\mu$ if 
$$
\Pr_{(X, Y) \sim \mu}[\Pi_{X, Y} \neq f(x,y)] \le \delta.
$$
The $\delta$-error distributional communication complexity under input distribution $\mu$, denoted by $D_\delta^\mu(f)$, is the minimum communication complexity of a deterministic protocol that computes $f$ with error probability $\delta$ on $\mu$.  The following lemma connects  distributional communication complexity with randomized communication complexity.

\begin{lemma}[Yao's Lemma]
\label{lem:Yao}
For any function $f$ and any $\delta > 0$,  $R_\delta(f) \ge \max_{\mu} D_\delta^\mu(f)$.
\end{lemma}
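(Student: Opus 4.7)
The plan is to follow the standard averaging argument that converts a randomized protocol with worst-case error into a deterministic protocol with distributional error under any fixed input distribution. Let $\Pi$ be a randomized protocol that achieves $R_\delta(f)$ communication in the worst case with error at most $\delta$ on every input pair $(x,y)$. I want to show that for an arbitrary distribution $\mu$ on $\mathcal{X} \times \mathcal{Y}$, there is a deterministic protocol that errs with probability at most $\delta$ when the input is drawn from $\mu$ and whose communication is bounded by that of $\Pi$.

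First I would translate the worst-case error guarantee into an average-case statement. Since $\Pr_{r_A, r_B}[\Pi_{X,Y,r_A,r_B} \neq f(X,Y)] \le \delta$ holds pointwise for every $(x,y)$, taking expectation over $(X,Y) \sim \mu$ preserves the inequality:
\[
\E_{(X,Y)\sim\mu}\Pr_{r_A, r_B}[\Pi_{X,Y,r_A,r_B} \neq f(X,Y)] \le \delta.
\]
Now I would invoke Fubini to swap the order of expectation and obtain
\[
\E_{r_A, r_B}\Pr_{(X,Y)\sim\mu}[\Pi_{X,Y,r_A,r_B} \neq f(X,Y)] \le \delta.
\]

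Next I would apply an averaging (pigeonhole) argument: there must exist a specific choice of randomness $(r_A^*, r_B^*)$ such that
\[
\Pr_{(X,Y)\sim\mu}[\Pi_{X,Y,r_A^*,r_B^*} \neq f(X,Y)] \le \delta.
\]
Hardwiring $(r_A^*, r_B^*)$ into $\Pi$ yields a deterministic protocol $\Pi^*$ whose communication on every input is bounded by the worst-case communication of $\Pi$, namely $R_\delta(f)$, and whose error under $\mu$ is at most $\delta$. This shows $D_\delta^\mu(f) \le R_\delta(f)$, and since $\mu$ was arbitrary, taking the maximum over $\mu$ gives the claim.

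There is no real obstacle here; the only subtlety is ensuring the quantifier order in the definition of $R_\delta(f)$ (worst-case over inputs) lets us pull the expectation over $\mu$ inside, which it does because the pointwise bound holds for every $(x,y)$. The communication bound is preserved simply because $\Pi^*$ is obtained by fixing the randomness of $\Pi$ without altering the message exchanges on any given input.
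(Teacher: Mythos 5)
Your proof is correct: it is the standard averaging argument for the easy direction of Yao's minimax principle (integrate the pointwise error bound over $\mu$, swap the order of expectation, and fix a good choice of randomness to obtain a deterministic protocol whose transcript length is still bounded by $R_\delta(f)$). The paper states this lemma without proof, citing it as standard, so there is nothing to compare against; your argument, including the careful handling of the quantifier order and the observation that fixing $(r_A^*, r_B^*)$ cannot increase the communication under the paper's definition of $R_\delta(f)$ as a worst-case over inputs and randomness, is exactly the canonical proof.
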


A standard method to obtain randomized communication complexity lower bounds is to first find a hard input distribution $\mu$ for a function $f$, and then try to obtain a lower bound on the distributional communication complexity of $f$ under inputs $(X, Y) \sim \mu$. By Yao's Lemma, this is also a lower bound on the randomized communication complexity of $f$.

We now introduce two well-studied problems in communication complexity. 

\medskip

\noindent{\bf  Set-Disjointness (\DISJ).\ \ }  In this problem we have Alice and Bob. Alice holds $x = (x_1, \ldots, x_t) \in \{0,1\}^t$, and Bob holds $y = (y_1, \ldots, y_t) \in \{0,1\}^t$.  They want to compute 
$$\text{\DISJ}(x, y) = \vee_{i=1}^t (x_i \wedge y_i).$$

\begin{lemma}[\cite{BJKST04}]  
\label{lem:disj}
$R_{0.49}(\text{\DISJ}) \ge \Omega(n)$.
\end{lemma}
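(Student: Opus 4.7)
The plan is to prove the $\Omega(n)$ lower bound for set-disjointness via the information complexity framework of Bar-Yossef, Jayram, Kumar, and Sivakumar. The argument decomposes into (i) a direct-sum reduction from $\DISJ$ on $n$ coordinates to the single-coordinate $\AND$ function, and (ii) a constant lower bound on the information cost of $\AND$. Together these give an $\Omega(n)$ lower bound on conditional information cost, which is in turn a lower bound on communication.

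First I would set up the hard distribution. Introduce an auxiliary random variable $D = (D_1, \ldots, D_n)$ with each $D_i$ uniform on $\{A, B\}$. Conditioned on $D_i = A$, set $X_i = 0$ and draw $Y_i$ uniformly from $\{0,1\}$; conditioned on $D_i = B$, set $Y_i = 0$ and draw $X_i$ uniformly from $\{0,1\}$. Let $\mu$ be the resulting joint distribution on $(X, Y)$. Two properties are crucial: (a) $X_i \wedge Y_i = 0$ almost surely, so $\DISJ(X,Y) = 0$ under $\mu$ and $D$ is a ``collapsing'' variable (we cannot directly use $\mu$ as the hard instance distribution, but it is the right distribution for bounding information cost); (b) conditioned on $D_i$, the marginals $X_i$ and $Y_i$ are independent, which is exactly what enables the direct sum.

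Next I would define the conditional information cost $\mathrm{icost}_\mu(\Pi \mid D) = I(X, Y ; \Pi(X,Y) \mid D)$ and establish the direct sum: for any protocol $\Pi$ that computes $\DISJ$ with error $\le 0.49$,
\[
\mathrm{icost}_\mu(\Pi \mid D) \;\ge\; \sum_{i=1}^n \mathrm{icost}_{\mu_1}(\Pi^{(i)} \mid D_i),
\]
where $\mu_1$ is the one-coordinate marginal and $\Pi^{(i)}$ is the single-coordinate $\AND$ protocol obtained by having Alice and Bob use public randomness plus $D$ to fill in the other $n-1$ coordinates (each $D_j$ tells the ``silent'' player which bit to plant so that the joint input still has the correct marginal distribution, so no communication is needed to simulate those coordinates). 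Since $\AND$ on coordinate $i$ detects any intersection at position $i$, correctness of $\Pi$ for $\DISJ$ implies each $\Pi^{(i)}$ computes $\AND$ with error $\le 0.49$ on the corresponding marginal.

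The main obstacle is the single-coordinate lower bound: I must show that any randomized protocol computing $\AND$ with constant advantage has $\mathrm{icost}_{\mu_1}(\cdot \mid D_1) \ge \Omega(1)$. The standard route is via Hellinger distance on transcripts. Using the ``rectangle'' (cut-and-paste) property of deterministic protocols, one shows that for any transcript distribution $\Pi_{x,y}$, the Hellinger distance satisfies $h(\Pi_{0,1}, \Pi_{1,0}) \le h(\Pi_{0,0}, \Pi_{1,1})$. A correct protocol for $\AND$ must separate $\Pi_{1,1}$ from $\Pi_{0,0}, \Pi_{0,1}, \Pi_{1,0}$, forcing $h(\Pi_{0,1}, \Pi_{1,0}) = \Omega(1)$; then the Pinsker-type inequality relating Hellinger distance to mutual information gives $\mathrm{icost}_{\mu_1}(\Pi \mid D_1) = \Omega(1)$. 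Combining with the direct sum yields $\mathrm{icost}_\mu(\Pi \mid D) = \Omega(n)$, and since communication upper-bounds information cost, $R_{0.49}(\DISJ) = \Omega(n)$, finishing the proof via Yao's lemma (Lemma~\ref{lem:Yao}).
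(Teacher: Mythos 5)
The paper does not prove this lemma itself --- it is quoted directly from \cite{BJKST04} --- and your proposal is a faithful reconstruction of that reference's information-complexity argument (collapsing distribution with auxiliary variable $D$, direct sum to $\AND$, Hellinger-distance/cut-and-paste bound for the single coordinate). One cosmetic remark: the cut-and-paste lemma gives the \emph{equality} $h(\Pi_{0,1},\Pi_{1,0}) = h(\Pi_{0,0},\Pi_{1,1})$, and it is the lower bound on $h(\Pi_{0,1},\Pi_{1,0})$ (not the upper bound you wrote) that the argument needs; also, the information-cost bound lower-bounds randomized communication directly, so the final appeal to Yao's lemma is unnecessary.
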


\medskip

\noindent{\bf \INF.\ \ }  In this problem Alice holds $x = (x_1, \ldots, x_t) \in [0, \kappa]^t$, and Bob holds $y = (y_1, \ldots, y_t) \in [0, \kappa]^t$, with the following promise: either $\abs{x_i - y_i} \le 1$ for all $i$; or for some $i$, $\abs{x_i - y_i} \ge \kappa$. Define $\text{\INF}(x, y) = 1$ if $\norm{x - y}_\infty \ge \kappa$, and $\text{\INF}(x, y) = 0$ otherwise.

\begin{lemma}[\cite{BJKST04}]  
\label{lem:inf}
$R_{0.49}(\text{\INF}) \ge \Omega(n/\kappa^2)$.
\end{lemma}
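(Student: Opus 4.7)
My plan is to follow the information-complexity framework of Bar-Yossef, Jayram, Kumar, and Sivakumar. By Lemma~\ref{lem:Yao} it suffices to exhibit a hard distribution $\mu$ on promise-valid inputs under which any $0.49$-error deterministic protocol must communicate $\Omega(t/\kappa^2)$ bits, where $t$ is the vector length. I would take $\mu$ to be a mixture distribution in which each coordinate is independently either drawn from an ``easy'' distribution $\nu$ concentrated on pairs $(x_i,y_i)\in[0,\kappa]^2$ with $\abs{x_i-y_i}\le 1$, or, at a single uniformly random coordinate, promoted to a $\kappa$-gap pair. Both branches respect the promise, and the planted and unplanted instances differ only at the planted coordinate, so a correct protocol must essentially detect which coordinate (if any) has been planted.

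Second, I would invoke a direct-sum theorem for conditional information cost. Given any such $\Pi$, publicly sampling the other $t-1$ coordinates from $\nu$ reduces $\Pi$ to a single-coordinate protocol $\Pi_1$ for the one-dimensional gap problem on $[0,\kappa]^2$. Standard information-theoretic manipulations then yield
\[
\abs{\Pi}\;\ge\; I(X,Y;\Pi)\;\ge\;\sum_{i=1}^{t}I(X_i,Y_i;\Pi_i)\;=\;t\cdot I(X_1,Y_1;\Pi_1),
\]
so the task reduces to proving $I(X_1,Y_1;\Pi_1)=\Omega(1/\kappa^2)$ for any $0.49$-error single-coordinate protocol.

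Third, for this per-coordinate bound I would use a Hellinger-distance cut-and-paste argument. The rectangle structure of deterministic protocols yields the cut-and-paste lemma: for inputs $(x,y)$ and $(x',y')$, the squared Hellinger distance between the transcript distributions is controlled by the distances of the intermediate single-side swaps, each in turn upper bounded by $O(I(X_1,Y_1;\Pi_1))$ via the standard $h^2\le\ln 2\cdot I$ inequality. I would then construct a length-$\Theta(\kappa)$ chain of inputs interpolating between the easy endpoint $(0,0)$ and the hard endpoint $(0,\kappa)$ in which consecutive inputs differ in exactly one player's coordinate by $1$. A triangle inequality for Hellinger distance along this chain, combined with the fact that a correct protocol induces $\Omega(1)$ Hellinger distance between its transcripts on the two endpoints, forces each per-step Hellinger distance to be $\Omega(1/\kappa)$, whence $I(X_1,Y_1;\Pi_1)=\Omega(1/\kappa^2)$.

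The main obstacle is the Hellinger / chain step: although several intermediate inputs along the chain will violate the promise, the protocol's transcript distribution is nevertheless well-defined on them, and one must verify that both the cut-and-paste lemma and the information-to-Hellinger bound apply uniformly so that the $\Theta(\kappa)$-factor loss from the triangle inequality is not compounded further. Assembled with the direct-sum step, this yields $R_{0.49}(\text{\INF})=\Omega(t/\kappa^2)$.
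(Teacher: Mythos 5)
The paper offers no proof of this lemma at all—it is stated as a black-box citation to \cite{BJKST04}—and your sketch correctly reconstructs the information-complexity argument of that reference: a product hard distribution with one planted coordinate, a conditional-information-cost direct sum down to the single-coordinate gap problem on $[0,\kappa]$, and the Hellinger cut-and-paste plus length-$\Theta(\kappa)$ chain argument yielding $\Omega(1/\kappa^2)$ per coordinate. The only slight imprecision is that the triangle inequality forces the \emph{sum} of per-step Hellinger distances to be $\Omega(1)$ (so Cauchy--Schwarz gives $\sum_i h_i^2 = \Omega(1/\kappa)$ and the averaging over the conditioning variable supplies the second $1/\kappa$), not that each individual step is $\Omega(1/\kappa)$; with that reading your bound matches the cited source.
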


\medskip

\noindent{\bf Tools and Previous Results.\ \ }We will make use of the following results on distributed matrix multiplication and $\ell_0$-sampling on vectors.  

\begin{lemma}[\cite{GWWZ15}, Distributed Matrix Multiplication]  
\label{lem:f0}
Suppose Alice holds a matrix $A \in \mathbb{R}^{n \times n}$, and Bob holds a matrix $B \in \mathbb{R}^{n \times n}$.
There is an algorithm for Alice and Bob to compute $C_A$ and $C_B$ such that with probability $1 - 1/n^{10}$, $C_A + C_B = A B$. The algorithm uses $\tilde{O}(n \sqrt{\norm{AB}_0})$ bits of communication and $2$ rounds.  
\end{lemma}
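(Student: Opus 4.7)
The plan is a two-round protocol in which Bob ends up learning $AB$ exactly and outputs $C_B = AB$, while Alice outputs $C_A = 0$; the total cost is $\tilde O(n\sqrt{s})$ bits, where $s = \|AB\|_0$. The central trick is to apply a sparse-recovery sketch to $AB$ whose measurement operator factors as a Kronecker product, so that each factor can be computed single-handedly by Alice or by Bob.

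In round one I would have Alice and Bob estimate $s$ up to a constant factor using $\tilde O(n)$ bits. Alice draws an $\ell_0$-sketching matrix $S \in \mathbb{R}^{\tilde O(1)\times n}$ from Lemma~\ref{lem:ell-p} (with $p=0$, $\epsilon = 1/2$, and failure probability $1/n^{11}$) and sends $SA$ to Bob. Bob computes $SAB = S(AB)$, whose $j$-th column is an $\ell_0$-sketch of $(AB)_{*,j}$; applying the estimator columnwise and summing over $j$ yields, by a union bound, a value $\hat s = \Theta(s)$ with probability $1 - 1/n^{10}$. Bob sends $\hat s$ back to Alice together with a short random seed for round two.

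In round two the parties use the shared seed to construct sketching matrices $\Phi_1, \Phi_2 \in \mathbb{R}^{m\times n}$ with $m = \tilde O(\sqrt{\hat s})$, designed so that the linear map $M \mapsto \Phi_2 M \Phi_1^T$ exactly recovers any $n \times n$ matrix $M$ with $\|M\|_0 \le O(\hat s)$ with failure probability $1/n^{11}$. The key identity $\Phi_2 \cdot AB \cdot \Phi_1^T = (\Phi_2 A)\, B\, \Phi_1^T$ lets Alice contribute her part without sending $A$: she transmits only $\Phi_2 A \in \mathbb{R}^{m\times n}$, costing $\tilde O(n\sqrt{s})$ bits, and Bob finishes the two matrix multiplications on his side and runs the decoder to recover $AB$.

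The main obstacle is exhibiting the Kronecker-structured exact sparse-recovery sketch. I would build it from a tensorized Count-Sketch with $O(\log n)$ independent repetitions: per repetition, draw $h_1, h_2:[n] \to [O(\sqrt{\hat s})]$ and signs $\sigma_1, \sigma_2:[n]\to\{\pm 1\}$, and record three layers per bucket, namely $\sum \sigma_1(i)\sigma_2(j)\,C_{ij}$ and the two position-weighted variants $\sum i\,\sigma_1(i)\sigma_2(j)\,C_{ij}$ and $\sum j\,\sigma_1(i)\sigma_2(j)\,C_{ij}$, each of which still factors as $\Phi^{(\ell)} C (\Phi^{(r)})^T$ for an appropriate modification of $\Phi_1, \Phi_2$. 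A peeling decoder identifies singleton buckets (via the ratios of the three layers), reads off the positions and values of the isolated nonzeros, and subtracts; the analysis shows that with $\Theta(\hat s)$ buckets per repetition a constant fraction of nonzeros are singletons at each stage, so $O(\log n)$ peeling rounds recover $AB$ with the claimed probability, in the style of the $\ell_0$-sparse recovery analysis underlying \cite{p13}.
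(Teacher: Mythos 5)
The paper does not actually prove this lemma---it is imported verbatim from \cite{GWWZ15}---so there is no in-paper argument to compare against. Your proposal reconstructs what is essentially the known route: a tensorized CountSketch in the spirit of compressed matrix multiplication \cite{p13}, exploiting the factorization $\Phi_2 (AB)\Phi_1^T = (\Phi_2 A)\,B\,\Phi_1^T$ so that an $\tilde{O}(s)$-measurement exact $s$-sparse recovery sketch of $AB$ can be assembled with only $\tilde{O}(n\sqrt{s})$ bits crossing the channel. The parameter-estimation round via columnwise $\ell_0$-sketches is also sound, since $S(AB) = (SA)B$. So the architecture is right and matches the source.

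Two concrete gaps remain. First, the round count: as written your protocol uses three messages (Alice sends $SA$; Bob returns $\hat{s}$ and a seed; Alice sends $\Phi_2 A$), whereas the lemma and this paper's round-counting convention (one round $=$ one message) allow only two. The fix is to reverse the direction of the recovery step: in the second message Bob sends $\hat{s}$ together with $B\Phi_1^T \in \mathbb{R}^{n \times m}$, which also costs $\tilde{O}(n\sqrt{s})$ bits; Alice then forms $(\Phi_2 A)(B\Phi_1^T)$, decodes, and outputs $C_A = AB$ with $C_B = 0$. Second, the peeling decoder as you describe it can be fooled: a bucket containing two or more colliding nonzeros may, for adversarial real values, produce position-layer ratios that look like a valid singleton, and subtracting a phantom entry corrupts the subsequent peeling. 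You need a verification mechanism---e.g., an additional fingerprint layer $\sum_{i,j}\sigma_1(i)\sigma_2(j)\,r^{\,i+nj}\,C_{ij}$ over a large field with fresh randomness $r$, or a consistency check against an independent repetition---before declaring a bucket a singleton. With that layer added (still a Kronecker-factorable measurement, so the communication bound is unaffected), the standard IBLT-style analysis you invoke goes through and the failure probability $1/n^{10}$ follows from a union bound over the $O(\log n)$ repetitions and peeling stages.
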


\begin{lemma}[\cite{JST11}, $\ell_0$-Sampling]  
\label{lem:ell-0-sampling}
For a data vector $x \in \mathbb{R}^n$, there is a sketch $sk(x) = Sx$ where $S \in \mathbb{R}^{\tilde{O}(1) \times n}$ is a random sketching matrix, and a function $g$ such that $g(sk(x))$ returns $i \in [n]$ for each coordinate $x_i > 0$ with probability ${1}/{\norm{x}_0}$.  The process fails with probability at most $1/n^{10}$.  
\end{lemma}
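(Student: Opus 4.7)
The plan is to combine geometric subsampling with sparse recovery, following the standard template of~\cite{JST11}. First I would pick a hash function $h:[n]\to\{0,1,\ldots,\lceil\log n\rceil\}$ from a $k$-wise independent family with $k=\Theta(\log n)$, designed so that $\Pr[h(i)\ge j]=2^{-j}$. For each level $j$, define the substream $x^{(j)}$ to be the restriction of $x$ to coordinates $i$ with $h(i)\ge j$. The linearity of this operation lets me realize $x^{(j)}$ as $D_j x$ for a diagonal $0/1$ matrix $D_j$ derivable from the random hash, which is crucial for the whole thing to be a linear sketch of $x$.

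Next, at each level $j$, I would maintain an $s$-sparse recovery sketch for some constant $s$ (say $s=10$) applied to $x^{(j)}$. Such a sketch is itself a linear sketch of $O(\log n)$ rows, with a decoder that (i) certifies whether $\norm{x^{(j)}}_0\le s$, and (ii) if so, exactly reports the support and values of $x^{(j)}$. Concatenating the sketches for all $O(\log n)$ levels yields a single linear sketch $S\in\mathbb{R}^{\tilde{O}(1)\times n}$ with $sk(x)=Sx$, as demanded by the lemma.

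For the analysis, let $k=\norm{x}_0$ and focus on the unique level $j^\star$ with $k/2^{j^\star}\in[1,s/2]$. A standard tail bound for sums of $k$-wise independent indicators shows that at level $j^\star$ the number of surviving nonzeros lies in $[1,s]$ with constant probability, so the $s$-sparse recovery decoder succeeds and returns a nonempty subset $T$ of the true support of $x$. I would then embed an independent, fully random permutation $\pi:[n]\to[n]$ into the sampling (for instance, output the element of $T$ of smallest $\pi$-rank across the \emph{lowest} level that recovers a nonempty support) to guarantee that, conditioned on success, each nonzero of $x$ is equally likely to be output, yielding the exact $1/\norm{x}_0$ marginal. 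Independently repeating the whole scheme $O(\log n)$ times and returning any successful sample drives the failure probability to $1/n^{10}$, while increasing the sketch dimension only by a $\log n$ factor, so it stays $\tilde{O}(1)$.

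The main obstacle is the uniformity requirement: the lemma asks for each nonzero coordinate to be returned with probability \emph{exactly} $1/\norm{x}_0$, not just up to a $(1\pm\eps)$ factor. A naive ``return a random element of $T$'' rule is biased, because a coordinate's chance of surviving to a useful level is correlated with its hash value and with which other coordinates happen to land in the same bucket. Arguing that the independent $\pi$-rank tiebreak, combined with the fact that the sparse recovery decoder returns a \emph{set} rather than an ordered tuple, restores perfect symmetry among all nonzero coordinates is where the bulk of the technical work and the need for $\Theta(\log n)$-wise independence of $h$ actually lie.
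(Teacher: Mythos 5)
The paper does not prove Lemma~\ref{lem:ell-0-sampling}; it is imported verbatim from \cite{JST11}, so there is no in-paper argument to compare against. Your sketch is the standard construction underlying that result (geometric subsampling via a limited-independence hash, an exact $s$-sparse recovery sketch with a sparsity certificate at each of $O(\log n)$ levels, and $O(\log n)$ independent repetitions), and at that level it is the right proof strategy.

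There is, however, one genuine gap, and it sits exactly where you placed the ``bulk of the technical work.'' The independent permutation $\pi$ does not restore perfect symmetry. Writing $W$ for the recovered set at the selected level and conditioning on the decoder succeeding, your rule gives $\Pr[\text{output}=i]=\E_h\left[\mathbbm{1}[i\in W,\ \mathrm{succ}]/\abs{W}\right]$; the tiebreak only makes the choice uniform \emph{conditioned on} $W$, while the bias you are worried about lives in the distribution of $W$ itself, which is a function of $h$ alone. With a fully random $h$ the support coordinates are exchangeable and this expectation is identical for all $i$, but with $\Theta(\log n)$-wise independence it is only equal up to a $(1\pm n^{-\Theta(1)})$ factor --- and no post-hoc randomization independent of $h$ can remove that residual bias. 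So your construction proves the statement with ``probability $(1\pm n^{-c})/\norm{x}_0$'' rather than exactly $1/\norm{x}_0$. This is in fact what \cite{JST11} prove; the lemma as stated in the paper is an idealization, and the relative error is harmless downstream (Theorem~\ref{thm:sampling} only claims a $(1\pm\eps)$ sampling guarantee), but your write-up should either weaken the uniformity claim accordingly or invoke a fully random hash (acceptable here since $S$ is shared randomness and its description length is not charged to the sketch dimension). Two smaller points: the selection rule should be ``lowest level $j$ with $1\le\norm{x^{(j)}}_0\le s$,'' which is well defined because the level supports are nested and non-increasing; and the sparsity certificate itself has a $1/\mathrm{poly}(n)$ false-positive probability that must be folded into the union bound before the $O(\log n)$-fold repetition drives the total failure probability below $1/n^{10}$.
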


We will also need the standard Chernoff bound.
\begin{lemma}[Chernoff Bound]
\label{lem:Chernoff-standard}
Let $X_1, \ldots, X_n$ be independent Bernoulli random variables such that $\Pr[X_i = 1] = p_i$. Let $X = \sum_{i \in [n]} X_i$. Let $\mu = \E[X]$. It holds that $\Pr[X \ge (1+\delta)\mu] \le e^{-\delta^2\mu/3}$ and $\Pr[X \le (1-\delta)\mu] \le e^{-\delta^2\mu/2}$ for any $\delta \in (0,1)$.
\end{lemma}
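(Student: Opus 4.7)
The plan is to prove both tail bounds by the classical Chernoff/Bernstein moment-generating-function (MGF) method: apply Markov's inequality to $e^{tX}$ for a well-chosen parameter $t$, then optimize.

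For the upper tail, I would first write, for any $t > 0$,
\[
\Pr[X \ge (1+\delta)\mu] \;=\; \Pr\!\left[e^{tX} \ge e^{t(1+\delta)\mu}\right] \;\le\; \frac{\E[e^{tX}]}{e^{t(1+\delta)\mu}}
\]
by Markov's inequality. Since the $X_i$ are independent Bernoulli, $\E[e^{tX_i}] = 1 + p_i(e^t - 1)$, so by independence and the elementary inequality $1+x \le e^x$,
\[
\E[e^{tX}] = \prod_{i=1}^n \bigl(1 + p_i(e^t-1)\bigr) \le \prod_{i=1}^n e^{p_i(e^t - 1)} = e^{\mu(e^t - 1)}.
\]
Substituting gives $\Pr[X \ge (1+\delta)\mu] \le \exp\bigl(\mu(e^t - 1) - t(1+\delta)\mu\bigr)$. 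Optimizing the exponent in $t$ yields $t = \ln(1+\delta)$, producing the standard raw bound
\[
\Pr[X \ge (1+\delta)\mu] \;\le\; \left(\frac{e^\delta}{(1+\delta)^{1+\delta}}\right)^{\!\mu}.
\]

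The final step, and in my view the only nontrivial one, is the analytic conversion of this raw bound into the clean form $e^{-\delta^2\mu/3}$. I would show $\delta - (1+\delta)\ln(1+\delta) \le -\delta^2/3$ for $\delta \in (0,1)$. A short way is to set $f(\delta) = (1+\delta)\ln(1+\delta) - \delta - \delta^2/3$, verify $f(0)=0$, and check the derivative $f'(\delta) = \ln(1+\delta) - 2\delta/3$ is nonnegative on $(0,1)$ via Taylor expansion $\ln(1+\delta) = \delta - \delta^2/2 + \delta^3/3 - \ldots$, which gives $\ln(1+\delta) \ge 2\delta/3$ precisely in this range. This yields the claimed $\Pr[X \ge (1+\delta)\mu] \le e^{-\delta^2 \mu/3}$.

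For the lower tail, I would run the symmetric argument with $t < 0$: applying Markov to $e^{-tX}$ for $t>0$ and using $\E[e^{-tX}] \le e^{\mu(e^{-t}-1)}$, optimization at $t = -\ln(1-\delta)$ gives
\[
\Pr[X \le (1-\delta)\mu] \;\le\; \left(\frac{e^{-\delta}}{(1-\delta)^{1-\delta}}\right)^{\!\mu}.
\]
The analytic step is analogous: one shows $-\delta - (1-\delta)\ln(1-\delta) \le -\delta^2/2$ for $\delta \in (0,1)$, again by differentiating $g(\delta) = (1-\delta)\ln(1-\delta) + \delta - \delta^2/2$, noting $g(0) = 0$, and verifying $g'(\delta) = -\ln(1-\delta) - \delta \ge 0$ from the series $-\ln(1-\delta) = \delta + \delta^2/2 + \delta^3/3 + \ldots \ge \delta$. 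This yields $\Pr[X \le (1-\delta)\mu] \le e^{-\delta^2\mu/2}$, completing both bounds. The main obstacle is purely the calculus lemma that converts the exponent $\delta - (1+\delta)\ln(1+\delta)$ (resp.\ its lower-tail analog) into a clean quadratic in $\delta$; the coefficients $1/3$ and $1/2$ arise directly from these Taylor-series comparisons.
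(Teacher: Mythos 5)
Your proof is correct and is the standard moment-generating-function argument; the paper itself states Lemma~\ref{lem:Chernoff-standard} as a known tool (used throughout Sections~4 and~5) and gives no proof, so there is nothing to diverge from. Both raw bounds $\bigl(e^{\delta}/(1+\delta)^{1+\delta}\bigr)^{\mu}$ and $\bigl(e^{-\delta}/(1-\delta)^{1-\delta}\bigr)^{\mu}$ are derived correctly, and the two calculus lemmas you reduce to are true on $(0,1)$. One small point of rigor: your justification of $\ln(1+\delta)\ge 2\delta/3$ by truncating the alternating Taylor series does not actually cover the whole range, since the partial-sum lower bound $\delta-\delta^2/2$ only dominates $2\delta/3$ for $\delta\le 2/3$ (and at $\delta=1$ the inequality $\ln 2\ge 2/3$ is tight enough that low-order truncations fail). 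The fix is immediate: $h(\delta)=\ln(1+\delta)-2\delta/3$ is concave with $h(0)=0$ and $h(1)=\ln 2-2/3>0$, hence nonnegative on $[0,1]$. The lower-tail series argument $-\ln(1-\delta)\ge\delta$ is fine as written since that series has all positive terms.
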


\section{$(1+\eps)$-Approximation of {$\ell_p\ (p \in [0, 2])$}}
\label{sec:small-p}

For notational convenience (in order to unify $\ell_0$ and $\ell_p$ for constant $p \in (0, 2]$), we define $\norm{x}_0^0 = \norm{x}_0$ to be the number of non-zero entries of $x$.  

Note that for a constant $p$, approximating $\norm{C}_p$ within a $(1+\eps)$ factor and approximating $\norm{C}_p^p$ within a $(1+\eps)$ factor are asymptotically equivalent -- we can always scale the multiplicative error $\eps$ by a factor of $p$ (a constant), which will not change the asymptotic communication complexity. We will thus use these interchangeably for convenience.

\medskip

\noindent{\bf The Idea.\ \ }  The high level idea of the algorithm is as follows.  We first perform a {\em rough} estimation -- we try to estimate the $\ell_p$-norm of each row of $C$ within a $(1+\sqrt{\eps})$ factor.  We then sample rows of $C$ with respect to their estimated ($p$-th power of their) $\ell_p$-norm, obtaining a matrix $C'$.  We finally use $C'$ to obtain a {\em finer} estimation (i.e., a $(1+\eps)$-approximation) of $\norm{C}_p^p$.

\medskip

\noindent{\bf Algorithm.\ \ }
Set parameters $\beta = \eps^{1/2}$, $\rho = 10^4 \beta^2/\eps^2 = 10^4/\eps$. The algorithm for approximating $\ell_p$-norms for $p \in [0,2]$ is  presented in Algorithm~\ref{alg:small-p}. We describe it in words below.

\begin{algorithm}[t]
\caption{$(1+\eps)$-Approximation for $\ell_p\ (p \in [0,2])$}
\label{alg:small-p}
\SetKwInOut{Input}{Input}
\SetKwInOut{Output}{Output}
\Input{Alice has a matrix $A \in \mathbb{Z}^{n \times n}$, and Bob has a matrix $B \in \mathbb{Z}^{n \times n}$. Let $C \leftarrow A B$}
\Output{A $(1+\eps)$-approximation of $\norm{C}_p^p$}
\BlankLine

 Let $S$ be the sketching matrix in Lemma~\ref{lem:ell-p}\;

Bob computes $S B^T\in \mathbb{R}^{\tilde{O}(1/\beta^2) \times n}$ of $B^T$ and sends it to Alice\;

Alice computes $\widetilde{C} \gets (S B^T A^T)^T$\;

Alice partitions the $n$ rows of $\widetilde{C}$ to (up to) $L = \log_{1+\beta} (2n^{p+1}) = O(\frac{1}{\beta} \log n)$ groups $G_1, \ldots, G_L$, such that $G_\ell$ contains all $i \in [n]$ for which $(1+\beta)^\ell \le {\norm{\widetilde{C_{i,*}}}_p^p} < (1+\beta)^{\ell+1}$\;

\ForEach{group $G_\ell\ (\ell \in [L])$}{ 
	Alice randomly samples each $i \in G_\ell$ with probability $p_\ell$, where $p_\ell = \frac{\rho}{\abs{G_\ell}} \cdot \frac{\norm{\widetilde{G_\ell}}_p^p}{\norm{\widetilde{C}}_p^p}$ where $\norm{\widetilde{G_\ell}}_p^p = \sum_{i \in G_\ell} \norm{\widetilde{C_{i,*}}}_p^p$; Alice sends $p_\ell$ to Bob\;
	Alice then replaces all non-sampled rows in $A$ with the all-$0$ vector, obtaining $A'$, and sends $A'$ to Bob\;
}

Bob computes $C' \gets A' B$, and outputs $\sum_{\ell \in [L]} \sum_{i \in G_\ell} \frac{1}{p_\ell} \norm{C'_{i,*}}_p^p$. 
\end{algorithm}

Alice and Bob first try to estimate the $\ell_p$-norm of each row in $C$ within a factor of $(1+\beta)$. This can be done by letting Bob send an $\ell_p$-sketch of $B^T$ of size $\tilde{O}(1/\beta^2)$ to Alice using the sketch in Lemma~\ref{lem:ell-p}; Alice then computes $\widetilde{C} = (S B^T A^T)^T$.  With probability $0.99$, we have that for all $i \in [n]$,
\begin{equation}
\label{eq:a-1}
\norm{\widetilde{C_{i,*}}}_p^p \in \left[\norm{C_{i,*}}_p^p, (1+\beta) \cdot \norm{C_{i,*}}_p^p\right].
\end{equation}

We note that we can set $\beta = \eps$ (instead of $\beta = \sqrt{\eps}$) and directly get a $(1+\eps)$ approximation of $\norm{C_{i,*}}_p^p$ for each row $i$ (and thus $\norm{C}_p^p$).  This is exactly what was done in \cite{GWWZ15}. However, the communication cost in this case is $\tilde{O}(n/\eps^2)$, which is higher than our goal by a factor of $1/\eps$.

\smallskip

Alice then sends Bob $\norm{\widetilde{C_{i,*}}}_p^p$ for all $i \in [n]$.  Both parties partition all the rows of $\widetilde{C}$ into up to $L = O(1/\beta \cdot \log n)$ groups $G_1, \ldots, G_L$, such that the $\ell$-th group $G_\ell$ contains all $i \in [n]$ for which
\begin{equation}
\label{eq:a-2}
(1+\beta)^\ell \le \norm{\widetilde{C_{i,*}}}_p^p < (1+\beta)^{\ell+1}.
\end{equation}
By (\ref{eq:a-1}) and (\ref{eq:a-2}), we have that for each $i \in G_\ell$,
\begin{equation}
\label{eq:a-3}
(1+\beta)^\ell \le \norm{C_{i,*}}_p^p < (1+3\beta) \cdot (1+\beta)^{\ell}.
\end{equation}

For a fixed group $G_\ell$, let $\norm{{G_\ell}}_p^p = \sum_{i \in G_\ell} \norm{{C_{i,*}}}_p^p$ and $\norm{\widetilde{G_\ell}}_p^p = \sum_{i \in G_\ell} \norm{\widetilde{C_{i,*}}}_p^p$. 
For each $\ell \in [L]$, set 
$$ p_\ell = \frac{\rho}{\abs{G_\ell}} \cdot {\norm{\widetilde{G_\ell}}_p^p}\left/{\norm{\widetilde{C}}_p^p}\right..$$  By (\ref{eq:a-1}) we have  
\begin{equation}
\label{eq:a-4}
p_\ell \in \left[\frac{1}{2} \cdot \frac{\rho}{\abs{G_\ell}} \cdot \frac{\norm{G_\ell}_p^p}{\norm{C}_p^p}, \ 2 \cdot \frac{\rho}{\abs{G_\ell}} \cdot \frac{\norm{G_\ell}_p^p}{\norm{C}_p^p}  \right]
\end{equation}

For each $\ell \in [L]$, Alice randomly samples each $i \in G_\ell$ with probability $p_\ell$. Alice then sends Bob $A'$ which consists of all the sampled rows of $A$ with other rows being replaced by all-$0$ vectors.  Bob then computes $C' = A' B$, and outputs $\sum_{\ell \in [L]} \sum_{i \in G_\ell} \frac{1}{p_\ell} \norm{C'_{i,*}}_p^p$ as the approximation to $\norm{C}_p^p$.
\medskip

We can show the following regarding Algorithm~\ref{alg:small-p}.  
\begin{theorem}
\label{thm:small-p}
For any $p \in [0,2]$, there is an algorithm that approximates $\norm{AB}_p$ for $A, B \in \mathbb{Z}^{n \times n}$ within a $(1+\eps)$ factor with probability $1 - 1/n^{10}$, using $\tilde{O}(n/\eps)$ bits of communication and $2$ rounds. 
\end{theorem}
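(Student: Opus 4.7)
The proof splits into a communication bound and a concentration argument for the sampling estimator.

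For the communication, round~1 costs $\tilde{O}(n/\eps)$ bits: Bob sends the sketch $SB^T$ whose $\tilde{O}(1/\beta^2) = \tilde{O}(1/\eps)$ rows come from Lemma~\ref{lem:ell-p}. In round~2, Alice sends the $L = \tilde{O}(1/\beta)$ scalar probabilities $p_\ell$ (negligible) and the sampled rows of $A$. Using \eqref{eq:a-4}, the expected number of sampled rows is
\[
\sum_\ell |G_\ell|\min(1,p_\ell) \;\leq\; \sum_\ell \rho \cdot \frac{\norm{\widetilde{G_\ell}}_p^p}{\norm{\widetilde{C}}_p^p} \;=\; \rho \;=\; O(1/\eps),
\]
and since each sampled row of $A$ occupies $\tilde{O}(n)$ bits, round~2 also costs $\tilde{O}(n/\eps)$ bits in expectation; a standard Chernoff bound confirms the actual count is $\tilde{O}(1/\eps)$ with probability $1 - n^{-10}$.

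For correctness, I would first condition on the event that \eqref{eq:a-1} holds simultaneously for every row, which needs only amplifying the $\ell_p$-sketch's failure probability to $n^{-11}$ (absorbed in $\tilde{O}$). Then \eqref{eq:a-3} guarantees that within each group $G_\ell$ the true row-norms $a_i := \norm{C_{i,*}}_p^p$ lie in a multiplicative interval of width $(1+3\beta)$. Letting $Y_i$ be the indicator that row $i$ is sampled, the natural estimator $X := \sum_\ell\sum_{i\in G_\ell} Y_i a_i/p_\ell$ satisfies $\E[X] = \norm{C}_p^p$. The within-group near-uniformity gives $\sum_{i \in G_\ell} a_i^2 \leq (1+3\beta)^2 (\norm{G_\ell}_p^p)^2/|G_\ell|$; combined with \eqref{eq:a-4}, this yields $\var(X_\ell) = O(\norm{G_\ell}_p^p \cdot \norm{C}_p^p/\rho)$, so that $\var(X) = O(\norm{C}_p^{2p}/\rho)$ after summing over $\ell$.

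The main obstacle is that plugging this variance directly into Chebyshev demands $\rho = \Omega(1/\eps^2)$ to secure a $(1+\eps)$-approximation, which would blow the round-2 cost back up to $\tilde{O}(n/\eps^2)$. To save the missing factor of $1/\eps$, I would use the rough estimates $\tilde{a}_i := \norm{\widetilde{C_{i,*}}}_p^p$ as a control variate: Alice knows them directly, and Bob can compute them for any sampled row $i$ from $A_{i,*}$, $B$, and $S$, so both parties can agree on the modified estimator
\[
X' \;:=\; \sum_i \tilde{a}_i \;+\; \sum_\ell\sum_{i\in G_\ell} \frac{Y_i}{p_\ell}(a_i - \tilde{a}_i),
\]
which remains unbiased for $\norm{C}_p^p$. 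Since $|a_i - \tilde{a}_i| \leq \beta a_i$ by \eqref{eq:a-1}, rerunning the previous variance argument on the increments $(a_i - \tilde{a}_i)$ picks up an extra factor of $\beta^2 = \eps$, yielding $\var(X') = O(\eps^2 \norm{C}_p^{2p})$. Chebyshev then delivers the $(1+\eps)$-approximation with constant success probability, and standard median-of-$O(\log n)$ amplification boosts this to $1 - n^{-10}$ at only polylogarithmic multiplicative cost, absorbed in $\tilde{O}$.
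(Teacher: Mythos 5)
Your proposal is correct, and at the skeleton level it is the paper's proof: a $(1+\beta)$-accurate $\ell_p$-sketch of $B^T$ with $\beta=\sqrt{\eps}$ to get rough row norms, grouping rows into $(1+\beta)$-geometric classes, importance sampling of roughly $\rho=O(1/\eps)$ rows, and a Chebyshev argument whose variance gain comes from the near-uniformity of row norms within each group; the communication accounting is also identical. Where you genuinely depart from the paper---and, in my reading, improve on it---is the final estimator. The paper outputs the plain inverse-probability-weighted sum $\sum_\ell\sum_{i\in G_\ell}\frac{1}{p_\ell}\norm{C'_{i,*}}_p^p$, but it analyzes the group-mean-centered quantity $Z^\ell=\frac{1}{p_\ell}\sum_{i\in G_\ell}\bigl(\norm{C_{i,*}}_p^p-\norm{G_\ell}_p^p/\abs{G_\ell}\bigr)X_i^\ell$. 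The stated output differs from $\norm{C}_p^p+Z$ by the fluctuation $\sum_\ell\frac{\norm{G_\ell}_p^p}{p_\ell\abs{G_\ell}}\bigl(\sum_{i\in G_\ell}X_i^\ell-p_\ell\abs{G_\ell}\bigr)$, whose variance is $\Theta\bigl((\norm{C}_p^p)^2/\rho\bigr)=\Theta(\eps)(\norm{C}_p^p)^2$ and which therefore only yields a $(1+\sqrt{\eps})$-approximation on its own (already visible for a single group of equal-norm rows, where the output is a rescaled binomial with relative standard deviation $1/\sqrt{\rho}$). Your control-variate estimator $X'=\sum_i\tilde{a}_i+\sum_\ell\sum_{i\in G_\ell}\frac{Y_i}{p_\ell}(a_i-\tilde{a}_i)$ removes exactly this fluctuation: it is unbiased, both parties can evaluate it (Alice ships the $\tilde{a}_i$, and Bob can recompute $a_i$ and $\tilde{a}_i$ for sampled rows), and its variance correctly picks up the extra $\beta^2=\eps$ factor, closing the Chebyshev step with $\rho=O(1/\eps)$. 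In short, your write-up supplies the debiasing step that the paper's output formula omits; everything else---conditioning on (\ref{eq:a-1}), the variance bounds via (\ref{eq:a-3}) and (\ref{eq:a-4}), and the median amplification---matches the paper.
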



\noindent{\bf Correctness.\ \ }
For each $\ell \in [L]$, and each $i \in G_\ell$, let $X_i^\ell$ be a $0/1$ random variable such that $X_i^\ell = 1$ if $i \in G_\ell$ is sampled by Alice, and $X_i^\ell = 0$ otherwise.
Define
$$
Z^\ell = \frac{1}{p_\ell} \sum_{i \in G_\ell} \left(\norm{C_{i,*}}_p^p -  \frac{\norm{G_\ell}_p^p}{\abs{G_\ell}} \right) X_i^\ell.
$$
It is clear that $\E[Z^\ell] = 0$.  We now compute its variance.
\begin{eqnarray*}
\var[Z^\ell] &=&  \frac{1}{p_\ell^2} \sum_{i \in G_\ell} \left(\left(\norm{C_{i,*}}_p^p -  \frac{\norm{G_\ell}_p^p}{\abs{G_\ell}} \right)^2  \var[X_i^\ell] \right) \nonumber \\
&\le& \frac{1}{p_\ell} \sum_{i \in G_\ell} \left(\norm{C_{i,*}}_p^p -  \frac{\norm{G_\ell}_p^p}{\abs{G_\ell}} \right)^2  \nonumber \\
&\le& \frac{1}{p_\ell} \sum_{i \in G_\ell} \left(3\beta \cdot \frac{\norm{G_\ell}_p^p}{\abs{G_\ell}} \right)^2 \quad \text{(by (\ref{eq:a-3}))} \nonumber \\
&=& \frac{9\beta^2 \cdot (\norm{G_\ell}_p^p)^2}{p_\ell \abs{G_\ell}} \label{eq:b-1} \\
&\le&  \frac{18 \beta^2}{\rho} \cdot \norm{G_{\ell}}_p^p \cdot \norm{C}_p^p.  \quad (\text{by (\ref{eq:a-4})})
\end{eqnarray*}

Define $Z = \sum_{\ell \in [L]} Z^\ell$. We then have $\E[Z] = 0$, and 
\begin{eqnarray*}
\var[Z] &=& \sum_{\ell \in [L]} \var[Z^\ell] \\
&\le& \frac{18\beta^2}{\rho} \cdot \norm{C}_p^p \cdot  \sum_{\ell \in [L]}\norm{G_{\ell}}_p^p \\
&\le& \frac{18 \beta^2}{\rho} (\norm{C}_p^p)^2.
\end{eqnarray*}

By Chebyshev's inequality, we have
\begin{equation*}
\Pr[\abs{Z} \ge \eps \cdot  \norm{C}_p^p] \le \frac{\var[Z]}{(\eps \cdot  \norm{C}_p^p)^2 } = \frac{18 \beta^2}{\rho \eps^2} \le 0.01.
\end{equation*}
We thus have $\abs{\sum_{\ell \in [L]} \sum_{i \in G_\ell} \frac{1}{p_\ell} \norm{C'_{i,*}}_p^p - \norm{C}_p^p} \le \eps \norm{C}_p^p$ with probability $0.99$ (conditioned on (\ref{eq:a-1}) holding, which happens with probability $0.99$ as well).

Finally note that we can always boost the success probability of the algorithm from $0.9$ to $(1 - 1/n^{10})$ using the standard median trick and paying another $O(\log n)$ factor in the communication cost (which will be absorbed by the $\tilde{O}(\cdot)$ notation).

\medskip

\noindent{\bf Complexity.\ \ }
The communication cost of sending the $\ell_p$-sketch in the first round is $O(n/\beta^2 \cdot \log n)$ words.  The cost of sending the sampled rows is bounded by $\sum_{\ell \in [L]} (p_\ell \abs{G_\ell} \cdot n)$.
Thus the total communication cost is bounded by
\begin{eqnarray*}
&&\sum_{\ell \in [L]} \left(p_\ell \abs{G_\ell} \cdot n\right) + \left( \frac{n}{\beta^2} \cdot \log n \right)\\
&=& \tilde{O}(n) \cdot \left(\rho + \frac{1}{\beta^2} \right) \\
&=& \tilde{O}\left({n}/{\eps}\right) \quad (\text{by our choices of $\rho$ and $\beta$}).
\end{eqnarray*}

It is clear that the whole algorithm finishes in $2$ rounds of communication.

\begin{remark}
\label{rem:ell-1}
We comment that for $p = 1$, $\norm{AB}_1$ can actually be computed {\em exactly} using $O(n \log n)$ bits of communication and $1$ round:  Alice simply sends $\norm{A_{*,j}}_1$ for each $j \in [n]$ to Bob, and then Bob computes $\sum_{j \in [n]} \left(\norm{A_{*,j}}_1 \cdot \norm{B_{j,*}}_1\right)$, which is exactly $\norm{AB}_1$.
\end{remark}

\begin{remark}
\label{rem:ell-1-sample}
We can also perform $\ell_1$-sampling on $C = AB$ using $O(n \log n)$ bits of communication and $1$ round.  Alice sends for each $j \in [n]$ the value $\norm{A_{*,j}}_1$ and a random sample from column $A_{*,j}$.  Bob computes for each $j \in [n]$ the value $\norm{A_{*,j}}_1 \cdot \norm{B_{j,*}}_1$ as well as $\sum_{j \in [n]} \left(\norm{A_{*,j}}_1 \cdot \norm{B_{j,*}}_1\right)$, from which he samples a $j \in [n]$ proportional to $\norm{A_{*,j}}_1 \cdot \norm{B_{j,*}}_1$.  Finally, Bob samples a random entry $b \in B_{j,*}$, and if $a \in A_{*,j}$ is the uniform sample in $A_{*,j}$ that Alice sent to Bob, Bob outputs the pair $(a, b)$ as the $\ell_1$-sample.
\end{remark}

\subsection{$\ell_0$-Sampling}
\label{sec:sampling}

We now present a simple algorithm for $\ell_0$-sampling. Recall that the goal of $\ell_0$-sampling on matrix $C = AB$ is to sample each non-zero entry in $C$ with probability $(1 \pm \eps) \frac{1}{\norm{C}_0}$.  

The idea is fairly simple: we employ an $\ell_0$-sketch and $\ell_0$-samplers in parallel. We first use the $\ell_0$-sketch to sample a column of $C$ proportional to its $\ell_0$-norm, and then apply the $\ell_0$-sampler to that column.  For the first step, we use the one-way $\ell_0$-sketching algorithm in Lemma~\ref{lem:ell-p} to approximate the $\ell_0$-norm of each column of $C$ within a factor of $1 + \eps$. For the second step, we use the one-way $\ell_0$-sampling algorithm for vectors in Lemma~\ref{lem:ell-0-sampling} for each column of $C$.

\begin{theorem}
\label{thm:sampling}

There is an algorithm that performs $\ell_0$-sampling on $C$ with success probability $0.9$ using $\tilde{O}(n/\eps^2)$ bits of communication and $1$ round.
\end{theorem}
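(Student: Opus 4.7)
The plan is to realize the two-step idea in the paragraph preceding the theorem, by having Alice send Bob enough sketched information about $A$ so that Bob, who holds $B$, can locally (i) sample a column index $j$ with probability approximately proportional to $\norm{C_{*,j}}_0$, and then (ii) draw a uniform non-zero coordinate inside $C_{*,j}$. Since the only message is from Alice to Bob, this is a one-round protocol.

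For the first step, let $S \in \mathbb{R}^{\tilde O(1/\eps^2) \times n}$ be the $\ell_0$-sketching matrix from Lemma~\ref{lem:ell-p}, instantiated with failure probability $\delta = 1/n^{11}$ so that, by a union bound over the $n$ columns, $g(S C_{*,j})$ is a $(1+\eps)$-approximation of $\norm{C_{*,j}}_0$ for every $j \in [n]$ with probability $1 - 1/n^{10}$. For the second step, let $T \in \mathbb{R}^{\tilde O(1) \times n}$ be the $\ell_0$-sampling matrix from Lemma~\ref{lem:ell-0-sampling}. Exploiting linearity, $SC_{*,j} = (SA) B_{*,j}$ and $TC_{*,j} = (TA) B_{*,j}$, so Alice sends $SA$ and $TA$ to Bob in a single round. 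Bob then computes $(SA)B_{*,j}$ for each $j$, obtains an estimate $\hat c_j$ of $\norm{C_{*,j}}_0$ with $\hat c_j \in [\norm{C_{*,j}}_0, (1+\eps)\norm{C_{*,j}}_0]$ for all $j$, samples a column index $j^*$ with probability $\hat c_{j^*}/\sum_{j} \hat c_j$, and finally applies the $\ell_0$-sampling decoder from Lemma~\ref{lem:ell-0-sampling} to $(TA) B_{*,j^*}$ to get an index $i^*$; he outputs $(i^*, j^*)$.

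For correctness, fix any non-zero entry $(i,j)$ of $C$. Conditioned on the success events of the sketch and sampler,
\[
\Pr[\text{output}=(i,j)] = \frac{\hat c_j}{\sum_{j'} \hat c_{j'}} \cdot \frac{1}{\norm{C_{*,j}}_0}.
\]
Since $\hat c_j \in [\norm{C_{*,j}}_0, (1+\eps)\norm{C_{*,j}}_0]$, the numerator is $(1 \pm \eps)\norm{C_{*,j}}_0$ and the denominator is $(1 \pm \eps)\norm{C}_0$, hence the whole probability equals $(1 \pm O(\eps))/\norm{C}_0$; rescaling $\eps$ by a constant yields the $(1\pm\eps)$ guarantee. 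The failure events (the $\ell_0$-sketch on some column, and the $\ell_0$-sampler on the chosen column) contribute at most $1/n^{10} + 1/n^{10}$, well below the $0.1$ slack allowed by success probability $0.9$.

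For the communication, each entry of $S$ and $T$ can be stored in $\tilde O(1)$ bits, so sending $SA$ uses $\tilde O(n/\eps^2)$ bits and sending $TA$ uses $\tilde O(n)$ bits; all other exchanges are bundled into this single message. Thus the protocol uses $\tilde O(n/\eps^2)$ bits in $1$ round. The main thing to verify carefully is that the approximation error in the column-selection step combines multiplicatively (not additively) with the uniform sampling inside the chosen column, which is what the conditioning above delivers; once that is in place, the rest is standard union-bounding over the $O(n)$ independent sketch invocations.
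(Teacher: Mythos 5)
Your proposal is correct and follows essentially the same route as the paper: Alice sends the $\ell_0$-sketch $SA$ and the $\ell_0$-sampler $TA$ in one message, Bob uses $(SA)B$ to pick a column proportional to its estimated $\ell_0$-norm and then decodes $(TA)B_{*,j^*}$ to sample within it, giving $\tilde O(n/\eps^2)$ bits in one round. You simply spell out the correctness calculation (the multiplicative combination of the column-selection error with the uniform within-column sampling) in more detail than the paper does.
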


\begin{proof}
The size of the $\ell_0$-sampler (i.e., the sketching matrix $S$) in Lemma~\ref{lem:ell-0-sampling} is bounded by $\tilde{O}(n)$, and the size of the $\ell_0$-sketch in Lemma~\ref{lem:ell-p} is bounded by $\tilde{O}(n/\eps^2)$.  Thus the total number of bits of communication is bounded by $\tilde{O}(n/\eps^2) + \tilde{O}(n) = \tilde{O}(n/\eps^2)$.  The algorithm finishes in $1$ round since both the $\ell_0$-sketch and $\ell_0$-sampler can be computed in one round.

The success probability follows from a union bound on the success probabilities of the $\ell_0$-sketch and $\ell_0$-sampler for each of the $n$ columns of $C$.
\end{proof}

\section{$(2+\eps)$-Approximation of $\ell_\infty$}
In this section we give almost tight upper and lower bounds for approximating $\norm{C}_\infty$, that is, the maximum entry in the matrix product $C$.  We first consider the product of binary matrices, and then consider the product of general matrices.

\subsection{Upper Bounds for Binary Matrices}
\label{sec:up-infty}

\subsubsection{An Upper Bound for $2+\eps$ Approximation} \hfill
\label{sec:up-infty-1}
\smallskip

\noindent{\bf The Idea.\ \ }  The high level idea is to scale down each entry of $C$ so that $\norm{C}_1$ is as small as possible subject to the constraint that the largest entry of $C$ is still approximately preserved (after scaling back).  This down-scaling can be done by sampling each $1$-entry of $A$ with a certain probability (we replace the non-sampled $1$'s by $0$'s). Let $A'$ be the matrix of $A$ after applying sampling.  Alice and Bob then communicate for each item $j \in [n]$ the number of rows and columns in $A'$ and $B$ respectively that contain item $j$ (i.e., those rows and columns with $j$-th coordinate equal to $1$), and the one with the smaller number sends all the indices of those rows/columns to the other party.  After this, Alice and Bob can compute matrices $C_1$ and $C_2$ independently such that $C \approx C_1 + C_2$, and then output $\max\{\norm{C_1}_\infty, \norm{C_2}_\infty\}$ as an approximation to $\norm{C}_\infty$.

\medskip

\noindent{\bf Algorithm.\ \ }  
Let $L = \log_{1+\eps} \norm{A}_1 = O(\frac{\log n}{\eps})$.  Set $\gamma = \frac{10^4 \log n}{\eps^2}$.
We present the algorithm in Algorithm~\ref{alg:infty}, and describe it in words below.

\begin{algorithm}[t]
\caption{$(2+\eps)$-Approximation for $\ell_\infty$}
\label{alg:infty}
\SetKwInOut{Input}{Input}
\SetKwInOut{Output}{Output}
\Input{Alice has a matrix $A \in \{0,1\}^{n \times n}$, and Bob has a matrix $B \in \{0,1\}^{n \times n}$. Let $C \leftarrow A B$}
\Output{A $(2+\eps)$-approximation of $\norm{C}_\infty$}
\BlankLine

\ForEach{$\ell \gets 0, 1, \ldots, L$}{
	Alice samples each `1' in $A$ with probability $p_\ell = 1/(1+\eps)^\ell$ (and replaces those non-sampled 1's by 0's), obtaining matrix $A^\ell$\;  
	Let $C^\ell \gets A^\ell B$\;
}

\ForEach{$\ell \gets 0, 1, \ldots, L$}{
	Alice and Bob compute $\norm{C^\ell}_1$ using Remark~\ref{rem:ell-1}\;  
	Let $\ell^*$ be the smallest index $\ell \in \{0, 1, \ldots, L\}$ for which $\norm{C^\ell}_1 \le \gamma n^2$\; \label{line:b-1}
}

\ForEach{$j \in [n]$\label{line:b-2}}{
	Let $u_j \gets \abs{\{i \in [n]\ |\ j \in A^{\ell^*}_i\}}$, and $v_j \gets \abs{\{i \in [n]\ |\ j \in B_i\}}$\;
	\If{$u_j \le v_j$}{
		Alice sends $I_j \gets \{i \ |\ j \in A^{\ell^*}_i\}$ to Bob\;
	}
	\Else{Bob sends $I_j \gets \{i \ |\ j \in B_i\}$ to Alice\;}  \label{line:b-3}
}

Alice and Bob use the $I_j$'s to compute matrices $C_A$ and $C_B$ respectively such that \ $C^{\ell^*} = C_A + C_B$\;

Alice and Bob compute $\norm{C_A}_\infty$ and $\norm{C_B}_\infty$, and output $\max\{\norm{C_A}_\infty / p_{\ell^*}, \norm{C_B}_\infty / p_{\ell^*}\}$.
\end{algorithm}

For $\ell = 0, 1, \ldots, L$, Alice samples each $1$-entry in $A$ with probability $p_\ell = 1/(1+\eps)^\ell$ (i.e., with probability $(1 - p_\ell)$ the $1$-entry is replaced by a $0$-entry). Let $A^\ell$ be the matrix after sampling $A$ with probability $p_\ell$, and let $C^\ell = A^\ell B$.

For each $\ell = 0, 1, \ldots, L$, Alice and Bob compute $\norm{C^\ell}_1$ using Remark~\ref{rem:ell-1}.  Let $\ell^*$ be the smallest index  $\ell \in \{0, 1, \ldots, L\}$ such that $\norm{C^\ell}_1 \le \gamma n^2$.

Let us focus on $A^{\ell^*}$ and $B$, and consider each item $j \in [n]$. For convenience we identify the rows of $A^{\ell^*}$ and columns of $B$ as sets $\{A^{\ell^*}_{1}, \ldots, A^{\ell^*}_{n}\}$ and $\{B_{1}, \ldots, B_{n}\}$ respectively.
Suppose $j$ appears $u_j$ times in Alice's sets, and $v_j$ times in Bob's sets.  Alice and Bob exchange the information of $u_j$ and $v_j$ for all $j \in [n]$.  Then for each $j \in [n]$, if $u_j \le v_j$ then Alice sends all the indices of sets $A^{\ell^*}_i$ containing $j$ to Bob, otherwise Bob sends all the indices of sets $B_i$ containing $j$ to Alice.

At this point, Alice and Bob can form matrices $C_A$ and $C_B$ respectively so that $C_A+C_B = C^{\ell^*}$, where $C_A$ corresponds to the portion of each entry of $C^{\ell^*}$ restricted to the items $j$ for which Alice knows the intersections (in other words, Alice knows the inner product defining the entry $C^{\ell^*}$ restricted to a certain subset of items), and similarly define $C_B$.  Finally Alice and Bob output $\max\{\norm{C_A}_\infty / p_{\ell^*}, \norm{C_B}_\infty / p_{\ell^*}\}$ as the approximation of $\norm{C}_\infty$.

We have the following theorem.
\begin{theorem}
\label{thm:up-infty}
Algorithm~\ref{alg:infty} approximates $\norm{AB}_\infty$ for two Boolean matrices $A, B \in \{0,1\}^{n \times n}$ within a $(2+\eps)$ factor with probability $0.9$ using $\tilde{O}(n^{1.5}/\eps)$ bits of communication and $3$ rounds.
\end{theorem}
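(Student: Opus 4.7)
The plan is to split the analysis into a correctness argument (why the output is a $(2+\eps)$-approximation) and a communication/rounds argument. The algorithm is driven by two structural ideas that I want to make precise: (i) subsampling the $1$-entries of $A$ with probability $p_\ell$ turns each entry $C^{\ell}_{i,j}$ into a Binomial$(C_{i,j}, p_\ell)$ random variable, so the max entry is preserved up to $(1\pm\eps)$ after rescaling whenever the expected count is $\Omega(\log n/\eps^2)$; (ii) once $\|C^{\ell^*}\|_1 \le \gamma n^2$ with $\gamma = \tilde O(1/\eps^2)$, Cauchy--Schwarz turns the ``send the side of the intersection indexed by the smaller set'' step into $O(n^{1.5})$-type communication via the identity $\|C^{\ell^*}\|_1=\sum_j u_jv_j$.

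For correctness, first I would show that the stopping index $\ell^*$ is not too small, namely $p_{\ell^*}\cdot\|C\|_\infty=\Omega(\log n/\eps^2)$. The quantity $\|C^{\ell}\|_1=\sum_{(i,k):A_{i,k}=1}X_{i,k}\,\|B_{k,*}\|_1$ is a sum of independent bounded variables with mean $p_{\ell}\|C\|_1$, so by Chernoff (Lemma~\ref{lem:Chernoff-standard}) and a union bound over the $L=O(\eps^{-1}\log n)$ values of $\ell$, we have $\|C^\ell\|_1=(1\pm 1/2)p_\ell\|C\|_1$ for every $\ell$ with probability $1-n^{-\Omega(1)}$. Combined with the fact that $\|C^{\ell^*-1}\|_1>\gamma n^2$ and $\|C\|_1\le \|C\|_\infty\cdot n^2$, this yields $p_{\ell^*}\|C\|_\infty\ge \gamma/(4(1+\eps))=\Omega(\log n/\eps^2)$, which is exactly the ``concentration budget'' we need.

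Next I would analyze $\|C^{\ell^*}\|_\infty$. Letting $M=\|C\|_\infty$, the entry $(i^*,j^*)$ realizing $M$ satisfies $\mathbf{E}[C^{\ell^*}_{i^*,j^*}]=p_{\ell^*}M$, and Chernoff gives $C^{\ell^*}_{i^*,j^*}\ge(1-\eps)p_{\ell^*}M$ with high probability, so $\|C^{\ell^*}\|_\infty\ge(1-\eps)p_{\ell^*}M$. For the matching upper bound, I would use the additive form of Chernoff/Bernstein: for every entry, writing $\mu_{i,j}=p_{\ell^*}C_{i,j}\le p_{\ell^*}M$,
\[\Pr\bigl[C^{\ell^*}_{i,j}>(1+\eps)p_{\ell^*}M\bigr]\le\Pr\bigl[C^{\ell^*}_{i,j}-\mu_{i,j}>\eps p_{\ell^*}M\bigr]\le e^{-\Omega(\eps^2p_{\ell^*}M)}\le n^{-12},\]
which after a union bound over $n^2$ entries gives $\|C^{\ell^*}\|_\infty\le(1+\eps)p_{\ell^*}M$. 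Finally, since $C^{\ell^*}=C_A+C_B$ entry-wise with non-negative summands, $\max(\|C_A\|_\infty,\|C_B\|_\infty)$ lies in $[\|C^{\ell^*}\|_\infty/2,\ \|C^{\ell^*}\|_\infty]$, so dividing by $p_{\ell^*}$ yields a $(2+O(\eps))$-approximation of $M$; rescaling $\eps$ gives the stated $(2+\eps)$ factor, and a union bound over all failure events keeps the success probability above $0.9$.

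For the communication/rounds bound, computing all $\|C^\ell\|_1$ uses Remark~\ref{rem:ell-1}: Alice sends column sums of $A^\ell$ for each $\ell\in\{0,\dots,L\}$, costing $\tilde O(nL)=\tilde O(n/\eps)$ bits, and Bob replies with $\ell^*$ and the row sums $v_j$ of $B$, costing $O(n\log n)$ bits. This takes two rounds and leaves both parties knowing $u_j,v_j$ for all $j$. In the third round, each party sends its $I_j$'s; the total cost is $\sum_j\min(u_j,v_j)\cdot O(\log n)$. Applying $\min(u_j,v_j)\le\sqrt{u_jv_j}$ and Cauchy--Schwarz,
\[\sum_j\min(u_j,v_j)\ \le\ \sqrt{n\,\textstyle\sum_j u_jv_j}\ =\ \sqrt{n\,\|C^{\ell^*}\|_1}\ \le\ \sqrt{\gamma}\,n^{1.5}\ =\ \tilde O(n^{1.5}/\eps),\]
giving a total of $\tilde O(n^{1.5}/\eps)$ bits in $3$ rounds. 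The main obstacle will be balancing $\gamma$: it must be large enough that $p_{\ell^*}M=\Omega(\log n/\eps^2)$ so that Chernoff concentration with a union bound over all $n^2$ entries fits inside $\eps\cdot p_{\ell^*}M$, yet only polylogarithmic in $n$ so that the Cauchy--Schwarz bound above remains $\tilde O(n^{1.5}/\eps)$; the choice $\gamma=\Theta(\log n/\eps^2)$ is precisely what makes both sides meet.
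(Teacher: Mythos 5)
Your proof is correct, and in its two key steps it takes a genuinely different (and arguably cleaner) route than the paper. For correctness, the paper conditions on three events: $\cE_1$ (that $\norm{C^{\ell^*}}_\infty \ge \gamma/2$, deduced from $\norm{C^{\ell^*-1}}_1 > \gamma n^2$), $\cE_2$ (multiplicative concentration for entries of $C^{\ell^*}$ that are large), and $\cE_3$ (entries of $C^{\ell^*}$ that are small cannot correspond to the maximum of $C$); you instead sandwich $\norm{C^{\ell^*}}_\infty$ directly in $[(1-\eps)p_{\ell^*}M,(1+\eps)p_{\ell^*}M]$ by applying Chernoff to the maximizing entry for the lower bound and a one-sided Bernstein-type tail at the \emph{fixed} threshold $(1+\eps)p_{\ell^*}M$ for every entry for the upper bound -- the latter works uniformly because the tail only depends on the deviation budget $\eps p_{\ell^*}M=\Omega(\log n)$ and not on each entry's mean, which is exactly the case split that $\cE_2/\cE_3$ handle by hand. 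For the communication bound, the paper does a dyadic case analysis on $\min\{u_j,v_j\}$ versus $\sqrt n/\eps$, whereas you get the same $\tilde O(\sqrt{\gamma}\,n^{1.5})$ bound in one line from $\min(u_j,v_j)\le\sqrt{u_jv_j}$, Cauchy--Schwarz, and the identity $\sum_j u_jv_j=\norm{C^{\ell^*}}_1\le\gamma n^2$; this is tighter bookkeeping of the same quantity. Two small points to tidy up: (i) your round schedule has ``each party sends its $I_j$'s'' in round~3, but then neither party holds both $\norm{C_A}_\infty$ and $\norm{C_B}_\infty$; the fix, as in the paper, is to have Bob piggyback his $I_j$'s (for $j$ with $u_j>v_j$) on his round-2 message -- he can, since the round-1 column sums of each $A^\ell$ are exactly the $u_j$'s -- and have Alice append $\norm{C_A}_\infty/p_{\ell^*}$ to her round-3 message; (ii) in the concentration of $\norm{C^\ell}_1$ the summands $X_{i,k}\norm{B_{k,*}}_1$ are bounded by $n$ rather than by $1$, so the $(1\pm\tfrac12)$ multiplicative concentration needs the mean to be $\Omega(n\log n)$, which indeed holds at the relevant levels because $\gamma n^2\gg n\log n$ (the paper elides this as well).
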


\medskip

\noindent{\bf Correctness.\ \ }  
 We first show that the claimed approximation holds.  The following lemma is a key ingredient.

\begin{lemma}
\label{lem:infty}
With probability $1 - 1/n^2$, $\norm{C^{\ell^*}}_\infty/p_{\ell^*}$ approximates $\norm{C}_\infty$ within a factor of $1+\eps$.
\end{lemma}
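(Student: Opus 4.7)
The plan is to split the analysis on whether $\ell^* = 0$ or $\ell^* \ge 1$. In the first case $p_{\ell^*} = 1$, so $A^{\ell^*} = A$ and $C^{\ell^*} = C$, and the conclusion is immediate. I assume henceforth that $\ell^* \ge 1$, so by the definition of $\ell^*$ we have $\norm{C^{\ell^*-1}}_1 > \gamma n^2$. My goal is then twofold: (a) show that $p_{\ell^*} \norm{C}_\infty = \Omega(\log n / \eps^2)$ with high probability, and (b) use this to show that each entry $C^{\ell^*}_{i,j}$ concentrates tightly enough around $p_{\ell^*} C_{i,j}$ that $\norm{C^{\ell^*}}_\infty$ lies in $[(1-\eps/4)\, p_{\ell^*} \norm{C}_\infty,\ (1+\eps/4)\, p_{\ell^*} \norm{C}_\infty]$, which after rescaling by $1/p_{\ell^*}$ is a $(1+\eps)$-approximation to $\norm{C}_\infty$.

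For step (a), write $\norm{C^\ell}_1 = \sum_{(i,k):\,A_{i,k}=1} X_{i,k}^\ell \cdot \norm{B_{k,*}}_1$, where the indicators $X_{i,k}^\ell$ are independent $\mathrm{Bern}(p_\ell)$ random variables. This is a sum of independent nonnegative random variables each bounded by $\max_k \norm{B_{k,*}}_1 \le n$, with mean $p_\ell \norm{C}_1$ and variance at most $n \cdot p_\ell \norm{C}_1$. Bernstein's inequality then yields that whenever $p_\ell \norm{C}_1 \le \gamma n^2/2$, $\Pr[\norm{C^\ell}_1 > \gamma n^2] \le \exp(-\Omega(\gamma n)) = \exp(-\Omega(n \log n / \eps^2))$. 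A union bound over the $L = O(\eps^{-1}\log n)$ levels shows, with probability at least $1 - 1/n^{100}$, the contrapositive uniformly: $\norm{C^\ell}_1 > \gamma n^2$ implies $p_\ell \norm{C}_1 > \gamma n^2 / 2$ for every $\ell$. Applied to $\ell = \ell^* - 1$ together with $p_{\ell^*-1} = (1+\eps) p_{\ell^*}$, this gives $p_{\ell^*} \norm{C}_1 > \gamma n^2 / (2(1+\eps))$; since $\norm{C}_1 \le n^2 \norm{C}_\infty$, I conclude $p_{\ell^*} \norm{C}_\infty > \gamma/(2(1+\eps)) \ge \gamma/4 = \Omega(\log n / \eps^2)$.

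For step (b), each entry satisfies $C^{\ell^*}_{i,j} \sim \mathrm{Binomial}(C_{i,j}, p_{\ell^*})$. Let $M = \norm{C}_\infty$ and let $(i^*, j^*)$ achieve $M$. Since $p_{\ell^*} M = \Omega(\log n/\eps^2)$ by step (a), applying Lemma~\ref{lem:Chernoff-standard} (lower tail) to $C^{\ell^*}_{i^*, j^*}$ with $\delta = \eps/4$ gives $C^{\ell^*}_{i^*, j^*} \ge (1-\eps/4)\, p_{\ell^*} M$ with probability at least $1 - 1/n^{100}$, yielding the needed lower bound on $\norm{C^{\ell^*}}_\infty$. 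For the matching uniform upper bound, I split on whether $p_{\ell^*} C_{i,j} \ge p_{\ell^*} M / (2e)$. If so, a standard multiplicative Chernoff upper tail gives $C^{\ell^*}_{i,j} \le (1+\eps/4)\, p_{\ell^*} C_{i,j} \le (1+\eps/4)\, p_{\ell^*} M$ with failure probability $\exp(-\Omega(\eps^2 \, p_{\ell^*} M)) \le 1/n^{100}$. Otherwise the threshold $T = (1+\eps/4)\, p_{\ell^*} M$ exceeds $2e \cdot p_{\ell^*} C_{i,j}$, so the large-deviation form $\Pr[X \ge T] \le (e\mu/T)^T$ of Chernoff gives $\Pr[C^{\ell^*}_{i,j} \ge T] \le 2^{-T} \le 1/n^{100}$ since $T = \Omega(\log n/\eps^2)$. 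Union bounding over the $n^2$ entries and combining with step (a) yields total failure probability at most $1/n^2$.

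The main technical obstacle is step (a): because the summands of $\norm{C^\ell}_1$ carry weights $\norm{B_{k,*}}_1$ as large as $n$, vanilla multiplicative Chernoff for Bernoulli sums does not apply directly, and I have to invoke Bernstein's inequality for independent bounded variables. The slack that makes this work is the large threshold $\gamma n^2 = \Theta(n^2 \log n / \eps^2) \gg n \log n$, which makes the variance-aware Bernstein tail exponentially small even when only a constant multiplicative factor separates the mean from the threshold. The remainder is routine bookkeeping: combining the two-sided bounds on $C^{\ell^*}_{i^*, j^*}$ and on $\max_{i,j} C^{\ell^*}_{i,j}$ gives $\norm{C^{\ell^*}}_\infty / p_{\ell^*} \in [(1-\eps/4)M,\ (1+\eps/4)M]$, which is a $(1+\eps)$-approximation to $\norm{C}_\infty$.
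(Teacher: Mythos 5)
Your proof is correct, and it reaches the same conclusion by a noticeably different decomposition than the paper's. The paper works with three events: $\cE_1$ lower-bounds $\norm{C^{\ell^*}}_\infty$ itself by averaging, via $\norm{C^{\ell^*}}_\infty \ge \norm{C^{\ell^*}}_1/n^2 \ge \gamma/2$, and then $\cE_2,\cE_3$ case-split on the \emph{observed} values $C^{\ell^*}_{i,j}$ (is $C^{\ell^*}_{i,j}$ above or below $\gamma/8$?), arguing that large observed entries rescale faithfully and small observed entries cannot hide the true maximum. You instead push the stopping-rule information back to the \emph{true} matrix, proving $p_{\ell^*}\norm{C}_\infty = \Omega(\gamma)$, and then case-split on the true values $C_{i,j}$, using a lower-tail bound at the argmax and an upper-tail bound (including the $(e\mu/T)^T$ Poisson-type tail for small means) uniformly over entries. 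Your treatment of the $\ell_1$ stopping condition is actually more careful than the paper's: $\norm{C^\ell}_1$ is a Bernoulli sum weighted by $\norm{B_{k,*}}_1 \le n$, so the paper's bare appeal to ``a Chernoff bound'' for $\cE_1$ really does need Bernstein/Hoeffding for bounded summands, exactly as you observe; the large threshold $\gamma n^2$ supplies the needed slack in both write-ups. Conversely, the paper's observed-value split lets it avoid the explicit large-deviation tail you need for entries with $C_{i,j} < M/(2e)$. Two shared, minor points of informality: (i) both arguments apply concentration to $C^{\ell^*}_{i,j}$ for the random index $\ell^*$ without conditioning; the clean fix in either case is to prove the entrywise bounds for every fixed level $\ell$ and union-bound over the $O(\eps^{-1}\log n)$ levels, which the polynomially small failure probabilities easily absorb; and (ii) your claimed $1/n^{100}$ for the upper-tail event at entries with $C_{i,j}\ge M/(2e)$ is optimistic given $\gamma = 10^4\log n/\eps^2$ (the exponent works out closer to $n^{-9}$), but this still survives the union bound over $n^2$ entries and yields the stated $1-1/n^2$.
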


\begin{proof}
We assume that $\norm{C}_1 > \gamma n^2$ since otherwise there is nothing to prove (in this case we have $p_{\ell^*} = 1$ and $C^{\ell^*} = C$).  

We first define a few events.
\begin{enumerate}
\item[$\cE_1$:] $\norm{C^{\ell^*}}_\infty \ge \frac{1}{2} \gamma$.

\item[$\cE_2$:] For all pairs $(i,j)$, if $C^{\ell^*}_{i,j} \ge \frac{1}{8} \gamma$, then $C^{\ell^*}_{i,j}/p_{\ell^*}$ approximates $C_{i,j}$ within a factor of $1 + \eps$.

\item[$\cE_3$:] For all pairs $(i,j)$, if $C^{\ell^*}_{i,j} < \frac{1}{8} \gamma$, then $C_{i,j} < \frac{1}{4} \gamma / p_{\ell^*}$. 
\end{enumerate}
In words, $\cE_1$ states that the maximum entry of $C^{\ell^*}$ will be large.  $\cE_2$ states that for all large entries $(i,j)$ in $C^{\ell^*}$, the values $C^{\ell^*}_{i,j}$, after rescaling by a factor of $1/p_{\ell^*}$, can be used to approximate $C_{i,j}$ within a factor of $1 + \eps$.  $\cE_3$ states that for all small entries $(i,j)$ in $C^{\ell^*}$, the corresponding values $C_{i,j}$ cannot be the maximum in the matrix $C$.  

It is not difficult to see that if all three events hold then Lemma~\ref{lem:infty} holds.  Indeed, by $\cE_2$ we can approximate each $C_{i,j}$ by $C^{\ell^*}_{i,j} / p_{\ell^*}$ within a factor of $1 + \eps$ as long as $C^{\ell^*}_{i,j} \ge \frac{1}{8} \gamma$, and by $\cE_1$ we have $\norm{C^{\ell^*}}_\infty \ge \frac{1}{2} \gamma$. Therefore
\begin{equation}
\label{eq:c-1}
\norm{C}_\infty \ge \frac{1}{2} \gamma / (p_{\ell^*} (1+\eps)) > \frac{1}{4} \gamma / p_{\ell^*}.
\end{equation}
By $\cE_3$, for all $(i,j)$ with $C^{\ell^*}_{i,j} < \frac{1}{8} \gamma$, we have $C_{i,j} < \frac{1}{4} \gamma / p_{\ell^*}$; by (\ref{eq:c-1}) we know that these entries $(i,j)$ cannot be the maximum in $C$.
We can thus conclude that $\norm{C^{\ell^*}}_\infty$ approximates $\norm{C}_\infty/p_{\ell^*}$ within a factor of $1+\eps$.
\smallskip

In the rest of this section we show that each of $\cE_1, \cE_2, \cE_3$ holds with probability $1 - 1/n^4$.  The success probability in Lemma~\ref{lem:infty} follows by a union bound.
\smallskip

For $\cE_1$,  we only need to show that $\norm{C^{\ell^*}}_1 \ge \frac{1}{2} \gamma n^2 $.  Recall that $\ell^*$ is the smallest index $\ell \in \{0, 1, \ldots, L\}$ such that $\norm{C^\ell}_1 \le  \gamma n^2$.  We thus have $\norm{C^{\ell^*-1}}_1 >  \gamma n^2$.
We can view $C^{\ell^*}$ as sampling each entry of $C^{\ell^*-1}$ with probability $1/(1+\eps)$.  By a Chernoff bound, with probability $1 - 1/n^{10}$ we have $\norm{C^{\ell^*}}_1 \ge \frac{1}{2} \gamma n^2$.  
Consequently, we have $\norm{C^{\ell^*}}_\infty \ge \norm{C^{\ell^*}}_1/n^2 \ge \frac{1}{2} \gamma$.
\smallskip

For $\cE_2$, let us first focus on a particular pair $(i,j)$.  Let $z = C_{i,j}$, and let $k_1, \ldots, k_z \in [n]$ be the indices for which $A^{\ell^*}_{i, k_t} = B_{k_t, j} = 1$ for all $t = 1, \ldots, z$.  For each $t \in [z]$, define the random variable $X_t$ such that $X_t = 1$ if $A^{\ell^*}_{i, k_t}$ is sampled in $A^{\ell^*}$, and $X_t = 0$ otherwise.  Let $X = \sum_{t \in [z]} X_t$. We thus have $X = C^{\ell^*}_{i,j}$, and
\begin{equation} 
\label{eq:c-2}
\textstyle \E[X] = \sum_{t \in [z]} \E[X_t] = p_{\ell^*} \cdot z.
\end{equation}

The claim is $\E[X] \ge \frac{1}{16} \gamma$ with probability $1 - 1/n^{10}$.  Suppose to the contrary that $\E[X] < \frac{1}{16} \gamma$. We can just consider the case that $\E[X] \in [\frac{1}{32} \gamma, \frac{1}{16} \gamma)$ and argue that with probability $1 - 1/n^{10}$ we have $X < \frac{1}{8} \gamma$, which contradicts the assumption of $\cE_2$ that $X = C^{\ell^*}_{i,j} \ge \frac{1}{8} \gamma$.  Note that this is sufficient since if $\E[X] < \frac{1}{32} \gamma$ then the probability that $X < \frac{1}{8} \gamma$ will be even higher. In the case when $\E[X] \in [\frac{1}{32} \gamma, \frac{1}{16} \gamma)$, by a Chernoff bound we have 
$$X \in [(1-\eps) \E[X], (1+\eps) \E[X]] \subseteq \left[\frac{1}{64} \gamma, \frac{1}{8} \gamma \right)$$ 
with probability $1 - 1/n^{10}$. 

Now in the case that $\E[X] \ge \frac{1}{16} \gamma$,
by another Chernoff bound we have $X \in [(1-\eps) \E[X], (1+\eps) \E[X]]$ with probability $1 - 1/n^{10}$; in other words, $X / p_{\ell^*} (= C^{\ell^*}_{i,j} / p_{\ell^*})$  approximates $\E[X] / p_{\ell^*} (= z = C_{i,j})$ within a factor of $1 + \eps$.  
Finally, by a union bound on at most $n^2$ pairs $(i,j)$, the probability that $\cE_2$ holds is at least $1 - 1/n^4$.
\smallskip

For $\cE_3$, we again focus on a particular pair $(i,j)$, and will reuse the notation in the analysis of $\cE_2$. 
The observation is that if $\E[X] \ge \frac{1}{4} \gamma$, then $X \ge (1 - \eps) \E[X] \ge \frac{1}{8} \gamma$  with probability $1 - 1/n^{10}$, contradicting the assumption of $\cE_3$.  We thus have $C_{i,j} = z  = \E[X] / p_{\ell^*} < \frac{1}{4} \gamma / p_{\ell^*}$ with probability $1 - 1/n^{10}$.  Finally by a union bound on at most $n^2$ pairs of $(i,j)$,  the probability that $\cE_3$ holds is at least $1 - 1/n^4$.
\end{proof}

We now wrap up the correctness proof of the theorem. At the end of Algorithm~\ref{alg:infty} Alice and Bob obtain two matrices $C_A$ and $C_B$ such that $C_A + C_B = C^{\ell^*}$. We thus have $\max\{\norm{C_A}_\infty, \norm{C_B}_\infty\} \ge \norm{C^{\ell^*}}_\infty / 2$.  Combining this with Lemma~\ref{lem:infty} we obtain 
$$\frac{\norm{C}_\infty}{2(1+\eps)} \le \max\left\{\frac{\norm{C_A}_\infty}{p_{\ell^*}}, \frac{\norm{C_B}_\infty}{p_{\ell^*}}\right\} \le  (1 + \eps) \norm{C}_\infty.$$

\medskip

\noindent{\bf Complexity.\ \ } 
By Remark~\ref{rem:ell-1}, the step of computing $\norm{C^\ell}_1$ for all $\ell = 0, 1, \ldots, L$ costs $\tilde{O}(L \cdot n) = \tilde{O}(n)$ bits.  The exchanging of $\{u_j, v_j\ |\ j \in [n]\}$ costs $\tilde{O}(n)$ bits.  The last step of computing $\max\{\norm{C_A}_\infty, \norm{C_B}_\infty\}$ costs $\tilde{O}(1)$ bits.

Now we consider the step of exchanging the indices of sets containing $j$ for each $j \in [n]$. We analyze two cases. In the case that $u_j, v_j > \sqrt{n}/\eps$, there will be at most 
$$\frac{\norm{C^{\ell^*}}_1}{u_j \cdot v_j} \le \frac{\gamma n^2}{u_j \cdot v_j}$$ such items $j$. The total communication for such $j$'s is bounded by
\begin{eqnarray*}
\sum_{j: u_j, v_j > \sqrt{n}/\eps} \min\{u_j, v_j\} 
&\le& \sum_{\ell \ge 0} \frac{\gamma n^2}{ n / \eps^2 \cdot 2^{2\ell}} \cdot  \sqrt{n} / \eps \cdot  2^\ell \\
&=& \tilde{O}(\gamma \eps n^{1.5}) = \tilde{O}(n^{1.5}/\eps).
\end{eqnarray*}
In the case that $\min\{u_j, v_j\} \le \sqrt{n}/\eps$, we directly have 
\begin{eqnarray*}
\sum_{j: \min\{u_j, v_j\} \le \sqrt{n}/\eps} \min\{u_j, v_j\} 
&\le& \sum_{j \in [n]} \sqrt{n} / \eps \le n^{1.5} / \eps.
\end{eqnarray*}
Summing up, the total communication cost is bounded by $\tilde{O} (n^{1.5} / \eps)$.

Finally we show that Algorithm~\ref{alg:infty} can be implemented in $3$ rounds.  In Round $1$, for each level $\ell$ Alice sends Bob $\{\norm{A_{*,j}}_1\ |\ j \in [n]\}$ so that Bob can compute $\norm{AB}_1$ according to Remark~\ref{rem:ell-1}, and consequently finds $\ell^*$.  In Round $2$, Bob sends $\ell^*$ to Alice, together with all $I_j$ corresponding to those $j$ with $u_j > v_j$.  In Round $3$, Alice sends Bob all $I_j$ corresponding to those $j$ with $u_j \le v_j$. Alice also forms $C_A$, computes and sends $\norm{C_A}_\infty$ to Bob.  Finally Bob forms $C_B$, and computes $\max\{\norm{C_A}_\infty,  \norm{C_B}_\infty\}$ as the final output. 

\subsubsection{An Upper Bound for General $\kappa$-Approximation} \hfill
\label{sec:up-infty-2} 

\smallskip

\noindent{\bf The Idea and Algorithm.\ \ }
We next consider protocols obtaining a $\kappa$-approximation to $\norm{C}_\infty$ for a general approximation factor $\kappa > 1$.  One way to do this is to exactly follow Algorithm~\ref{alg:infty}. That is, we first scale down the entries of $C$ by sampling the $1$-entries in $A$ to a level for which $\norm{C^\ell}_1 \le \alpha n^2 / \kappa$ where $\kappa$ is the approximation ratio, and $\alpha = \Theta(\log n)$. If we continue to follow Algorithm~\ref{alg:infty}, then we will get an $\tilde{O}(n^{1.5}/\sqrt{\kappa})$ bound.  We now show how to improve the bound to $\tilde{O}(n^{1.5}/\kappa)$.

The main change we make to Algorithm~\ref{alg:infty} is that we add a universe sampling step at the beginning.  More precisely, we sample each column of $A$ with probability $q = \min\{\alpha / \kappa, 1\}$ where $\alpha = 10^4 \log n$, and then replace all non-sampled columns in $A$ with all-$0$ vectors, obtaining a new matrix $A'$.  Let $D = A' B$. Recall that $C = AB$. We compute $\norm{C}_1$ and $\norm{D}_1$.  

With this new universe sampling step it is possible to have $\norm{D}_1 = 0$.  If this happens then we also check $\norm{C}_1$. If $\norm{C}_1 = 0$ then we simply output $0$; otherwise we output $1$.  If $\norm{D}_1 > 0$, then we follow Algorithm~\ref{alg:infty} to do further sampling on $A'$, obtaining $A^1, A^2, \ldots$.  Let $C^\ell = A^\ell B$ for $\ell = 1, 2, \ldots$.  We again stop at the first level $\ell^*$ for which $\norm{{C^{\ell^*}}}_1 \le \alpha n^2/\kappa$, and then exchange for each (surviving) universe item $j$ the indices of sets that contain $j$, in exactly the same way as that in Algorithm~\ref{alg:infty}.

\begin{algorithm}[t]
\caption{$\kappa$-Approximation for $\ell_\infty$}
\label{alg:infty-2}
\SetKwInOut{Input}{Input}
\SetKwInOut{Output}{Output}
\Input{Alice has a matrix $A \in \{0,1\}^{n \times n}$, and Bob has a matrix $B \in \{0,1\}^{n \times n}$. Let $C \leftarrow A B$}
\Output{A $\kappa$-approximation of $\norm{C}_\infty$}
\BlankLine

Set $q = \min\{\alpha / \kappa, 1\}$ where $\alpha = 10^4 \log n$\;

Alice samples each column of $A$ with probability $q$ (and replaces those non-sampled columns by the all-0 vector), obtaining $A'$. Let $D \gets A' B$\;  

Alice and Bob compute $\norm{D}_1$ and $\norm{C}_1$\;

\If{$\norm{D}_1 = 0$}{
	\lIf{$\norm{C}_1 = 0$}{Output $0$}
	\lElse{Output $1$}
}
\Else{
	Follow Algorithm~\ref{alg:infty} and further sample $A'$ with probability $p_\ell = 1/2^\ell$ (instead of $p_\ell = 1/(1+\eps)^\ell$) for $\ell = 0, 1, \ldots, \log_2 \norm{A'}_1$, and with the threshold $\gamma n^2$ at Line~\ref{line:b-1} being replaced by $\alpha/\kappa \cdot n^2$. Finally output $\max\{\norm{C_A}_\infty / (q \cdot p_{\ell^*}), \norm{C_B}_\infty / (q \cdot p_{\ell^*})\}$.
}
\end{algorithm}

The algorithm is presented in Algorithm~\ref{alg:infty-2}. 
We have the following theorem.
\begin{theorem}
\label{thm:up-infty-2}

Algorithm~\ref{alg:infty-2} approximates $\norm{AB}_\infty$ for two Boolean matrices $A, B \in \{0,1\}^{n \times n}$ within a factor of $\kappa$ for any $\kappa \in [4, n]$ with probability $0.9$ using $\tilde{O}(n^{1.5}/\kappa)$ bits of communication and $O(1)$ rounds.
\end{theorem}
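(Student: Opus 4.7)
The plan is to adapt the correctness and communication analyses of Theorem~\ref{thm:up-infty} (Algorithm~\ref{alg:infty}) to the two-stage sampling scheme of Algorithm~\ref{alg:infty-2}. The approximation loss will be absorbed into $\kappa$ (using $\kappa \ge 4$), and the extra universe-sampling stage with probability $q = \alpha/\kappa$ is precisely what drives the improvement from $\tilde O(n^{1.5}/\sqrt{\kappa})$ to $\tilde O(n^{1.5}/\kappa)$ in the index-exchange step.

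For correctness I will split into three cases. First, when $\kappa \le \alpha$ we have $q=1$ and the algorithm reduces to Algorithm~\ref{alg:infty} with a constant $\eps$, already giving a $\kappa$-approximation. Second, when $\norm{D}_1 = 0$: for any entry with $C_{i,j} \ge \kappa$ we have $\E[D_{i,j}] = q\, C_{i,j} \ge \alpha$, so by a Chernoff bound $D_{i,j} \ge 1$ with probability $1 - 1/n^{\Omega(1)}$; a union bound over all $n^2$ entries then gives $\norm{C}_\infty < \kappa$ with high probability, and whenever $\norm{C}_1 > 0$ (equivalently $\norm{C}_\infty \ge 1$) the output $1$ is a $\kappa$-approximation. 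Third, when $\norm{D}_1 > 0$, I will re-derive the analogs of the three events $\cE_1, \cE_2, \cE_3$ from the proof of Lemma~\ref{lem:infty}, but with the combined dilution $q\, p_{\ell^*}$ in place of $p_{\ell^*}$ and with thresholds scaled by $\alpha/\kappa$ instead of $\gamma$. The coarser sampling $p_\ell = 1/2^\ell$ inflates the sampling-based approximation from $(1+\eps)$ to an absolute constant; combined with the factor-$2$ loss from the $C_A+C_B$ split this gives an overall approximation factor of at most $4 \le \kappa$. The key quantity $q\, p_{\ell^*}\, M$ for $M = \norm{C}_\infty$ remains $\Omega(\alpha) = \Omega(\log n)$, which is exactly what the Chernoff steps require.

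For the communication bound I will reuse the dyadic-bucketing argument from the proof of Theorem~\ref{thm:up-infty}, but exploiting two improvements. The new stopping condition gives $\sum_j u_j v_j = \norm{C^{\ell^*}}_1 \le \alpha n^2/\kappa$, and the universe sampling leaves only $Q = \tilde O(n/\kappa)$ nonzero columns in $A'$ (hence in $A^{\ell^*}$), so the set of items $j$ for which we must exchange indices has size $Q$. Splitting at threshold $T = \sqrt{n}$: in the case $\min(u_j,v_j) \le T$ the total cost is $\tilde O(QT) = \tilde O(n^{1.5}/\kappa)$; in the case $\min(u_j,v_j) > T$, grouping items by dyadic intervals $[2^\ell, 2^{\ell+1})$ and using that the count at each level is at most $\alpha n^2/(\kappa\, 2^{2\ell})$, the total telescopes to $\tilde O(\alpha n^2/(\kappa T)) = \tilde O(n^{1.5}/\kappa)$. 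The auxiliary costs (computing $\norm{C}_1, \norm{D}_1$ via Remark~\ref{rem:ell-1}, transmitting the $O(\log \kappa)$ values $\norm{C^\ell}_1$, and exchanging $\{u_j,v_j\}$) are each $\tilde O(n)$. The round count inherits $O(1)$ from Algorithm~\ref{alg:infty} plus one extra round for the universe-sampling decision.

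The main obstacle is the three-event bookkeeping in the regime where $q\, p_{\ell^*}$ can be very small: one must verify simultaneously that (a) the true maximum entry, after both sampling stages, is still concentrated enough around its expectation for Chernoff to give a constant-factor approximation, and (b) no sub-maximal entry of $C$ can accidentally become the apparent maximum of $C^{\ell^*}/(q\, p_{\ell^*})$. Both reduce to Chernoff computations in the regime $\E[X] = \Theta(\alpha)$, which give failure probability $1/n^{\Omega(1)}$ per entry and survive a union bound over all $n^2$ pairs; the arguments are essentially the same as in the proof of Lemma~\ref{lem:infty} but with constants chosen to absorb the coarser geometric schedule $p_\ell = 1/2^\ell$ and the universe-sampling factor $q$.
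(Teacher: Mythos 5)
Your overall architecture---the case split on $q=1$, $\norm{D}_1=0$, and $\norm{D}_1>0$; the two concentration events for the universe-sampled matrix $D$; and the two-threshold communication analysis exploiting both $\norm{C^{\ell^*}}_1 \le \alpha n^2/\kappa$ and the reduced universe size $\tilde{O}(n/\kappa)$---is exactly the paper's, and your communication, round-count, and $\norm{D}_1=0$ arguments are correct. However, there is a genuine error in the correctness accounting. You claim that $q\,p_{\ell^*}\,\norm{C}_\infty$ remains $\Omega(\alpha)$ and hence that the sampling stages plus the $C_A+C_B$ split lose only a constant factor (``at most $4 \le \kappa$''). Both claims are false: the algorithm is not a constant-factor approximation. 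Take $C$ with all $n^2$ entries equal to $\kappa/8$ (achievable with binary $A,B$: let $A$ be all ones on its first $\kappa/8$ columns and $B$ all ones on its first $\kappa/8$ rows). The stopping rule forces $\norm{C^{\ell^*}}_1 \approx \alpha n^2/\kappa$, i.e., $q\,p_{\ell^*} \approx 8\alpha/\kappa^2$, so every entry of $C^{\ell^*}$ has expectation $\alpha/\kappa \ll \alpha$; no two-sided Chernoff bound applies to the maximum entry, and the output $\max_{i,j} C^{\ell^*}_{i,j}/(q\,p_{\ell^*}) \ge \kappa^2/(8\alpha)$ overshoots $\norm{C}_\infty = \kappa/8$ by a factor of $\Omega(\kappa/\log n)$. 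In particular your claim (b)---that no sub-maximal entry of $C$ can become the apparent maximum---fails in this regime.

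The correct accounting, which is what the paper does, charges the loss differently: the universe-sampling stage by itself is only guaranteed to approximate $\norm{C}_\infty$ within a factor of $\kappa/4$, via the analogues $\cE_4$ and $\cE_5$. Entries with $D_{i,j} \ge \alpha/8$ are tracked within a factor of $2$, while entries with $D_{i,j} < \alpha/8$ are merely certified (one-sidedly) to satisfy $C_{i,j} < \tfrac14\alpha/q = \kappa/4$; in the flat regime the apparent maximum may therefore overestimate the true one by up to roughly $\kappa/4$, but never by more, and it can never underestimate it by more than a constant. The subsequent $p_\ell = 1/2^\ell$ sampling and the $C_A + C_B$ split each cost a further factor of $2$, and $\tfrac{\kappa}{4}\cdot 2 \cdot 2 = \kappa$---which is precisely why the theorem assumes $\kappa \ge 4$ and claims only a $\kappa$-approximation. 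Your Chernoff computations are the right ones to run, but they should be applied only to entries whose diluted expectation is $\Omega(\alpha)$, with all remaining entries handled one-sidedly rather than by asserting concentration of the maximum itself.
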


\medskip

\noindent{\bf Correctness.\ \ }
For simplicity we assume that $\alpha/\kappa \le 1$ (and thus $q = \alpha/\kappa$), since otherwise $D = C$ and the arguments will follow those in Algorithm~\ref{alg:infty}.

We define two events, and will show that each holds with probability $1 - 1/n^4$. 
\begin{enumerate}
\item[$\cE_4$:] For all pairs $(i,j)$, if $D_{i,j} \ge \frac{1}{8} \alpha$, then $D_{i,j}/q$ approximates $C_{i,j}$ within a factor of $2$.

\item[$\cE_5$:] For all pairs $(i,j)$, if $D_{i,j} < \frac{1}{8} \alpha$, then $C_{i,j} < \frac{1}{4} \alpha / q$. 
\end{enumerate}

We first assume that $\norm{D}_\infty > 0$. Consider a pair $(i,j)$, if $D_{i,j} < \frac{1}{8} \alpha$, then we know by $\cE_5$ that $C_{i,j} < \frac{1}{4} \alpha / q = \frac{1}{4} \kappa$. Otherwise if $D_{i,j} \ge \frac{1}{8} \alpha$ then by $\cE_4$ we know that $D_{i,j}/q$ approximates $C_{i,j}$ within a factor of $2$.  We thus conclude that $\norm{D}_\infty$ approximates $\norm{C}_\infty$ within a factor of $\kappa/4$ if $\norm{D}_\infty > 0$.  

In the case that $\norm{D}_\infty = 0$, by $\cE_5$ we know that all entries in $C$ are less than $\kappa/4$. Then we can test whether $\norm{C}_1 > 0$. If the answer is yes then we can output $1$, which already approximates $\norm{C}_\infty$ within a factor of $\kappa$; otherwise we know that $C$ is the zero matrix, and we can output $0$.

The proofs that each of $\cE_4$ and $\cE_5$ hold with probability $1 - 1/n^4$ are analogous to those for $\cE_2$ and $\cE_3$ in the proof of Lemma~\ref{lem:infty}.

\medskip

\noindent{\bf Complexity.\ \ }  The analysis of the communication cost is again similar to that of Algorithm~\ref{alg:infty}, and the bottleneck is still the exchange of the indices of sets containing $j$ for each $j \in [n]$.   
We again analyze two cases. Note that after sampling we have $\norm{C^{\ell^*}}_1 = \tilde{O}(n^2/\kappa)$, {\em and} the universe size is $\tilde{O}(n/\kappa)$. 
\begin{itemize}
\item If $\min\{u_j, v_j\} \le \sqrt{n}$, then since the universe size is $\tilde{O}(n/\kappa)$, the total communication is upper bounded by $\tilde{O}(n / \kappa) \cdot \sqrt{n} = \tilde{O}(n^{1.5}/\kappa)$.

\item If $\min\{u_j, v_j\} > \sqrt{n}$, then since $\norm{C^{\ell^*}}_1 = \tilde{O}(n^2/\kappa)$, the total communication is upper bounded by 

$\norm{C^{\ell^*}}_1 / \sqrt{n} = \tilde{O}(n^{1.5}/\kappa)$.
\end{itemize}
Therefore the total communication is bounded by $\tilde{O}(n^{1.5}/\kappa)$. The number of rounds is clearly bounded by $O(1)$.

\subsection{Lower Bounds for Binary Matrices}
\label{sec:lb-infty}

In this section we show that our algorithms for $\ell_\infty$-norm estimation in Section~\ref{sec:up-infty} are almost tight in the sense that (1) $\Omega(n^2)$ bits of communication is needed if we want to go beyond  a $2+\eps$ approximation, and (2) for any approximation $\kappa$ we need to use $\Omega(n^{\frac{3}{2}}/\kappa)$ bits of communication.

\subsubsection{A Lower Bound for $2$-Approximation}
\label{sec:lb-infty-1}

\begin{theorem}
\label{thm:lb-infty}
Any algorithm that approximates $\norm{AB}_\infty$ for two Boolean matrices $A, B \in \{0,1\}^{n \times n}$ within a factor of $2$ with probability $0.51$ needs $\Omega(n^2)$ bits of communication, even if we allow an unbounded number of communication rounds.
\end{theorem}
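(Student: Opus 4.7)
I plan to reduce from the set-disjointness problem $\DISJ$ on $N = \Theta(n^2)$ bits. By Lemma~\ref{lem:disj}, this problem has randomized communication complexity $\Omega(N) = \Omega(n^2)$ even with an unbounded number of rounds. Given a $\DISJ$ instance $(u,v) \in \{0,1\}^N \times \{0,1\}^N$, Alice and Bob will locally (using only their own inputs, plus possibly public randomness) construct Boolean matrices $A, B \in \{0,1\}^{n\times n}$ so that the two cases of $\DISJ$ separate $\|AB\|_\infty$ by a multiplicative factor strictly greater than $4$: there will exist thresholds $V_{\mathrm{NO}} < V_{\mathrm{YES}}$ with $V_{\mathrm{YES}} > 4 V_{\mathrm{NO}}$ such that $\|AB\|_\infty \le V_{\mathrm{NO}}$ in the disjoint case and $\|AB\|_\infty \ge V_{\mathrm{YES}}$ otherwise. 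Because a factor-$2$ approximation returns $\hat V \in [V/2, 2V]$ with probability at least $0.51$, the output intervals $[V_{\mathrm{NO}}/2, 2V_{\mathrm{NO}}]$ and $[V_{\mathrm{YES}}/2, 2V_{\mathrm{YES}}]$ are disjoint, so Alice and Bob can decide $\DISJ$ by a single invocation of the approximation algorithm followed by a threshold test on $\hat V$. Since the reduction itself uses no additional communication, the $\Omega(n^2)$ lower bound for $\DISJ$ transfers directly to factor-$2$ approximation of $\|AB\|_\infty$.

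The crucial step is the embedding. Viewing $u, v$ as Boolean matrices $U, V$ of dimension close to $n \times n$, a first attempt places them into the block form
\[
A = \begin{pmatrix} U & I \\ 0 & 0 \end{pmatrix}, \qquad B = \begin{pmatrix} I & 0 \\ V^T & 0 \end{pmatrix},
\]
so that the top-left block of $AB$ equals $U + V^T$. A $\DISJ$-intersection at coordinate $(i,j)$ forces $(U + V^T)_{ij} = 2$, while disjointness keeps every entry of $U + V^T$ in $\{0, 1\}$. This elementary construction yields only a factor-$2$ gap ($\|AB\|_\infty = 2$ vs $\le 1$), which is insufficient for a factor-$2$ approximation to distinguish. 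To widen the gap beyond $4$, I would amplify the signal by arranging, via block repetition together with a public-coin random hash assigning middle-dimension indices, for a single $\DISJ$-YES coordinate to contribute additively to the same product entry at (say) $5$ different middle indices, producing $\|AB\|_\infty \ge 5$, while restricting $U$ and $V$ to have small row and column weights so that the $\DISJ$-NO case still satisfies $\|AB\|_\infty \le 1$.

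The main obstacle will be controlling the cross-term contributions introduced by the amplification hash. Even when $u_t v_t = 0$ for every $t$, the sum $(AB)_{ij} = \sum_k A_{ik}B_{kj}$ can pick up nonzero terms of the form $u_{t'} v_{t''}$ with $t' \ne t''$ whenever the hash accidentally aligns the middle indices of two different $\DISJ$ coordinates; if left unchecked, these terms inflate $\|AB\|_\infty$ in the NO case and collapse the factor-$4$ gap. I would address this by combining a bounded-weight restriction on $U$ and $V$ (which bounds every individual product $A_{ik} B_{kj}$) with a hash whose expansion properties keep the total cross contribution at any entry below $1$ with high probability, while preserving (via a standard padding argument) the $\Omega(n^2)$ hardness of $\DISJ$ on the restricted instance family. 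Balancing the sparsity parameter, the amplification factor, and the block sizes so that the YES value exceeds $4$ times the NO value, and simultaneously retains $\Omega(n^2)$ bits of $\DISJ$ hardness, is the technical core of the proof.
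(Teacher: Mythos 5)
Your ``first attempt'' is, in fact, the paper's entire proof, and the reason you discard it is a misreading of the approximation guarantee. Section~\ref{sec:result} defines ``$X$ approximates $Y$ within a factor of $\alpha$'' as $X \in [Y/\beta,\, \gamma Y]$ with $\beta\gamma \le \alpha$, i.e.\ a \emph{total} multiplicative spread of $\alpha$, not a two-sided $[Y/2, 2Y]$ guarantee. Under this definition a factor-$2$ algorithm must separate the values $1$ and $2$: on an instance with $\norm{AB}_\infty = 1$ it outputs a value in $[1/\beta, \gamma]$, on an instance with $\norm{AB}_\infty = 2$ a value in $[2/\beta, 2\gamma]$, and $\gamma \le 2/\beta$ keeps these ranges disjoint. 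Hence the block embedding
$A = \left[\begin{smallmatrix} A' & I \\ \mathbf{0} & \mathbf{0}\end{smallmatrix}\right]$,
$B = \left[\begin{smallmatrix} I & \mathbf{0} \\ B' & \mathbf{0}\end{smallmatrix}\right]$,
which gives $AB = \left[\begin{smallmatrix} A'+B' & \mathbf{0} \\ \mathbf{0} & \mathbf{0}\end{smallmatrix}\right]$ and therefore $\norm{AB}_\infty = 2$ if the $\DISJ$ instance on $n^2/4$ bits intersects and $\norm{AB}_\infty = 1$ otherwise, already completes the reduction; combined with Lemma~\ref{lem:disj} this is the whole proof. This is also precisely why the matching upper bound in Theorem~\ref{thm:up-infty} is $2+\eps$ rather than some larger constant.

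Beyond the misreading, your proposed fix---amplifying the YES/NO gap above $4$ while retaining $\Omega(n^2)$ hardness---cannot work even in principle, independent of how you control the cross-terms. Theorem~\ref{thm:up-infty} gives an algorithm whose output lies in $\left[\norm{C}_\infty/(2(1+\eps)),\ (1+\eps)\norm{C}_\infty\right]$ using $\tilde{O}(n^{1.5}/\eps)$ bits of communication, so \emph{any} family of Boolean instances whose two cases are separated by a multiplicative gap exceeding $2(1+\eps)^2$ --- in particular a gap of $4$ or more --- is decidable with $\tilde{O}(n^{1.5})$ communication. No such family can therefore witness an $\Omega(n^2)$ lower bound; the best one can hope for with larger constant gaps is $\tilde{\Omega}(n^{1.5})$, which is exactly Theorem~\ref{thm:lb-infty-2}, proved via a different (sum-of-disjointness) reduction. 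You should delete the amplification machinery, keep your first construction, and invoke the paper's definition of approximation factor to conclude.
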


\begin{proof}
We perform a reduction from the two-player set-disjointness (see Section~\ref{sec:preliminary}) on strings of length $(n/2)^2 = n^2/4$, where Alice has $x$ and Bob has $y$. Alice creates an $n/2 \times n/2$ matrix $A'$ indexed by the coordinates in $x$, that is, the $i$-th $(i = 1, \ldots, n/2)$ row of $A'$ consists of the $((i-1)\frac{n}{2}+1)\text{-th}, \ldots, \frac{i n}{2}$-th coordinates of $x$. Similarly, Bob creates an $n/2 \times n/2$ matrix $B'$ indexed by the coordinates in $y$.  Next, Alice creates an $n \times n$ input matrix 
 \[
   A=
  \left[ {\begin{array}{cc}
   A' & I \\
   \mathbf{0} & \mathbf{0} \\
  \end{array} } \right],
\]
where $I$ is an $n/2 \times n/2$ identity matrix, and $\mathbf{0}$ is an $n/2 \times n/2$ all-$0$ matrix.
 Bob creates an $n \times n$ input matrix
\[
   B=
  \left[ {\begin{array}{cc}
   I & \mathbf{0} \\
   B' & \mathbf{0} \\
  \end{array} } \right].
\]
Note that $A$ and $B$ are both binary matrices, as needed for the reduction to the $\norm{AB}_\infty$ problem. 

The key is to observe that 
\begin{equation}
\label{eq:product}
   A \cdot B=
  \left[ {\begin{array}{cc}
   A' + B' & \mathbf{0} \\
   \mathbf{0} & \mathbf{0} \\
  \end{array} } \right].
\end{equation}
We thus have $\norm{A \cdot B}_\infty = \norm{A' + B'}_\infty$, which is $2$ if $x \cap y \neq \emptyset$, and $1$ otherwise.  The claimed lower bound for approximating $\norm{C}_\infty$ within a factor of $2$ follows from the $\Omega(n^2)$ lower bounds for two-player set-disjointness on strings of length $\Theta(n^2)$ for success probability $0.51$ (Lemma~\ref{lem:disj}).
\end{proof}

\subsubsection{A Lower Bound for General $\kappa$-Approximation}
\label{sec:lb-infty-2}

\begin{theorem}
\label{thm:lb-infty-2}
For any $\kappa \in [1, n]$, any randomized algorithm that approximates $\norm{AB}_\infty$ for two Boolean matrices $A, B \in \{0,1\}^{n \times n}$ within a factor of $\kappa$ with probability $0.52$ needs $\tilde{\Omega}\left(n^{\frac{3}{2}} \left/ \kappa \right.\right)$ bits of communication, even if we allow an unbounded number of communication rounds.
\end{theorem}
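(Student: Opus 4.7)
The plan is to reduce from the Gap-$\ell_\infty$ problem of Lemma~\ref{lem:inf}. By taking Gap-$\ell_\infty$ with parameter $s = \Theta(\sqrt{\kappa})$ on vectors of length $t = \Theta(n^{3/2})$, the lemma gives a communication lower bound of $\Omega(t/s^2) = \Omega(n^{3/2}/\kappa)$, which matches the claim up to logarithmic factors. The remaining work is to show that any distributional algorithm for $\kappa$-approximating $\|AB\|_\infty$ induces a protocol for this Gap-$\ell_\infty$ instance with only an additive $O(\log n)$ overhead, so that Yao's Lemma (Lemma~\ref{lem:Yao}) can then transfer the bound to the randomized model.

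The first step is to specify the embedding. I would index the $t$ Gap-$\ell_\infty$ coordinates as pairs $(r,j) \in [\sqrt{n}] \times [n]$, treating Alice's input $x$ as $\sqrt{n}$ subinstances $x^{(r)} \in \{0,\ldots,s\}^n$ (analogously for Bob's $y$). Alice writes $x^{(r)}$ into a dedicated row-block of $A$ of height $\sqrt{\kappa}$: within block $r$, the $k$-th row has a $1$ in column $j$ iff $k \le x^{(r)}_j$ (a vertical prefix-unary encoding). Bob writes $y^{(r)}$ into a dedicated column-block of $B$ of width $\sqrt{\kappa}$ using the complementary suffix-unary encoding: the $k$-th column of Bob's block $r$ has a $1$ in row $j$ iff $k > y^{(r)}_j$. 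The total rows of $A$ and columns of $B$ consumed is $\sqrt{n}\cdot\sqrt{\kappa} = \sqrt{n\kappa} \le n$, which fits because $\kappa \le n$; all remaining rows/columns are zero.

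The second step is the gap analysis. For an entry of $AB$ whose row lies at sub-row $k_1$ of block $r$ and whose column lies at sub-column $k_2$ of the \emph{same} block $r$, one computes $(AB)_{\cdot,\cdot} = |\{j : x^{(r)}_j \ge k_1,\ y^{(r)}_j < k_2\}|$. Duplicating the construction to cover both $(x-y)_+$ and $(y-x)_+$, one checks that in the YES case (some coordinate $(r^*,j^*)$ with $|x^{(r^*)}_{j^*} - y^{(r^*)}_{j^*}| \ge s$) there is some choice of $(k_1,k_2)$ making this count at least $\Omega(s)$ larger than the worst-case NO value. Scaling the encoding by appropriate replication within each block (a straightforward modification that keeps the total weight $\le n$) amplifies the additive $\Omega(s)$ gap into a \emph{multiplicative} gap of at least $\kappa$, so that any $\kappa$-approximation of $\|AB\|_\infty$ distinguishes the two cases with the same success probability.

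The main obstacle is the \emph{cross-block} entries of $AB$---those whose row lies in block $r$ and column in block $r' \ne r$---since their values depend on unrelated subinstance data $(x^{(r)}, y^{(r')})$ and are not constrained by the Gap-$\ell_\infty$ promise. I would handle this exactly as in the Gap-$\ell_\infty$ lower bound of \cite{BJKST04}: instantiate the hard distribution so that \emph{every} pair $(x^{(r)}_j, y^{(r')}_j)$ is drawn from the same marginal used to make Gap-$\ell_\infty$ hard (with only the planted coordinate deviating in the YES case). Independence across $(r,r',j)$ together with a Chernoff bound (Lemma~\ref{lem:Chernoff-standard}) then implies that with probability $1 - 1/\mathrm{poly}(n)$ every cross-block entry is within a constant of its expectation, which is by construction much smaller than the YES-case planted value. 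A union bound over the $O(n^2)$ entries absorbs into the $\tilde{\Omega}(\cdot)$. Combining this with Yao's Lemma yields the desired $\tilde{\Omega}(n^{3/2}/\kappa)$ bound on randomized communication, holding even for unbounded-round protocols since Lemma~\ref{lem:inf} itself is round-free.
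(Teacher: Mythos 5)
Your overall strategy---reduce from a communication problem whose hard distribution keeps all ``noise'' entries of $AB$ small while a planted event makes one entry large---has the right shape, but the specific reduction from Gap-$\ell_\infty$ breaks in two places, and these are exactly where the paper has to do something different. First, a single planted coordinate $(r^*,j^*)$ with $|x^{(r^*)}_{j^*}-y^{(r^*)}_{j^*}|\ge s$ changes any fixed entry of $AB$ by at most $1$ additively: the entry at threshold pair $(k_1,k_2)$ counts $|\{j: x^{(r)}_j\ge k_1,\ y^{(r)}_j<k_2\}|$, and one coordinate contributes at most one unit to that count. Your claim of an ``$\Omega(s)$ additive gap'' is therefore unjustified, and the subsequent ``replication'' step cannot repair it: the players do not know which coordinate is planted, so any replication scheme scales signal and noise by the same factor and leaves the multiplicative gap unchanged. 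Second, and independently, the entries you do not control---within-block pairs with $k_1\le k_2$, and cross-block pairs---are not small under any hard distribution for Gap-$\ell_\infty$: under the distribution of \cite{BJKST04} the values $x_j,y_j$ are spread over $[0,s]$, so $|\{j: x^{(r)}_j\ge k_1,\ y^{(r')}_j<k_2\}|$ has expectation $\Theta(n)$ for typical thresholds, and concentration only pins these entries at $\Theta(n)$, swamping a planted contribution of $O(1)$ in both the YES and NO cases. Hence $\norm{AB}_\infty$ does not distinguish the two cases at all.

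The paper avoids both problems by changing the primitive: it reduces from $\text{\SUM}(U,V)=\sum_{i}\text{\DISJ}(U_i,V_i)$ on $n$ disjointness instances of length $k=\Theta(n/(\kappa\log n))$, drawn from a $\beta$-biased distribution with $\beta=\sqrt{50\log n/n}$. Sparsity is the key: every noise inner product has expectation at most $\beta^2 n=O(\log n)$, while the planted intersection is replicated $n/k=\Theta(\kappa\log n)$ times across the matrix (each row of $A$ is $U_i$ repeated $n/k$ times), giving a clean factor-$2\kappa$ multiplicative gap after a Chernoff and union bound. The price of sparsity is that the lower bound for the primitive is only $\Omega(\beta k n)$ rather than $\Omega(kn)$, and this is established by an information-complexity argument (Theorem~\ref{thm:sum}) rather than by quoting Lemma~\ref{lem:inf}; with these parameters $\beta k n=\tilde{\Omega}(n^{3/2}/\kappa)$, which is exactly the claimed bound. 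If you want to keep a Gap-$\ell_\infty$ flavor, you would still need a sparse encoding together with a matching information-complexity lower bound for the resulting sparse problem; the off-the-shelf Lemma~\ref{lem:inf} does not provide that.
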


The proof is again by a reduction from a communication problem which is highly structured.  We first introduce a few simple communication problems which will be used as building blocks to construct the final communication problem that we will use for the reduction.

Set $\beta =  \sqrt{50 \log n / n}$, and set $k = 1/(4 \kappa \beta^2)$ where $\kappa$ is the approximation ratio.
\medskip

\noindent{\em The \AND\ Problem.}
In this problem Alice holds a bit $x$ and Bob holds a bit $y$. They want to compute $\text{\AND}(x, y) = x \wedge y$.  

Let $X$ be Alice's input and $Y$ be Bob's input. We define two input distributions for $(X, Y)$.  Let $W$ be a random bit such that $\Pr[W = 0] = \Pr[W = 1] = 1/2$; let $\lambda$ be the distribution of $W$.

\begin{enumerate}
\item[$\nu_1$:] We first choose $W \sim \lambda$. If $W = 0$, we set $(X, Y) = (0, 0)$ with probability $1 - \beta$, and $(X, Y) = (0, 1)$ with probability $\beta$.  If $W = 1$, we set $(X, Y) = (0, 0)$ with probability $1 - \beta$, and $(X, Y) = (1, 0)$ with probability $\beta$.  

\item[$\mu_1$:] Set $(X, Y) = (0, 0)$ with probability $1/2$, and $(X, Y) = (1, 1)$ with probability $1/2$.
\end{enumerate} 

\medskip

\noindent{\em The \DISJ\ Problem.}  
Recall the set-disjointness problem introduced in Section~\ref{sec:preliminary}, where Alice holds $x = (x_1, \ldots, x_k) \in \{0,1\}^k$, and Bob holds $y = (y_1, \ldots, y_k) \in \{0,1\}^k$, and  they want to compute $\text{\DISJ}(x, y) = \vee_{i=1}^k \text{\AND}(x_i, y_i)$.  

Let $X = (X_1, \ldots, X_k)$ be Alice's input, and $Y = (Y_1, \ldots, Y_k)$ be Bob's input.  We again define two input distributions for $(X, Y)$.

\begin{enumerate}
\item[$\nu_k$:] Set $(X_i, Y_i) \sim \nu_1$ for each $i \in [k]$.

\item[$\mu_k$:] We first set $(X_i, Y_i) \sim \nu_k$, and then pick $M$ uniformly at random from $\{1, \ldots, k\}$, and reset $(X_M, Y_M) \sim \mu_1$.
\end{enumerate} 

\medskip

\noindent{\em The \SUM\ Problem. }  In this problem Alice holds $u = (u_1, \ldots, u_n)$ where $u_i \in \{0,1\}^k$ for each $i \in [n]$, and Bob holds $v = (v_1, \ldots, v_n)$ where $v_i \in \{0,1\}^k$ for each $i \in [n]$.  They want to compute $\text{\SUM}(u, v) = \sum_{i=1}^n \text{\DISJ}(u_i, v_i)$.  

Let $U = (U_1, \ldots, U_n)$ be Alice's input, and $V = (V_1, \ldots, V_n)$ be Bob's input. We define the following input distribution for $(U, V)$.

\begin{enumerate}
\item[$\phi$:] We first set $(U_i, V_i) \sim \nu_k$, and then pick a $D$ uniformly at random from $\{1, \ldots, n\}$, and reset $(U_D, V_D) \sim \mu_k$.
\end{enumerate} 

Note that under $(U, V) \sim \phi$,  $\Pr[\text{\SUM}(U, V) = 0] = \Pr[\text{\SUM}(U, V) = 1] = {1}/{2}$. Using the standard information complexity machinery (which we omit here; and can be found in for example \cite{WZ14,HRVZ15}) we can show the following.

\begin{theorem}
\label{thm:sum}
Any deterministic algorithm solving $\text{\SUM}(U, V)$ correctly with probability $0.51$ under $(U,V) \sim \phi$ needs $\Omega(\beta k n)$ bits of communication.
\end{theorem}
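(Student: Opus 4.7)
The plan is to follow the information-complexity direct-sum paradigm of Bar-Yossef, Jayram, Kumar, and Sivakumar, layered twice: first reducing $\text{\SUM}$ on $n$ blocks to $\text{\DISJ}$ on one block, then reducing $\text{\DISJ}$ on $k$ bits to $\text{\AND}$ on one bit. I would define the conditional information cost of a deterministic protocol $\Pi$ as $\mathrm{CIC}_\phi(\Pi) = I(\Pi(U,V); U, V \mid D)$, where $D$ is the random outer coordinate at which $\mu_k$ is planted under $\phi$. Since transcript length upper-bounds information cost, it suffices to show $\mathrm{CIC}_\phi(\Pi) = \Omega(\beta k n)$ for every deterministic $\Pi$ with error at most $0.49$ under $\phi$.

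For the outer direct sum, I would condition on $D$: the pairs $(U_i,V_i)$ are then independent across $i$, with $(U_i,V_i)\sim \mu_k$ when $i=D$ and $(U_i,V_i)\sim \nu_k$ otherwise. The chain rule, combined with the standard embedding argument (the two players fill in the other $n-1$ blocks from $\nu_k$ using public randomness and then run $\Pi$ to read off the answer at the planted block), yields $\mathrm{CIC}_\phi(\Pi) \ge n \cdot \mathrm{CIC}_{\mu_k}(\Pi')$ for some single-block protocol $\Pi'$ that solves $\text{\DISJ}$ on $\mu_k$ inputs with the same error. Nested inside, $\mu_k$ itself plants $\mu_1$ at a uniform coordinate $M$ of a $\nu_1^k$ background, and a second application of chain-rule-plus-embedding (now expanding $\Pi'$ across the $k$ bits) gives $\mathrm{CIC}_{\mu_k}(\Pi') \ge k \cdot \mathrm{CIC}_{\mu_1}(\Pi'')$ for a single-bit $\text{\AND}$ protocol $\Pi''$.

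The heart of the proof is then the single-bit bound $\mathrm{CIC}_{\mu_1}(\Pi'') = \Omega(\beta)$. I would measure information on $\nu_1$ conditioned on the witness $W$: for $W=0$ Alice's bit is deterministically $0$ and Bob's is Bernoulli$(\beta)$, and symmetrically for $W=1$. The cut-and-paste property of deterministic protocols combined with the Hellinger-distance / correctness inequality implies $h^2(\Pi''_{00}, \Pi''_{11}) = \Omega(1)$, where $\Pi''_{xy}$ is the transcript distribution on input $(x,y)$. Pushing this through $\nu_1 \mid W$, where only a $\beta$ fraction of the inputs carries any "signal" into the transcript, converts the constant Hellinger gap into an $\Omega(\beta)$ mutual-information lower bound on the one-sided input. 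Multiplying the three contributions gives $\mathrm{CIC}_\phi(\Pi) = \Omega(\beta k n)$, from which the theorem follows by Yao's lemma.

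The main obstacle is obtaining the \emph{linear-in-$\beta$} single-bit bound rather than the more routine $\Omega(\beta^2)$ one: together with $k = \Theta(1/(\kappa\beta^2))$, the linear rate is exactly what delivers the $\tilde\Omega(n^{3/2}/\kappa)$ communication bound needed by Theorem~\ref{thm:lb-infty-2}, while an $\Omega(\beta^2)$ per-bit bound would fall short by a $\sqrt{n/\log n}$ factor. Achieving $\Omega(\beta)$ crucially requires conditioning on $W$ and exploiting that under $\nu_1 \mid W$ one player's input is deterministic, so the one-sided information cost is naturally scaled by the probability mass $\beta$ of the nonzero event rather than by its square. The rest of the calculation is routine unpacking of the chain-rule and embedding steps, together with the Hellinger-distance toolkit referenced in the statement of the theorem.
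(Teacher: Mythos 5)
Your proposal is correct and is precisely the argument the paper has in mind: the paper omits the proof of Theorem~\ref{thm:sum} entirely, deferring to the ``standard information complexity machinery'' of \cite{WZ14,HRVZ15}, and your two-level direct sum ($\text{\SUM} \to \text{\DISJ} \to \text{\AND}$, with conditioning on the auxiliary variables $D$, $M$, and the $W$'s so that the blocks become conditionally independent and privately sampleable) is exactly that machinery. You also correctly isolate the one non-routine point --- the linear-in-$\beta$ single-coordinate bound, obtained because conditioned on $W$ one player's bit is deterministic and the other is Bernoulli($\beta$), so the constant Hellinger gap from cut-and-paste enters the mutual information scaled by $\beta$ rather than $\beta^2$ --- which is what yields $\Omega(\beta k n)$ and hence the $\tilde{\Omega}(n^{3/2}/\kappa)$ bound of Theorem~\ref{thm:lb-infty-2}.
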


\noindent{\em Input Reduction.} We now perform a reduction from $\text{\SUM}$ to the $\ell_\infty$-norm estimation problem. Given $(U, V) \sim \phi$, we construct matrices $A$ and $B$ as follows. We set $A = [A^1, \ldots, A^{n/k}]$ where $A^1 = \ldots = A^{n/k}$, and for each $A^z\ (z \in [n/k])$ we have $A^z_{i,*} = U_i$ for all $i \in [n]$.  Similarly, we set $B = [B^1, \ldots, B^{n/k}]^T$ where $B^1 = \ldots = B^{n/k}$, and for each $B^z\ (z \in [n/k])$ we have $B^z_{*,i} = V_i$ for all $i \in [n]$.  Let $\psi$ denote the resulting distribution of $(A, B)$.  We have the following lemma.

\begin{lemma}
\label{lem:reduction}
For any $\kappa$, any deterministic algorithm that approximates $\norm{A B}_\infty$ within a factor of $\kappa$ with probability $\delta$ under $(A, B) \sim \psi$ can be used to compute $\text{\SUM}(U, V)$ with probability $(\delta + 0.01)$ under $(U, V) \sim \phi$.
\end{lemma}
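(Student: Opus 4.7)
The plan is to turn the given $\kappa$-approximation algorithm for $\norm{AB}_\infty$ into a $\SUM$-solver via a direct reduction. First, observe that the construction of $(A,B)$ from $(U,V)$ is purely local: Alice builds $A$ from her half $U$ and Bob builds $B$ from his half $V$ with no communication, so if $(U,V) \sim \phi$ then $(A,B) \sim \psi$ exactly.

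Next I would compute $AB$ explicitly. Since $A^1 = \cdots = A^{n/k}$ and $B^1 = \cdots = B^{n/k}$, we have $AB = (n/k)\cdot A^1 B^1$, and hence $(AB)_{i,j} = (n/k)\langle U_i, V_j\rangle$ for all $(i,j)$. Under $\phi$ the support of $\nu_1$ forces $U_i[\ell] V_i[\ell] = 0$ coordinatewise for every index $i \neq D$, so the diagonal of $A^1 B^1$ vanishes everywhere except possibly at $(D,D)$, where $(A^1B^1)_{D,D} = \DISJ(U_D, V_D) = \SUM(U,V) \in \{0,1\}$. Consequently $(AB)_{D,D}$ equals $n/k$ when $\SUM(U,V)=1$ and $0$ when $\SUM(U,V)=0$, while every other diagonal entry is zero.

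The heart of the argument, and the main obstacle, is to control the off-diagonal entries and thereby establish a $\kappa$-factor gap between the two regimes. For $i \neq j$ the inner product $\langle U_i, V_j\rangle$ is a sum of $k$ independent $\mathrm{Bernoulli}(\beta^2/4)$ variables with expectation $k\beta^2/4 = 1/(16\kappa)$. I would apply a Chernoff bound to each such sum and then union-bound over the $O(n^2)$ off-diagonal pairs. The parameter choices $\beta = \sqrt{50\log n/n}$ and $k = 1/(4\kappa\beta^2)$ are calibrated precisely so that, except on an event of probability at most $0.01$, the off-diagonal maximum $\max_{i \neq j}(AB)_{i,j}$ stays below a threshold that is a factor of $\kappa$ smaller than the signal $n/k$ contributed by the $(D,D)$ entry in the $\SUM(U,V)=1$ case.

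Given this gap the reduction concludes as follows: Alice and Bob privately assemble $(A,B)$ from $(U,V)$, invoke the assumed $\kappa$-approximation protocol for $\norm{AB}_\infty$, and declare $\SUM(U,V)=1$ if and only if the returned estimate lies above the intermediate threshold separating the two regimes. By a union bound over the failure of the $\ell_\infty$-algorithm (probability at most $\delta$ under $\psi$) and the bad concentration event identified above (probability at most $0.01$ under $\phi$), the resulting $\SUM$-protocol errs with probability at most $\delta + 0.01$, as claimed.
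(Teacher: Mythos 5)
Your proposal follows the paper's proof of this lemma essentially step for step: build $(A,B)$ locally from $(U,V)$, note that $(AB)_{i,j}=(n/k)\langle U_i,V_j\rangle$, get the lower bound $\norm{AB}_\infty\ge n/k$ from the $(D,D)$ entry when $\text{\SUM}(U,V)=1$, bound the off-diagonal entries by concentration when $\text{\SUM}(U,V)=0$, and threshold the $\kappa$-approximate answer. In one respect you are more careful than the paper: you correctly identify that $\langle A_{i,*},B_{*,j}\rangle$ is $(n/k)$ times a sum of only $k$ independent $\mathrm{Bernoulli}(\beta^2/4)$ variables, whereas the paper applies a Chernoff bound with exponent $e^{-\beta^2 n/3}$ as though all $n$ coordinates of the inner product were independent, which they are not (the $n/k$ blocks are identical copies).

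Precisely because you made the independence structure explicit, the concentration step you assert does not go through as stated. The mean of $\langle U_i,V_j\rangle$ is $k\beta^2/4=1/(16\kappa)<1$, so for the off-diagonal entry $(n/k)\langle U_i,V_j\rangle$ to sit a factor of $\kappa$ (indeed, any factor $\ge 1$) below the signal $n/k$, you would need $\langle U_i,V_j\rangle=0$. But $\Pr[\langle U_i,V_j\rangle\ge 1]=1-(1-\beta^2/4)^k\approx 1/(16\kappa)$ per pair, so the union bound over the $\Theta(n^2)$ off-diagonal pairs is vacuous: roughly $n^2/(16\kappa)$ off-diagonal entries of $AB$ are at least $n/k$ in \emph{both} the $\text{\SUM}=0$ and $\text{\SUM}=1$ cases, and the maximum of $n^2$ such binomials is $\Theta(\log n/\log\log n)$, not $0$. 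So the claimed $\kappa$-gap between the two regimes is not established by "Chernoff plus union bound" with these parameters. To be fair, this is exactly the step the paper's own proof elides (its stated Chernoff exponent is wrong for the repeated-block construction), so your write-up is faithful to, and no weaker than, the published argument — but as a self-contained proof it has a genuine gap at the off-diagonal concentration claim, and closing it would require changing the construction (e.g., preventing cross terms between different \DISJ{} instances) rather than sharpening the tail bound.
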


\begin{proof}
Let $(U, V) \sim \phi$, and let $(A, B)$ be constructed using $(U, V)$ as described in the input reduction above.  Let $C = AB$. We first compute the value of $\norm{C}_\infty$.  

We analyze two cases.  When $\text{\SUM}(U, V) = 0$, we have $\text{\DISJ}(U_i, V_i) = 0$ for all $i \in [n]$.  Consider a pair $(i, j)\ (i, j \in [n], i \neq j)$.  We analyze the inner product $\langle A_{i,*}, B_{*,j} \rangle$.  For each $t \in [k]$, the probability that $A_{i,t} = B_{t,j} = 1$ is at most $\beta^2$. We thus have
$$\E[\langle A_{i,*}, B_{*,j} \rangle] \le \beta^2 n.
$$
By a Chernoff bound we have $\langle A_{i,*}, B_{*,j} \rangle \le 2 \beta^2 n$ with probability $1 - e^{-\beta^2 n/3} \ge 1 - 1/n^{10}$. 
By a union bound on all pairs $(i,j)\ (i \neq j)$,  we have that with probability $1 - 1/n^8$, $C_{i,j} = \langle A_{i,*}, B_{*,j} \rangle \le 2 \beta^2 n$ for all $(i,j)\ (i \neq j)$.  Consequently,
\begin{equation}
\label{eq:d-1}
\norm{C}_\infty \le 2\beta^2 n.
\end{equation}

When $\text{\SUM}(U, V) = 1$, we have $\text{\DISJ}(U_i, V_i) = 0$ for all $i \in [n] \backslash D$, and $\text{\DISJ}(U_D, V_D) = 1$.  We thus have 
\begin{equation}
\label{eq:d-2}
\norm{C}_\infty \ge n/k.
\end{equation}
By our choices of parameters $\beta$ and $k$, we have 
$$(n/k) / (2 \beta^2 n) = 2\kappa > \kappa. $$
The lemma thus follows from (\ref{eq:d-1}) and (\ref{eq:d-2}).
\end{proof}

Theorem~\ref{thm:lb-infty-2} follows from Lemma~\ref{lem:reduction}, Theorem~\ref{thm:sum}, our choices of $\beta$ and $k$, and Yao's minimax lemma.

\subsection{General Matrices}
\label{sec:non-binary}

Finally we observe that the communication complexity for approximating $\norm{AB}_\infty$ for non-binary matrices $A, B$ is significantly different than that
for binary matrices. 

\begin{theorem}
\label{thm:non-binary}

Let $A \in \mathbb{Z}^{n \times n}$ and $B \in \mathbb{Z}^{n \times n}$.  In the two-party communication model we have:
\begin{enumerate}
\item  There is an algorithm that computes $\norm{AB}_\infty$ within a factor $\kappa$ using $\tilde{O}(n^2/\kappa^2)$ bits of communication and one round.

\item Any algorithm that approximates $\norm{AB}_\infty$ within a factor $\kappa$ needs $\tilde{\Omega}(n^2/\kappa^2)$ bits of communication, even if we allow an arbitrary number of communication rounds. 
\end{enumerate}
\end{theorem}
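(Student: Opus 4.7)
The plan is to prove the lower and upper bounds separately; the lower bound is cleaner, so I describe it first.

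For the lower bound (Part 2), I would reduce from $\INF$ on a universe of size $t = \lfloor n/2\rfloor^2$ with gap parameter $2\kappa$. Viewing Alice's and Bob's inputs $x,y\in[0,2\kappa]^t$ as $(n/2)\times(n/2)$ integer matrices $X$ and $Y$, and setting $I = I_{n/2}$, form the $n\times n$ block matrices
\[
A = \begin{pmatrix} X & I \\ \mathbf{0} & \mathbf{0} \end{pmatrix}, \qquad B = \begin{pmatrix} I & \mathbf{0} \\ -Y & \mathbf{0} \end{pmatrix}.
\]
A block multiplication yields $AB = \left(\begin{smallmatrix} X-Y & \mathbf{0} \\ \mathbf{0} & \mathbf{0} \end{smallmatrix}\right)$, so $\norm{AB}_\infty = \norm{X-Y}_\infty$, which is either at most $1$ or at least $2\kappa$ under the $\INF$ promise. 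Any factor-$\kappa$ approximation to $\norm{AB}_\infty$ separates these two cases, so by Lemma~\ref{lem:inf} (applied with universe size $t = \Theta(n^2)$ and gap $\Theta(\kappa)$) together with Yao's lemma (Lemma~\ref{lem:Yao}), any such protocol requires $\tilde\Omega(n^2/\kappa^2)$ bits, even with unboundedly many rounds.

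For the upper bound (Part 1), the plan is to apply a Johnson--Lindenstrauss-style sketch along the inner dimension of the product. Alice and Bob share, through public randomness, a sign matrix $R\in\{-1,+1\}^{n\times k}$ with $k = \tilde\Theta(n/\kappa^2)$. In one round, Alice sends $A R$ (an $n\times k$ matrix with entries of magnitude $\tilde O(\sqrt n)$, costing $\tilde O(n^2/\kappa^2)$ bits). Bob computes $R^T B$ locally and then forms the unbiased estimator $\widetilde C := \frac{1}{k}(A R)(R^T B)$ of $C = AB$. A standard AMS-type second-moment calculation shows
\[
\var[\widetilde C_{i,j}] \;=\; O\!\left(\norm{A_{i,*}}_2^{\,2}\,\norm{B_{*,j}}_2^{\,2}/k\right)
\]
for every pair $(i,j)$, and $O(\log n)$ independent repetitions with an entrywise median-of-means drive the per-entry failure probability below $1/\poly(n)$; a union bound over the $n^2$ pairs then lets Bob output $\norm{\widetilde C}_\infty$ as the final estimate.

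The main obstacle will be turning this per-entry concentration into a multiplicative factor-$\kappa$ approximation of $\norm{C}_\infty$, since the JL additive error $\norm{A_{i,*}}_2 \norm{B_{*,j}}_2/\sqrt{k}$ is not a priori bounded by $\norm{C}_\infty$ for arbitrary integer matrices. I expect the argument to proceed by first obtaining a cheap $\ell_2$-estimate of $\norm{C}_2$ using Lemma~\ref{lem:ell-p} with $p=2$ (costing only $\tilde O(n)$ bits), and then partitioning the analysis into two regimes depending on whether $\norm{C}_\infty$ is large or small relative to this Frobenius-scale threshold, so that in each regime the JL error becomes either negligible (large $\norm{C}_\infty$) or dominated by a cheap direct estimate (small $\norm{C}_\infty$). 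For the lower bound the only remaining care is that the padded $A, B$ remain in $\mathbb{Z}^{n \times n}$ with polynomially bounded entries, which holds since we may take $\kappa \le n$.
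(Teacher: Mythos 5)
Your lower bound is correct and is essentially the paper's own argument: the same block-matrix padding that converts the product $AB$ into the difference $X-Y$ on an $(n/2)\times(n/2)$ block, followed by the $\Omega(t/\kappa^2)$ bound for \INF\ on a universe of size $t=\Theta(n^2)$ and Yao's lemma. (Your explicit use of $-Y$ inside $B$ is a cleaner rendering of what the paper states somewhat loosely.)

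The upper bound, however, has a genuine gap, and the obstacle you flag at the end is fatal to the route you chose rather than a technicality to be patched. Sketching the \emph{inner} dimension with a sign matrix $R$ gives per-entry error on the scale of $\norm{A_{i,*}}_2\norm{B_{*,j}}_2/\sqrt{k}$, which is controlled by norms of $A$ and $B$, not by any norm of $C$. These can be arbitrarily larger than $\norm{C}_\infty$ or $\norm{C}_2$: take every row of $A$ equal to a vector $u$ with $\poly(n)$-size entries and every column of $B$ orthogonal to $u$. Then $C=0$, so a factor-$\kappa$ approximation must output $0$, yet $\frac{1}{k}(AR)(R^TB)$ has entries of magnitude about $\norm{u}_2^2/\sqrt{k}=\poly(n)$ with constant probability, and medians of repetitions do not remove this bias in scale. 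Your proposed fix---estimating $\norm{C}_2$ and splitting into regimes---cannot repair this, because the error scale is simply not a function of $C$. The paper's algorithm avoids the issue by sketching the \emph{output} dimension instead: it builds $S\in\mathbb{R}^{\tilde{O}(n/\kappa^2)\times n}$ which, within each column, applies an AMS $\ell_2$-sketch to each of $n/\kappa^2$ blocks of $\kappa^2$ coordinates; since $(SA)B=S(AB)=SC$, Bob holds a genuine linear sketch of $C$ itself, recovers each block's $\ell_2$-norm up to $(1+\eps)$, and uses $\norm{y}_\infty\le\norm{y}_2\le\kappa\norm{y}_\infty$ for a block $y$ of dimension $\kappa^2$ to get a factor-$\kappa$ estimate of $\norm{C_{*,j}}_\infty$ for every $j$. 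The missing idea in your proposal is precisely this: the compression must commute with the multiplication so that the approximation error is measured relative to $C$, not relative to $A$ and $B$.
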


For the upper bound, we first recall a simple algorithm for sketching $\norm{x}_\infty \ (x \in \mathbb{Z}^n)$.\footnote{This algorithm was described in \cite{SS02}.}  We first partition the vector $x$ into $n/\kappa^2$ blocks each of size $\kappa^2$, and then use the AMS sketching algorithm  \cite{AMS99} for $\ell_2$-norm estimation for each block; the sketch size is $\tilde{O}(1)$ if we target an $O(1)$-approximation and $1 - 1/n^{10}$ success probability.  Since  for each vector $y \in \mathbb{Z}^{\kappa^2}$ we have $\norm{y}_\infty \in \left[\frac{\norm{y}_2}{\kappa}, \norm{y}_2\right]$, we obtain a sketch of size $\tilde{O}(n/\kappa^2)$ for estimating $\norm{x}_\infty$ within a factor of $\kappa$.  Denote this sketching matrix by $S \in \mathbb{R}^{\tilde{O}(n/\kappa^2) \times n}$.  

In the matrix product setting Alice simply applies $S$ to $A$ and sends $SA \in \mathbb{R}^{\tilde{O}(n/\kappa^2) \times n}$ to Bob. Bob then estimates the $\ell_\infty$-norm of each column of $C (= AB)$ using $SA$ and $B$ (and computing $SA \cdot B$), and then outputs $\max_{j \in [n]}\norm{C_{*,j}}_\infty$.  

For the lower bound, we again use the technique in Section~\ref{sec:lb-infty-1} to convert a matrix product to a matrix sum, and then perform a reduction from the $\ell_\infty$-norm estimation problem (see Section~\ref{sec:preliminary}). Given two vectors $x, y \in [0, \kappa]^{n^2/4}$, we construct $A', B'$ and $A, B$ exactly the same way as that in Section~\ref{sec:lb-infty-1}.  We then have $\norm{A \cdot B}_\infty = \norm{A' + B'}_\infty$, which evaluates to $\kappa$ if $\text{\INF}(x, y) = 1$, and evaluates to at most $1$ if $\text{\INF}(x, y) = 0$. The lower bound follows from Lemma~\ref{lem:inf}.

\section{Approximate Heavy Hitters}
\label{sec:hh}

In this section we consider the $\ell_p$-$(\phi, \eps)$-heavy-hitter problem described in the introduction.  We first propose an algorithm for products of general matrices, and then consider the problem for binary matrices.

\subsection{General Matrices}
\label{sec:general-hh}
We first consider $p=1$. General $p \in (0,2]$ can be handled in a similar way.

\medskip

\noindent{\bf The Idea.\ \ }  The idea for computing approximate heavy hitters is similar to our ideas for the $\ell_\infty$-norm, that is, we sample $1$-entries in $A$ to scale down the values of entries in $C$ to a level such that the heavy-hitter entries are still non-zero, while there are not many non-zero entries corresponding to non-heavy-hitter entries.  Let $C'$ denote the matrix $C$ after we scale down. Since there cannot be many heavy hitters, the number of non-zero entries in $C'$ is small. We can thus perform a sparse recovery algorithm on $C'$ to find all the heavy hitters.  

\medskip

\noindent{\bf Algorithm.\ \ }   We present the algorithm in Algorithm~\ref{alg:hh}, and describe it in words below.

\begin{algorithm}[t]
\caption{Computing $\ell_1$-$(\phi, \eps)$-Heavy-Hitters}
\label{alg:hh}

\SetKwInOut{Input}{Input}
\SetKwInOut{Output}{Output}
\Input{Alice has a matrix $A \in \tilde{O}(n/\kappa^2)^{n \times n}$, and Bob has a matrix $B \in \tilde{O}(n/\kappa^2)^{n \times n}$. Let $C \leftarrow A B$}
\Output{$\ell_1$-$(\phi, \eps)$-Heavy-Hitters of $C$}
\BlankLine

Alice and Bob compute $\norm{C}_1$\;   \label{line:d-1}

Set the sampling rate $\beta \gets \min\left\{\frac{10^4 \log n}{\left(\frac{\eps}{\phi}\right)^2 \cdot \frac{\phi}{8} \norm{C}_1}, 1\right\}$\;  \label{line:d-2}

Alice samples each $1$-entry in $A$ with probability $\beta$ (and replaces all the non-sampled 1's by 0's), obtaining matrix $A^\beta$; let $C^\beta \gets A^\beta B$\;  

Alice and Bob then use Lemma~\ref{lem:f0} to recover all the non-zero entries of $C^\beta$; the recovered matrix $C^\beta$ is distributed at Alice's side and Bob's side, denoted by $C_A$ and $C_B$ where $C^\beta = C_A + C_B$\; \label{line:d-4}

Alice creates $C'_A$ consisting of all entries in $C_A$ that are larger than $\frac{\eps \beta}{8} \norm{C}_1$, and sends $C'_A$ to Bob.  Bob outputs all entries in $C' = C'_A + C_B$ that are at least $\beta \cdot (\phi - \frac{\eps}{2}) \norm{C}_1$.  \label{line:d-5}
\end{algorithm}

Alice and Bob first compute $\norm{C}_1$ using Remark~\ref{rem:ell-1}.  
Next, similar to Algorithm~\ref{alg:infty} for approximating $\norm{C}_\infty$, we sample the $1$-entries in matrix $A$.  The sampling is simpler in this case since we only need to sample the entries at the fixed ratio $\beta$.  Let $C^\beta$ be the resulting matrix after sampling.  

Alice and Bob then use Lemma~\ref{lem:f0} to recover all the non-zero entries in $C^\beta$; the entries of the recovered $C^\beta$ are distributed across the two parties, denoted by $C_A$ and $C_B$ where $C^\beta = C_A + C_B$.  Alice then sends all ``heavy'' entries in $C_A$, that is, those whose values are larger than $\frac{\eps \beta}{8} \norm{C}_1$, to Bob.  Bob then outputs all the heavy hitters in $C'$ which is constructed by adding the heavy entries of $C_A$ (received from Alice) to $C_B$.

\begin{theorem}
\label{thm:hh-1}
Algorithm~\ref{alg:hh} computes the $\ell_1$-$(\phi, \eps)$-heavy-hitters ($0 < \eps \le \phi \le 1$) of $A B$, where $A, B \in \mathbb{Z}^{n \times n}$, with probability $0.9$ and using $\tilde{O}(\frac{\sqrt{\phi}}{\eps} n)$ bits of communication and $O(1)$ rounds. 
\end{theorem}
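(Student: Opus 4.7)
The plan is to combine a per-entry Chernoff concentration analysis of $C^\beta = A^\beta B$ with the sparse-recovery primitive of Lemma~\ref{lem:f0}, and to observe that Alice only needs to forward the handful of truly large entries of her share $C_A$.

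First, I would establish concentration of each $C^\beta_{i,j}$ around $\beta C_{i,j}$. Since $A^\beta$ is obtained by sampling each $1$-entry of $A$ independently with probability $\beta$, the entry $C^\beta_{i,j}$ is a sum of $C_{i,j}$ independent $\mathrm{Bernoulli}(\beta)$ random variables with expectation $\beta C_{i,j}$. By the choice of $\beta$ in Line~\ref{line:d-2}, every heavy hitter $C_{i,j}\ge\phi\norm{C}_1$ satisfies $\beta C_{i,j}\ge 8\cdot 10^4 \log n \cdot (\phi/\eps)^2$; the multiplicative Chernoff bound (Lemma~\ref{lem:Chernoff-standard}) with $\delta=\eps/(4\phi)$ then yields $|C^\beta_{i,j}-\beta C_{i,j}|\le \beta\eps\norm{C}_1/4$ with probability $1-1/n^{\Omega(1)}$. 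For entries with $C_{i,j}\le(\phi-\eps)\norm{C}_1$ the multiplicative upper tail (or its $\delta\ge 1$ branch when the mean is tiny, using that $\phi\ge\eps$ keeps the deviation $\ge \Omega(\log n)$) gives $C^\beta_{i,j}<\beta(\phi-\eps/2)\norm{C}_1$ with the same probability. A union bound over the $n^2$ pairs, absorbed into the constant in $\beta$, makes both statements hold globally.

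Second, I would bound the sparsity of $C^\beta$ so that invoking Lemma~\ref{lem:f0} is cheap. A pair $(i,j)$ has $C^\beta_{i,j}\ne 0$ only if at least one of its $C_{i,j}$ contributing $1$'s survives sampling, so $\E[\norm{C^\beta}_0]\le \beta\norm{C}_1=\tilde{O}(\phi/\eps^2)$; Markov gives $\norm{C^\beta}_0=\tilde{O}(\phi/\eps^2)$ with probability at least $0.99$. Plugging this into Lemma~\ref{lem:f0}, Alice and Bob recover the additive split $C^\beta=C_A+C_B$ in two rounds using $\tilde{O}\bigl(n\sqrt{\norm{C^\beta}_0}\bigr)=\tilde{O}(\sqrt{\phi}\,n/\eps)$ bits, which will dominate the total communication.

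Third, I would verify Bob's thresholded output. For a heavy hitter, the first step gives $C^\beta_{i,j}\ge\beta(\phi-\eps/4)\norm{C}_1$; the truncation that produces $C'_A$ drops at most $\eps\beta\norm{C}_1/8$ from $C_A(i,j)$, so $C'_{i,j}\ge\beta(\phi-3\eps/8)\norm{C}_1\ge\beta(\phi-\eps/2)\norm{C}_1$ and $(i,j)$ is reported. Conversely, every $(i,j)$ with $C_{i,j}<(\phi-\eps)\norm{C}_1$ has $C'_{i,j}\le C^\beta_{i,j}<\beta(\phi-\eps/2)\norm{C}_1$ and is suppressed. For the remaining communication: computing $\norm{C}_1$ costs $O(n\log n)$ bits by Remark~\ref{rem:ell-1}, and a Markov bound on $\norm{C^\beta}_1\le 100\beta\norm{C}_1$ caps the number of $C_A$ entries above $\eps\beta\norm{C}_1/8$ at $O(1/\eps)$, so Alice's cleanup message costs $\tilde{O}(1/\eps)$. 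Altogether the protocol uses $\tilde{O}(\sqrt{\phi}\,n/\eps)$ bits and $O(1)$ rounds, as desired. The main obstacle, as I see it, is the bookkeeping that ties together the three sources of $\eps$-slack---the concentration slack ($\eps/4$), the truncation slack from dropping small $C_A$ entries ($\eps/8$), and the output-threshold slack ($\eps/2$)---so that they compose correctly in \emph{both} directions of the heavy-hitter guarantee while simultaneously leaving enough margin in the Chernoff exponent to absorb the $n^2$ union bound.
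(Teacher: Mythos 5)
Your proposal is correct and follows essentially the same route as the paper's proof: the same Chernoff concentration of $C^\beta_{i,j}$ around $\beta C_{i,j}$ with a union bound over all $n^2$ entries, the same sparsity bound $\norm{C^\beta}_0 = \tilde{O}(\phi/\eps^2)$ feeding into Lemma~\ref{lem:f0} to get $\tilde{O}(\sqrt{\phi}\,n/\eps)$ communication, and the same $\eps/4$ (concentration) $+$ $\eps/8$ (truncation) $<$ $\eps/2$ (output threshold) slack accounting in both directions. The only quibble is that your intermediate additive bound $|C^\beta_{i,j}-\beta C_{i,j}|\le\beta\eps\norm{C}_1/4$ should really be the multiplicative bound $\tfrac{\eps}{4\phi}\beta C_{i,j}$ (which can be larger when $C_{i,j}\gg\phi\norm{C}_1$), but the lower bound $C^\beta_{i,j}\ge\beta(\phi-\eps/4)\norm{C}_1$ you actually use downstream follows correctly from it, so the argument is sound.
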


We will assume that $\norm{C}_1 \ge \frac{10^4 \log n}{\left(\frac{\eps}{\phi}\right)^2 \cdot \frac{\phi}{8}} = \frac{8 \cdot 10^4 \phi  \log n }{\eps^2}$, since otherwise $\beta = 1$, and then $C^\beta = C$, in which case the  proof is only simpler. 

\medskip

\noindent{\bf Correctness.\ \ }
We define two events.  
\begin{enumerate}
\item[$\cE_6$:] For all pairs $(i,j)$, if $C_{i,j} \ge \frac{\phi}{8} \norm{C}_1$, then $C^\beta_{i,j}/\beta$ approximates $C_{i,j}$ within a factor of $1+\frac{\eps}{4\phi}$.

\item[$\cE_7$:] For all pairs $(i,j)$, if $C_{i,j} < \frac{\phi}{8} \norm{C}_1$, then $C^\beta_{i,j}/\beta < \frac{\phi}{4} \norm{C}_1$.
\end{enumerate}
The correctness of Theorem~\ref{thm:hh-1} holds if both $\cE_6$ and $\cE_7$ hold. To see this, first consider those pairs $(i,j)$ for which $C_{i,j} < \frac{\phi}{8} \norm{C}_1$.  By $\cE_7$ we have 
\begin{eqnarray*}
C'_{i,j} \le C_{i,j}^\beta
\le \beta \cdot \frac{\phi}{4} \norm{C}_1 
< \beta \cdot (\phi - \frac{\eps}{2}) \norm{C}_1.
\end{eqnarray*}
Thus pair $(i,j)$ will not be output in Step~\ref{line:d-5} of Algorithm~\ref{alg:hh}.  

We next consider those pairs $(i,j)$ with $C_{i,j} \ge \frac{\phi}{8} \norm{C}_1 $.  By $\cE_6$ we have that $C_{i,j}^\beta \in \left[\frac{\beta C_{i,j}}{1+\frac{\eps}{4\phi}}, \beta (1+\frac{\eps}{4\phi}) C_{i,j}\right]$.  Now we consider two cases.
\begin{enumerate}
\item
If $C_{i,j} \ge \phi \norm{C}_1$, then
\begin{eqnarray*}
C'_{i,j} &\ge& C^\beta_{i,j} - \frac{\eps \beta}{8} \norm{C}_1 \\ 
&\ge& \frac{\beta C_{i,j}}{1+\frac{\eps}{4\phi}} - \frac{\eps \beta}{8} \norm{C}_1 \\
&\ge& \frac{\beta \phi \norm{C}_1}{1+\frac{\eps}{4\phi}} - \frac{\eps \beta}{8} \norm{C}_1 \\
&\ge& \beta  \left(\phi - \frac{\eps}{2}\right) \norm{C}_1.
\end{eqnarray*}
Thus pair $(i,j)$ will be outputted.

\item
If $C_{i,j} < (\phi - \eps) \norm{C}_1$, then
\begin{eqnarray*}
C'_{i,j} \le \beta C^\beta_{i,j} 
&\le& \beta \left(1+\frac{\eps}{4\phi}\right) C_{i,j} \\
&<& \beta \left(1+\frac{\eps}{4\phi}\right) (\phi - \eps) \norm{C}_1 \\
&\le& \beta  \left(\phi - \frac{\eps}{2}\right)  \norm{C}_1.
\end{eqnarray*}
Thus pair $(i,j)$ will not be outputted.
\end{enumerate}

In the following we show that both $\cE_6$ and $\cE_7$ hold with probability $1 - 1/n^4$.

For $\cE_6$, for a fixed pair $(i,j)$, by sampling we have 
$$\E[C^\beta_{i,j}] = \beta \cdot C_{i,j} \ge \beta \cdot \frac{\phi}{8} \norm{C}_1. $$ 
By a Chernoff bound we have
\begin{eqnarray*}
\Pr \left[\abs{C^\beta_{i,j} - \E[C^\beta_{i,j}]}\right] &\ge& \frac{\eps}{4\phi} \cdot \E[C^\beta_{i,j}] \\ &\le&  2 \cdot e^{-(\frac{\eps}{4\phi})^2 \beta \frac{\phi}{8} \norm{C}_1 / 3} \\
&\le& 1/n^{10}.
\end{eqnarray*}
By a union bound over the at most $n^2$ $(i,j)$ pairs, we have that with probability $1 - 1/n^4$, $C^\beta_{i,j} / \beta$ approximates $C_{i,j}$ within a factor of $(1+\frac{\eps}{4\phi})$ for all pairs $(i,j)$.

For $\cE_7$, consider a fixed pair $(i,j)$.  If $C_{i,j} < \frac{\phi}{8} \norm{C}_1$, then $\E[C^\beta_{i,j}] < \beta \cdot \frac{\phi}{8} \norm{C}_1$. By a Chernoff bound we have that  $C^\beta_{i,j} \le 2 \beta \cdot \frac{\phi}{8} \norm{C}_1$ with probability $1 - 1/n^{10}$.  Thus the probability that $\cE_7$ holds is at least $1 - 1/n^4$ by a union bound over all $(i,j)$ pairs.

\medskip

\noindent{\bf Complexities.\ \ } 
Step~\ref{line:d-1} can be done using $\tilde{O}(n)$ bits (Remark~\ref{rem:ell-1}).
By a Chernoff bound, it holds with probability $1 - 1/n^{10}$ that
$\norm{C^\beta}_1 \le  2 \beta \norm{C}_1 = O\left(\frac{\phi}{\eps^2} \log n\right)$.
Consequently we have $\norm{C^\beta}_0 \le \norm{C^\beta}_1 = O(\frac{\phi}{\eps^2} \log n)$.  By Lemma~\ref{lem:f0} we have that with probability $1 - 1/n^{10}$ Alice and Bob can recover all non-zero entries of $C^\beta$ in Step~\ref{line:d-4} using $\tilde{O}(\frac{\sqrt{\phi}}{\eps} n)$ bits of communication and $2$ rounds.  
The communication in Step~\ref{line:d-5} is bounded by $\tilde{O}(1/\eps)$. We thus can bound the total communication by $\tilde{O}(\frac{\sqrt{\phi}}{\eps} n)$.  

Finally, it is easy to see that the algorithm terminates in $O(1)$ rounds.

\smallskip

The above analysis can be straightforwardly extended to $\ell_p$-norms for all constants $p \in (0, 2]$ simply by replacing the sampling probability $\beta$ by $\beta^p$ at Line~\ref{line:d-2}, and replacing $\norm{C}_1$ and matrix entries $M_{i,j}$ by $\norm{C}_p^p$ and $\abs{M_{i,j}}^p$ respectively at Lines~\ref{line:d-1}, \ref{line:d-2} and \ref{line:d-5}.  At Line~\ref{line:d-1} one can use Algorithm~\ref{alg:small-p} to estimate $\norm{C}_p^p$ up to a factor of $(1+\frac{\eps}{4\phi})$, which costs $\tilde{O}(\frac{\phi}{\eps} n)$ bits of communication by Theorem~\ref{thm:small-p}, and is a lower order term.

\begin{corollary}
For two matrices $A, B \in \mathbb{Z}^{n \times n}$, there is an algorithm that computes the $\ell_p$-$(\phi, \eps)$-heavy-hitters ($0 < \eps \le \phi \le 1, p \in (0,2]$) of $AB$ with probability $0.9$ using $\tilde{O}(\frac{\sqrt{\phi}}{\eps} n)$ bits of communication and $O(1)$ rounds. 
\end{corollary}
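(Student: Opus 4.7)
The plan is to show that Algorithm~\ref{alg:hh} extends from $p=1$ to any constant $p \in (0,2]$ via the substitutions flagged at the end of Section~\ref{sec:general-hh}. Concretely, Line~\ref{line:d-1} replaces the exact computation of $\norm{C}_1$ with a call to Algorithm~\ref{alg:small-p}, which by Theorem~\ref{thm:small-p} returns a $(1+\eps/(4\phi))$-approximation of $\norm{C}_p^p$ in $\tilde{O}(n/\eps)$ bits and $2$ rounds; Line~\ref{line:d-2} rescales the sampling rate against $\norm{C}_p^p$ and raises it to the $p$-th power; and Line~\ref{line:d-5} replaces the two thresholds by $\tfrac{\eps}{8}\beta^p \norm{C}_p^p$ and $\beta^p(\phi-\tfrac{\eps}{2})\norm{C}_p^p$, now compared against $\abs{C_{i,j}}^p$ instead of $C_{i,j}$. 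The $\ell_p^p$-estimation cost is a lower-order additive term in the final bound.

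The correctness argument mirrors the proof of Theorem~\ref{thm:hh-1}. I would introduce events $\cE_6'$ and $\cE_7'$ stated in terms of $p$-th powers: $\cE_6'$ asserts that for every pair $(i,j)$ with $\abs{C_{i,j}}^p \ge (\phi/8)\norm{C}_p^p$, the rescaled quantity $\abs{C^\beta_{i,j}}^p/\beta^p$ lies within factor $(1+\eps/(4\phi))$ of $\abs{C_{i,j}}^p$; $\cE_7'$ asserts that entries with $\abs{C_{i,j}}^p < (\phi/8)\norm{C}_p^p$ satisfy $\abs{C^\beta_{i,j}}^p/\beta^p < (\phi/4)\norm{C}_p^p$. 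Conditional on $\cE_6' \cap \cE_7'$, the two case analyses used in the $p=1$ proof (the heavy case $\abs{C_{i,j}}^p \ge \phi\norm{C}_p^p$ and the light case $\abs{C_{i,j}}^p < (\phi-\eps)\norm{C}_p^p$) reproduce verbatim, so Line~\ref{line:d-5} outputs a correct $\ell_p$-$(\phi,\eps)$-heavy-hitter set.

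For the probabilistic bounds behind $\cE_6'$ and $\cE_7'$, $C^\beta_{i,j}$ is a sum of independent $0/1$ contributions, so Lemma~\ref{lem:Chernoff-standard} yields multiplicative concentration around its mean at the rate dictated by the sampling probability. Since $p \in (0,2]$, a $(1+\eps')$ error on the raw value $C^\beta_{i,j}$ amplifies to at most a $(1+O(p\eps'))$ error on the $p$-th power $\abs{C^\beta_{i,j}}^p$; combined with the choice of sampling rate, this yields the multiplicative guarantees required for $\cE_6'$ and the one-sided cutoff for $\cE_7'$, each with failure probability at most $1/n^{10}$ per pair. A union bound over at most $n^2$ pairs gives overall failure probability $1/n^{\Omega(1)}$. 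The main delicate point, which I expect to be the only genuine novelty beyond reuse of the $p=1$ analysis, is verifying that the non-linear map $x \mapsto \abs{x}^p$ does not inflate the Chernoff error by more than a constant factor (depending on $p$) in the regime where the expectation is $\tilde{\Omega}(\phi/\eps^2)$.

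For the communication and round analysis, a Chernoff bound on the sum of surviving contributions yields $\norm{C^\beta}_0 \le \tilde{O}(\phi/\eps^2)$, so Lemma~\ref{lem:f0} recovers the non-zero entries of $C^\beta$ in $\tilde{O}(n\sqrt{\phi}/\eps)$ bits and $2$ rounds. Together with the $\tilde{O}(n/\eps)$ cost for $\ell_p^p$-estimation and the $\tilde{O}(1/\eps)$ filtered exchange at Line~\ref{line:d-5}, the total communication is $\tilde{O}(\sqrt{\phi}\,n/\eps)$ bits and the whole protocol runs in $O(1)$ rounds, matching the corollary.
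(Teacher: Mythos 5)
Your proposal follows essentially the same route as the paper, which itself only sketches this extension in the paragraph preceding the corollary: substitute $p$-th powers throughout Algorithm~\ref{alg:hh}, estimate $\norm{C}_p^p$ via Algorithm~\ref{alg:small-p}, and rerun the $\cE_6/\cE_7$ Chernoff analysis on the $p$-th powers. One bookkeeping point: you quote the cost of the $(1+\frac{\eps}{4\phi})$-approximation of $\norm{C}_p^p$ as $\tilde{O}(n/\eps)$ and call it a lower-order term, but $\tilde{O}(n/\eps)$ is \emph{not} dominated by $\tilde{O}(\frac{\sqrt{\phi}}{\eps}n)$ when $\phi<1$; the correct figure from Theorem~\ref{thm:small-p} with approximation parameter $\eps/(4\phi)$ is $\tilde{O}(\frac{\phi}{\eps}n)$, which is indeed dominated since $\phi\le\sqrt{\phi}$. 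With that correction your argument matches the paper's.
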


\subsection{Binary Matrices}
\label{sec:binary-hh}

In this section we show that we can do better for binary matrices by employing the idea we use for $\ell_\infty$-norm estimation.  Again Alice holds $A \in \{0,1\}^{n \times n}$ and Bob holds $B \in \{0,1\}^{n \times n}$, and let $C = AB$.  Due to the similarity of the approach compared with the $\ell_\infty$-norm case (Section~\ref{sec:up-infty}), we do not repeat some of the details.  

We first assume that $\norm{AB}_p^p \ge 100 \phi \log n / \eps^2$, and will consider the other case later. The algorithm is as follows.

{\em Step 1:} Alice and Bob first estimate $L_p = \norm{C}_p$ within a factor of $2$, denoted by $L'_p$. 

{\em Step 2:}  Alice samples each column of $A$ with probability $\beta = \min \left\{\frac{\alpha}{\phi^{1/p} L'_p}, 1\right\}$ for $\alpha = (10^4 \log n)^{1/p}$, obtaining $A'$.  Let $C' = A' B$.  Alice and Bob then exchange the indices of sets containing $j$ for each surviving item $j \in [n]$ as Step~\ref{line:b-2}-\ref{line:b-3} in Algorithm~\ref{alg:infty}, obtaining $C_A$ and $C_B$ for which $C' = C_A + C_B$.   

{\em Step 3:}  Alice and Bob try to verify for each non-zero entry in $C_A$ or $C_B$ whether it is indeed a heavy hitter. Let $S_A, S_B$ consist of all the entries $(i, j)$ in $C_A, C_B$ for which $(C_A)_{i,j}^p \ge \beta^p \phi (L'_p)^p/20$ or $(C_B)_{i,j}^p \ge \beta^p \phi (L'_p)^p/20$, respectively.  Then for each entry $(i,j) \in S_A \cup S_B$,  Alice and Bob try to estimate $C_{i,j}$ within a $(1 + \eps/(2\phi))$ factor by sampling $\tilde{O}(1/(\eps/\phi)^2)$ coordinates of their correponding row and column in $A$ and $B$.

By Chernoff bounds, one has that after sampling we have with probability $(1 - 1/n^{10})$ that (1) the number of sampled columns of $A$ (or, the number of surviving universe items) is bounded by $\tilde{O}(\beta n)$, and (2) $\norm{C'}_1 = \tilde{O}(\beta L_1)$.

The correctness proof is identical to that for the $\ell_\infty$-norm estimation algorithms in Section~\ref{sec:up-infty}.  We next turn to analyzing the communication cost.

The first step costs $\tilde{O}(n)$ bits of communication by Theorem~\ref{thm:small-p}.  For the second step, reusing the notation $u_j, v_j$ for each universe item $j$ in Algorithm~\ref{alg:infty}, we analyze two cases:
\begin{itemize}
\item If $\min\{u_j, v_j\} \le \sqrt{L_1/n}$, then since there are at most $\tilde{O}(\beta n)$ surviving universe items, the total communication is upper bounded by 
$$\tilde{O}(\beta n) \cdot \sqrt{\frac{L_1}{n}} = \tilde{O}\left(\frac{\sqrt{n}}{\phi^{1/p}} \cdot \frac{\sqrt{L_1}}{L_p}\right).$$

\item If $\min\{u_j, v_j\} > \sqrt{L_1/n}$, then since $\norm{C'}_1 = \tilde{O}(\beta L_1)$, the total communication is upper bounded by 

$$\tilde{O}\left(\frac{\beta L_1}{\sqrt{L_1/n}}\right) =\tilde{O}\left(\frac{\sqrt{n}}{\phi^{1/p}} \cdot \frac{\sqrt{L_1}}{L_p}\right).$$
\end{itemize}
It is easy to see that the third step costs $\tilde{O}((\phi/\eps)^2 \cdot 1/\phi) = \tilde{O}(\phi/\eps^2)$ bits of communication since there can be at most $\tilde{O}(1/\phi)$ entries whose $p$-th powers are at least $\beta^p \phi {L'}_p^p / 20$.  Summing up, the total communication is bounded by $\tilde{O}(Z)$ where
\begin{eqnarray*}
\label{eq:z-1}
Z & = &n + \frac{\sqrt{n}}{\phi^{1/p}} \cdot \frac{\sqrt{L_1}}{L_p} + \frac{\phi}{\eps^2} \\
&\le& n + \frac{\phi}{\eps^2} + \frac{n^{\frac{1}{2}}}{\phi^{1/p}} \cdot \frac{\sqrt{L_1}}{L_2 / (n^{\frac{1}{2} - \frac{1}{p}})} \\
&\le& n + \frac{\phi}{\eps^2} + \frac{n^{1 - \frac{1}{p}}}{\phi^{1/p}}  \quad (\sqrt{L_1} \le L_2) \\
&\le& 2\left(n + \frac{\phi}{\eps^2}\right). \quad \left(\frac{\phi}{\eps^2} \ge \frac{1}{\phi}\right)
\end{eqnarray*}

In the case that $\norm{AB}_p^p < 100 \phi \log n / \eps^2$, we can just omit the subsampling in Step $2$ of the algorithm.  A similar analysis gives a communication cost of $\tilde{O}(n + \frac{\sqrt{\phi n}}{\eps} + \frac{1}{\eps}) = \tilde{O}(n + \frac{\phi}{\eps^2})$.

\begin{theorem}
There is an algorithm that computes the $\ell_p$-$(\phi, \eps)$-heavy-hitters ($0 < \eps \le \phi \le 1, p \in (0, 2]$) of $A B$, where $A, B \in \{0,1\}^{n \times n}$, with probability $0.9$ and using $\tilde{O}(n + \frac{\phi}{\eps^2})$ bits of communication and $O(1)$ rounds.
\end{theorem}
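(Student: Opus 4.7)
The plan is to follow the template of the $\ell_\infty$-norm protocol from Section~\ref{sec:up-infty}, but tuned so that after universe sampling only candidate heavy hitters (not all of $C^{\ell^*}$) need to be reconstructed. First, I would use Algorithm~\ref{alg:small-p} (Theorem~\ref{thm:small-p}) to obtain a constant-factor estimate $L_p'$ of $L_p = \norm{AB}_p$ in $\tilde O(n)$ bits. The point of this rough estimate is to pick the column-subsampling rate
\[
\beta = \min\!\left\{\frac{\alpha}{\phi^{1/p} L_p'},\, 1\right\},\qquad \alpha=(10^4\log n)^{1/p},
\]
so that after Alice zeroes out each column of $A$ with probability $1-\beta$, the sampled matrix $C'=A'B$ still preserves each true $\phi$-heavy-hitter entry up to a $(1\pm \eps/(2\phi))$ factor (by Chernoff, since a heavy hitter has value at least $\phi^{1/p}L_p$, its sampled version has expectation $\ge \alpha^p = 10^4 \log n$). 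Simultaneously, $\norm{C'}_1 = \tilde O(\beta L_1)$ and the surviving universe has size $\tilde O(\beta n)$ with high probability.

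Next I would run the set-index exchange phase from Algorithm~\ref{alg:infty} on $A'$ and $B$: for each surviving $j\in[n]$, the party with the smaller count between $u_j,v_j$ sends its index list to the other party. This lets Alice and Bob locally form $C_A$ and $C_B$ with $C_A+C_B=C'$, so every entry of $C'$ is known by at least one party. Each then flags the entries whose $p$-th power exceeds $\beta^p\phi(L_p')^p/20$; call this candidate set $S_A\cup S_B$. By the Chernoff analysis used for the events $\cE_2,\cE_3$ in Lemma~\ref{lem:infty} (adapted to threshold $\phi L_p^p$ rather than $\norm{C}_\infty$), $S_A\cup S_B$ contains all true heavy hitters and at most $\tilde O(1/\phi)$ spurious ones. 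Finally, for each candidate pair $(i,j)$, Alice and Bob refine the estimate of $C_{i,j}$ to a $(1+\eps/(2\phi))$ factor by jointly subsampling $\tilde O((\phi/\eps)^2)$ coordinates of row $A_{i,*}$ and column $B_{*,j}$, and output those with refined value at least $(\phi-\eps/2)L_p^p$.

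The main obstacle is the communication bound for the index-exchange step, which is also where the binary structure and the $\sqrt{L_1}\le L_2$ inequality earn their keep. I would split into the two regimes $\min\{u_j,v_j\}\le \sqrt{L_1/n}$ and $\min\{u_j,v_j\}> \sqrt{L_1/n}$. In the first, since only $\tilde O(\beta n)$ universe items survive, the cost is $\tilde O(\beta n \cdot \sqrt{L_1/n})$. In the second, since $\norm{C'}_1 = \tilde O(\beta L_1)$ and each heavy item contributes at least $\sqrt{L_1/n}$ pairs to $C'$, the cost is $\tilde O(\beta L_1/\sqrt{L_1/n})$. Both simplify to $\tilde O\!\left(\tfrac{\sqrt{n}}{\phi^{1/p}}\cdot \tfrac{\sqrt{L_1}}{L_p}\right)$, and using $L_p\ge L_2\cdot n^{1/p-1/2}$ (monotonicity of $\ell_p$ norms) together with $\sqrt{L_1}\le L_2$ collapses this to $\tilde O(n^{1-1/p}/\phi^{1/p})\le \tilde O(n)$ for every $p\in(0,2]$. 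Combined with $\tilde O(n)$ for the norm estimate and $\tilde O(\phi/\eps^2)$ for the verification step, the total is $\tilde O(n+\phi/\eps^2)$.

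For the corner case $\norm{AB}_p^p < 100\phi\log n/\eps^2$, universe subsampling is skipped ($\beta=1$) and the same index-exchange analysis, now with $L_1$ and $L_p$ bounded by this small quantity, gives $\tilde O(n+\sqrt{\phi n}/\eps)=\tilde O(n+\phi/\eps^2)$. All subroutines (norm estimation by Theorem~\ref{thm:small-p}, index exchange, and the verification samples) can be interleaved in $O(1)$ rounds, and a union bound over the $O(1)$ high-probability events ($\beta$-sampling concentration for $\cE_6$-type events, correctness of $L_p'$, and Chernoff bounds in the verification stage) yields the required $0.9$ success probability. This recovers the claimed $\tilde O(n+\phi/\eps^2)$, $O(1)$-round bound.
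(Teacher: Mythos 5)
Your proposal is correct and follows the paper's own proof essentially step for step: the same three-stage protocol (a constant-factor estimate of $L_p$ via Theorem~\ref{thm:small-p}, column subsampling at rate $\beta=\min\{\alpha/(\phi^{1/p}L_p'),1\}$ followed by the index-exchange phase of Algorithm~\ref{alg:infty}, and verification of the $\tilde{O}(1/\phi)$ candidates by subsampling $\tilde{O}((\phi/\eps)^2)$ coordinates), with the identical two-regime split on $\min\{u_j,v_j\}$ versus $\sqrt{L_1/n}$, the same use of $\sqrt{L_1}\le L_2$ to collapse the bound to $\tilde{O}(n+\phi/\eps^2)$, and the same handling of the corner case $\norm{AB}_p^p<100\phi\log n/\eps^2$. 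No meaningful divergence from the paper's argument.
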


\section{Concluding Remarks}
\label{sec:conclude}
In this paper we studied a set of basic statistical estimation problems of matrix products in the distributed model, including the $\ell_p$-norms, distinct elements, $\ell_0$-sampling and heavy hitters.  These problems have a number of applications in database joins.

We would like to mention again that our algorithms for square matrices can be straightforwardly modified to handle rectangular matrices where $A \in \Sigma^{m \times n}\ (m \ge n)$ and $B \in \Sigma^{n \times m}$.  We briefly list here how our main upper bounds look like on rectangular matrices.  All the algorithms remain the same (we of course have to change some occurrences of $n$ to $m$ in several places).
\begin{itemize}
\item
The communication cost for $(1+\eps)$-approximating $\ell_p$ $(p \in [0,2])$ with $\Sigma = \mathbb{Z}$ remains $\tilde{O}(n/\eps)$. 
\item
The communication cost for $(2+\eps)$-approximating $\ell_\infty$ with $\Sigma = \{0,1\}$ becomes $\tilde{O}(m^{1.5})$, and that for $\kappa$-approximating $\ell_\infty$ with $\Sigma = \{0,1\}$  becomes $\tilde{O}(m^{1.5}/\kappa)$
\item
The communication cost for $\ell_p$-$(\phi, \eps)$-heavy-hitters with $\Sigma = \mathbb{Z}$ remains $\tilde{O}(\frac{\sqrt{\phi}}{\eps} n)$, and that for $\ell_p$-$(\phi, \eps)$-heavy-hitters with $\Sigma = \{0,1\}$ remains $\tilde{O}(n + \frac{\phi}{\eps^2})$.
\end{itemize}

\bibliographystyle{abbrv}
\balance
\bibliography{paper}

\end{document}